\documentclass[authoryear,preprint,10pt]{elsarticle}


\usepackage[USenglish]{babel}

\usepackage{amssymb, dsfont, mathtools}
	\usepackage{eurosym}
	\usepackage{color}
    \usepackage{comment}
	\usepackage{bm}
	\usepackage{setspace}
	\usepackage{booktabs}
	\usepackage{tabularx,multirow}
	\usepackage{longtable}
	\usepackage{arydshln}
	\usepackage{paralist}
	\usepackage{graphicx}
	\usepackage{adjustbox}
	\usepackage{afterpage}
	\usepackage{pdflscape}
	\usepackage{lscape}
	\usepackage{float}
	\usepackage[caption = false]{subfig}
	\usepackage{tikz, pgfplots}
	\usepackage{tikz}
    \usetikzlibrary{shapes.geometric, arrows}
    \usepackage{pgfplots}
	\usepgfplotslibrary{fillbetween}
	\usepackage[inline]{enumitem}
	\usetikzlibrary{arrows}
	\usepackage{lineno, blindtext}
	\usepackage{forest}
	\usepackage{rotating}
	\usepackage{tikz}
    \usepackage{pgfplotstable}
	\usetikzlibrary{shapes,shadows,arrows,positioning,mindmap}
	\pgfdeclarelayer{background}
	\pgfdeclarelayer{foreground}
	\pgfsetlayers{background,main,foreground}

	\usepackage{amsthm,amsmath,eurosym}
	
	\newlist{inlinelist}{enumerate*}{1}
	\setlist*[inlinelist,1]{%
		label=(\roman*),
	}
	\newcommand{\subscript}[2]{$#1 _ #2$}

\usepackage{algorithm}
	\usepackage{algpseudocode}
	\makeatletter	
	\def\algbackskip{\hskip-\ALG@thistlm}	
	\makeatother
	
	\usepackage[acronym]{glossaries}
	\newglossaryentry{DEA}
	{
		name={DEA},
		description={Data Envelopment Analysis},
		first={\glsentrydesc{DEA} (\glsentrytext{DEA})},
		plural={DEA},
		firstplural={\glsentrydesc{DEA} (\glsentryplural{DEA})}
	}
	\newglossaryentry{DMU}
	{
		name={DMU},
		description={Decision Making Unit},
		first={\glsentrydesc{DMU} (\glsentrytext{DMU})},
		plural={DMUs},
		firstplural={\glsentrydesc{DMU}s (\glsentryplural{DMU})}
	}
    \newglossaryentry{HR}
	{
		name={HR},
        description={Hit \& Run},
		first={\glsentrydesc{HR} (\glsentrytext{HR})},
	}
    \newglossaryentry{CRS}
	{
		name={CRS},
        description={Constant Returns to Scale},
		first={\glsentrydesc{CRS} (\glsentrytext{CRS})},
	}
	\newglossaryentry{VRS}
	{
		name={VRS},
        description={Variable Returns to Scale},
		first={\glsentrydesc{VRS} (\glsentrytext{VRS})},
	}
    \newglossaryentry{PCA}
	{
		name={PCA},
        description={Principal Component Analysis},
		first={\glsentrydesc{PCA} (\glsentrytext{PCA})},
	}
    \newglossaryentry{MCAR}
	{
		name={MCAR},
        description={Missing Completely at Random},
		first={\glsentrydesc{MCAR} (\glsentrytext{MCAR})},
	}
    \newglossaryentry{MAR}
	{
		name={MAR},
        description={Missing at Random},
		first={\glsentrydesc{MAR} (\glsentrytext{MAR})},
	}
    \newglossaryentry{MNAR}
	{
		name={MNAR},
        description={Missing Not at Random},
		first={\glsentrydesc{MNAR} (\glsentrytext{MNAR})},
	}
    \newglossaryentry{IKCAR}
	{
		name={IKCAR},
        description={Imperfectly Known Completely at Random},
		first={\glsentrydesc{IKCAR} (\glsentrytext{IKCAR})},
	}
    \newglossaryentry{IKAR}
	{
		name={IKAR},
        description={Imperfectly Known at Random},
		first={\glsentrydesc{IKAR} (\glsentrytext{IKAR})},
	}
    \newglossaryentry{IKNAR}
	{
		name={IKNAR},
        description={Imperfectly Known Not at Random},
		first={\glsentrydesc{IKNAR} (\glsentrytext{IKNAR})},
	}
     \newglossaryentry{SMAA}
	{
		name={SMAA},
        description={Stochastic Multicriteria Acceptability Analysis},
		first={\glsentrydesc{SMAA} (\glsentrytext{SMAA})},
	}
	\newglossaryentry{IKD}
	{
		name={IKD},
        description={imperfect knowledge of data},
		first={\glsentrydesc{IKD} (\glsentrytext{IKD})},
	}

\hoffset=0cm %
\voffset=0cm %
\oddsidemargin=0cm %
\headheight=0.5cm %
\textheight=21cm %
\marginparsep=0cm %
\footskip=1.25cm %
\topmargin=0.5cm %
\headsep=0cm %
\textwidth=17cm %
\marginparwidth=0cm %
\marginparpush=0cm %

\usepackage{amsthm,amsmath,amssymb,eurosym}

\newtheorem{definition}{Definition}
\newtheorem{remark}{Remark}
\newtheorem{proposition}{Proposition}

\newtheorem{example}{Example}

\usepackage{lineno, blindtext}
\usepackage[
breaklinks=true,colorlinks=true,
linkcolor=blue,urlcolor=blue,citecolor=blue]{hyperref}
\pgfplotsset{compat=1.14}

\usepackage[scr=boondoxo,scrscaled=1.05]{mathalfa}

\numberwithin{definition}{section} 
\numberwithin{remark}{section} 
\numberwithin{example}{section} 
\numberwithin{proposition}{section} 
\numberwithin{equation}{section}
\numberwithin{table}{section} 
\numberwithin{figure}{section}

\usepackage{etoolbox}
\makeatletter
\patchcmd{\ps@pprintTitle}{\footnotesize\itshape
Preprint submitted to \ifx\@journal\@empty Elsevier
\else\@journal\fi\hfill\today}{\relax}{}{}
\makeatother

\usepackage{setspace}



\makeatletter
\def\ps@pprintTitle{%
   \let\@oddhead\@empty
   \let\@evenhead\@empty
   \let\@oddfoot\@empty
   \let\@evenfoot\@oddfoot
}
\makeatother

\doublespacing
\begin{document}

\begin{frontmatter}
    \title{Data Envelopment Analysis models with imperfect knowledge of\\ input and output values: An application to Portuguese public hospitals}
    \author[ceris]{{\sc{D.C. Ferreira}}\corref{correspondingauthor}}
	\cortext[correspondingauthor]{Corresponding author. Address: CERIS, Instituto Superior T\'{e}cnico,  Universidade de Lisboa, Lisbon, Portugal, Avenida Rovisco Pais, 1, 1049-001, Lisboa, Portugal, Phone: +351 218 418 394, Fax: +351 218 497 650}
	\ead{diogo.cunha.ferreira@tecnico.ulisboa.pt}
	
	\author[ceg-ist]{{\sc{J.R. Figueira}}}
	\ead{figueira@tecnico.ulisboa.pt}
	
	\author[catania,uk]{{\sc{S. Greco}}}
	\ead{salgreco@unict.it, Salvatore.Greco@port.ac.uk}
	
	\author[ceris]{{\sc{R.C. Marques}}}
	\ead{rui.marques@tecnico.ulisboa.pt}

	\address[ceris]{CERIS, Instituto Superior T\'{e}cnico,  Universidade de Lisboa, Lisbon, Portugal}
	\address[ceg-ist]{CEG-IST, Instituto Superior T\'{e}cnico,  Universidade de Lisboa, Lisbon, Portugal}
	\address[catania]{Department of Economics and Business, University of Catania, Catania, Italy}
	\address[uk]{Portsmouth Business School, Operations \& Systems Management \\ University of Portsmouth,  Portsmouth, United Kingdom}

    \begin{abstract}
    \noindent  Assessing the technical efficiency of a set of observations requires that the associated data composed of inputs and outputs are perfectly known. If this is not the case, then biased estimates will likely be obtained. Data Envelopment Analysis (DEA) is one of the most extensively used mathematical models to estimate efficiency. It constructs a piecewise linear frontier against which all observations are compared. Since the frontier is empirically defined, any deviation resulting from low data quality (imperfect knowledge of data or IKD) may lead to efficiency under/overestimation. In this study, we model IKD and, then, apply the so-called Hit \& Run procedure to randomly generate admissible observations, following some prespecified probability density functions. Sets used to model IKD limit the domain of data associated with each observation. Any point belonging to that domain is a candidate to figure out as the observation for efficiency assessment. Hence, this sampling procedure must run a sizable number of times (infinite, in theory) in such a way that it populates the whole sets. The DEA technique is used during the execution of each iteration to estimate bootstrapped efficiency scores for each observation. We use some scenarios to show that the proposed routine can outperform some of the available alternatives. We also explain how efficiency estimations can be used for statistical inference. An empirical case study based on the Portuguese public hospitals database (2013-2016) was addressed using the proposed method.
    \end{abstract}
    \vspace{0.25cm}
    \begin{keyword}
    Data Envelopment Analysis; Imperfect knowledge of data; Robustness concerns; Stochastic multicriteria acceptability analysis
    \end{keyword}
\end{frontmatter}

\vfill\newpage

\section{Introduction}\label{sec:Introduction}

\noindent Assessing the efficiency of a set of observations, from now on called \glspl{DMU}, is an economic concern of any field. Usually, we are interested in assessing whether companies, either public or private, can reduce the resources wasted (keeping the delivered services or produced goods), raise their production levels (consumed resources held), or both. For instance, several authors have identified considerable inefficiency in public health care provision translating into many resources that could be saved if providers would be efficient \citep{Ferreira2015, FERREIRA2018b}. This problem exacerbates in times of pandemic outbreaks that increase the demand for such services \citep{RAMANATHAN2020518}. 

Unfortunately, measuring each \gls{DMU}'s accurate level of efficiency is impossible unless we know perfectly the function describing the production/consumption profile of each group of homogeneous \glspl{DMU}. Since it is often unavailable in the real world, we can only guess (estimate) the efficiency using the practical information of the group of \glspl{DMU}. Traditionally, it results from using a frontier against which the \glspl{DMU} are compared. The efficiency of a \gls{DMU} is, roughly speaking, the relationship between its input-output observed values and targets \textit{fixed} by the frontier. The problem, then, lies in the construction of the frontier, which can be based on a predefined parametric function or empirical data \cite{Daraio2007}. \gls{DEA} \citep{Charnes79, BANKER1984} is one of the most widely employed models to estimate efficiency, especially in the public sector including health care \citep{Hollings2008, Cordero2013, Ferreira2017b}. \gls{DEA} is a non-parametric technique as it is only empirically-based. \gls{DEA}'s fundamental strength is that it works based on data only, without making any assumption on the relationships mentioned above. The estimated frontier is composed of the efficient \glspl{DMU} and all possible linear convex combinations of them \citep{Cordero2016}. Thus, the efficient frontier is a continuum of input-output vectors. Targets for inefficient \glspl{DMU} result from the linear combination of observations associated with the efficient ones.

Nonetheless, empirical research in social sciences is often plagued by the \gls{IKD}, resulting from: uncertainty, imprecision, ill-determinations, arbitrariness, and missing data \citep{Hayek1945, French1995, RoyEtAl14}. \gls{IKD} can be either epistemic or aleatory, resulting from limited knowledge or randomness/variability in data, respectively. Meanwhile, these imperfect knowledge sources can be due to human errors, experimental failures, and disclosure restrictions, to name just a few. Models used to estimate efficiency, including \gls{DEA}, are typically sensitive to data quality \citep{Witte2010}. Hence, \gls{IKD} is a serious problem to get reliable or robust efficiency estimates for the \glspl{DMU}. A \gls{DMU} is \textit{robust efficient} whenever it remains efficient for all the input-output observations, regardless of the data quality. It is \textit{sufficiently robust efficient} when it is not efficient for all the input-output observations but it can be part of the frontier in a given number of times (threshold). 

Several approaches have been proposed in the literature to account for and dealing with the problem of \gls{IKD}. The way of modeling such imperfect knowledge includes, for instance, the omission of data, their substitution, and the use of feasible possible sets of values. The approaches comprise \gls{DEA} standard techniques, stochastic programming, fuzzy-possibilistic programming, interval programming, and robust optimization \citep{DESPOTIS200224,Kao2007,Kuosmanen2009_blank,EHRGOTT2018231}. Although the alternatives mentioned above have their merits, they also exhibit some disadvantages and caveats. On the one hand, some alternatives are difficult to implement, requiring a considerable computational effort (since they are of exact algorithmic nature). On the other hand, others sometimes impose objectionable substitutions of imperfectly known data or disregard the modeling of \gls{IKD}, allowing neither the analysis of efficiency distributions nor statistical inference. For instance, some do not allow us to classify a \gls{DMU} as robust or sufficiently robust efficient, or as  perfectly robust efficient, potentially robust efficient and robust inefficient units as in \cite{WEI201728}. 

This paper proposes an alternative based on the so-called \gls{HR} procedure \citep{Smith1984,Belisle1993,Kaufman1998}, which avoids the previous shortcomings. This alternative lies in defining of a set (and its boundary) in the space containing all admissible points that can model the \gls{IKD} associated with a given \gls{DMU}. By doing this for all \glspl{DMU}, we can achieve a large number of efficiency estimates per \gls{DMU}, allowing us to conduct statistical inference with them. The \gls{HR} procedure is commonly associated with the \gls{SMAA}, which, in turn, is associated with multiple criteria decision analysis rather than efficiency assessment; see \cite{YANG2012483}, for instance. \cite{LAHDELMA2006241}, and \cite{KADZINSKI20171} combined the \gls{SMAA} model with \gls{DEA}, exploring the space of multipliers optimized by the primal \gls{DEA}. This integration allows describing \glspl{DMU} in terms of rank acceptability indices, central weights,
and confidence factors, which can be useful because the \gls{DEA} technique does not discern efficient \glspl{DMU} among them. As quoted by \cite{LAHDELMA2006241}, ``\textit{these so-called non-parametric
methods }[\gls{DEA} and \gls{SMAA}] \textit{explore the weight space in order to identify weights favorable for each alternative} [\gls{DMU}].'' 


In this study:
\begin{enumerate}
    \item We propose an alternative that lies in defining a set (and its boundary) in the space containing all admissible points that can model the \gls{IKD} of a given \gls{DMU}. We can apply our model regardless of the epistemic or aleatory nature of the imperfect knowledge -- all rely on the appropriate definition of the set. We can achieve a large number of efficiency estimates per \gls{DMU}, after running our algorithm a sufficient number of times, and making a statistical inference with each iteration results. Differently from \cite{LAHDELMA2006241}, \cite{YANG2012483}, and \cite{KADZINSKI20171}, who applied \gls{SMAA} to the multipliers optimized by \gls{DEA}, we introduce stochastic nature directly on input-output data through the definition of sets representing the imperfect knowledge of such data.
    
    \item We propose to bound imperfectly known data into convex sets (although non-convexity is allowed for those sets, they do not seem natural choices and, besides, they require the adoption of rejection strategies of some generated points). Creating data bounds is usually preferable to the analyst/expert because they are intuitive in many applications. For instance, let us suppose that a given hospital report sets the number of treated inpatients as twelve thousand. It is hardly the truly measured number of inpatients seen in that hospital ward and, as it is well known, \gls{DEA} is quite sensitive to data quality \citep{CABRERA2018155}. It means that, because of rounding, the actual number of inpatients lies within 11,500 and 12,499. These values might, then, be set as bounds for the imperfectly known datum. Modeling \gls{IKD} through different set shapes, including hyper-boxes and hyper-ellipsoids, is also an exciting and useful exercise. We provide the parametric equations that rule the \gls{IKD} modeling, considering that the extremes of those shapes are the data sets built with the experts' help.
    
    \item We compare our proposed approach with some alternatives that, due to their simplicity, are traditionally employed to replace \gls{IKD} (including missing data). One is the interesting interval \gls{DEA} \citep{DESPOTIS200224,SMIRLIS20061}. It will be easy to conclude that it is a particular case of our approach. 
    
    \item We also explain how to perform statistical inference using the estimates of efficiency.
    
    \item Finally, we employ our proposed approach to study public hospitals' performance considering a set of inputs, and desirable and undesirable outputs. The case study was based on the Portuguese public hospitals using data from 2013 to 2016, considering production-related inputs-outputs and some outputs related to quality and access. Disregarding this kind of variable when assessing hospital performance is likely to result in a financial analysis not accounting for the social functions of the public hospitals \citep{FERREIRA2018}.
\end{enumerate}

The remainder of this paper is structured as follows. Section \ref{sec:Data Envelopment Analysis} presents some notation and briefly describes a \gls{DEA} model based on the distance from the observations to the empirical frontier. Section \ref{sec:DEA_Review} makes a review of the different ways used to deal with imperfect knowledge in \gls{DEA}. Section \ref{sec:HiT_Run} shows the basics of the \gls{HR} routine and some ways of modeling data imperfect knowledge. Section \ref{subsec: DEA under uncertainty: Applying the Hit Run routine} explains how the \gls{HR} routine can be integrated with \gls{DEA} for robust efficiency estimates assessment, compares the proposed method with other alternatives, explains how we can make statistical inference using the efficiency estimates, and makes some additional considerations regarding the integrated approach's robustness. Section \ref{sec:An empirical application} uses a dataset composed of 108 Portuguese public hospitals and an appropriate \gls{DEA} in the presence of undesirable outputs to test the proposed method. Finally, Section \ref{sec:Concluding remarks} concludes this paper.


\section{Data Envelopment Analysis}\label{sec:Data Envelopment Analysis}
\noindent This section introduces the basic notation, the \gls{DEA} model, and some of its variants.

\subsection{Basic notation and some definitions}\label{subsec:Notation}
\noindent This subsection comprises five paragraphs successively devoted to the problem definition, the raw data of a model, and the concept of efficiency followed by the efficient frontier. Finally, the cases of inefficient \glspl{DMU} and directional improvements are introduced. 

\vspace{.25cm}

\noindent \textit{Problem definition.} \gls{DEA} is a model to assess the \textit{technical efficiency} of observations called, in this context, \glspl{DMU}, through an \textit{input-output} transformation analysis \citep{Cooper07b}. \textit{Inputs} represent the resources consumed to produce some goods or to deliver some services, generically called \textit{outputs}. An \textit{observation} in the literature also designates a pair formed by both inputs and outputs. 

\vspace{.25cm}

\noindent \textit{Basic data.} Consider $\text{DMU}_1, \ldots, \text{DMU}_j,\ldots,\text{DMU}_n$ the set of \glspl{DMU} to be analyzed ($J$ will be used to denote the set of the \glspl{DMU}' indices). Let $x^j = (x_{1}^j,\ldots,x_{i}^j,$ $\ldots,x_{m}^j)^\top$ denote a vector, where the $m$ components are the inputs used by the unit $j$, for $j=1,\ldots,n$. Similarly, we define $X$ as the $m \times n$ input matrix, for all the \glspl{DMU} considered. We can also define $y^j = (y_{1}^j,\ldots,y_{r}^j,\ldots,y_{s}^j)^\top$ as the vector, where the $s$ components are the outputs produced by the unit $j$, for $j=1,\ldots,n$. $Y$ denotes the $s \times n$ output matrix, for all the \glspl{DMU} analyzed. 

\vspace{.25cm}

\noindent \textit{Efficient \glspl{DMU}.} In the classical model with constant returns to scale, the (technical) efficiency of a unit $k \in J$ is, roughly speaking, the relationship (ratio) between the weighted sum of the outputs and the weighted sum of the inputs. These (input and output) weights (also called the input and output \textit{multipliers}) are, in general, assessed through the resolution of a linear programming model. It allows to compare \gls{DMU} $k$ with all the \glspl{DMU} $j \in J$ (including \gls{DMU} $k$ itself). The process should be done for all \glspl{DMU} $j \in J$. \gls{DEA} optimizes those multipliers (subject to certain constraints) trying to maximize the efficiency of \gls{DMU} $k$ regarding the entire set $J$. More formally, there are two ways of defining efficiency. A \gls{DMU} $k$ is technically efficient concerning $J$ if for its levels of consumed inputs, $x^k$, no other \gls{DMU} produces more (desirable) outputs than $y^k$. Likewise, for its produced/delivered outputs, $y^k$, no other \gls{DMU} consumes fewer inputs than $x^k$.

\vspace{.25cm}

\noindent \textit{Efficient frontier.} For all technically efficient \glspl{DMU} $k\in J$, the vectors $y^k$ and $x^k$ as well as all the linear convex combinations of such vectors allow the construction of the efficient frontier $F$, where the coefficients, $\mu^k_1,\ldots,\mu^k_j,\ldots,\mu^k_n$, are non-negative and must fulfill the normalization condition $\sum_{j=1}^n \mu_j^k= 1$.

\vspace{.25cm}

\noindent \textit{Inefficient \glspl{DMU} and directional improvements.} Inefficient \glspl{DMU} do not belong to the frontier $F$. To improve their efficiency, they must be projected on $F$. Let $D^k$ denote the distance between the vectors $x^k$ and $y^k$, on the one hand, and the frontier $F$, on the other hand, following a path defined by two direction vectors, $\delta^x$ and $\delta^y$ (a directional vector $\delta$ can be formed from these two vectors). It is obvious that when $k$ belongs to $F$, it is technically efficient and $D^k = 0$. Otherwise, it is inefficient and $D^k > 0$. Classical models of \gls{DEA} usually assume either an input- or an output-orientation. There are three situations.
In the case of input-oriented models, $\delta = (x^k,0)$ , \textit{i.e.}, inputs might be reduced by a factor $(1-D^k) \in ~ ]0,1]$, keeping the same levels of outputs, to turn $k$ more efficient. 
In the case of output-oriented models, $\delta = (0,y^k)$ and outputs can be increased by a factor $1/(1+D^k) \geqslant 1$, inputs held. 
In a more generic case, we may define a vector $\delta$ with nonzero components. The resulting model is said to be \textit{directional}, as it allows the simultaneous inputs' contraction and outputs' expansion.

\vspace{.25cm}

Once projected on $F$, the DMU $k$ becomes featured by the so-called \textit{targets}, which are the corresponding optimal values for the inputs and outputs of $k$. In other words, targets characterize $F$  from the linear convex combinations of the observations associated with efficient \glspl{DMU}. These targets are denoted by $x^{\ast j}$ and $y^{\ast j}$, for $j\in J$. Targets depend essentially on the distance $D^j$ and on the improvement direction $\delta$. Efficient \glspl{DMU} $j\in J$ verify $x^{\ast j} = x^j$ and $y^{\ast j}=y^j$, because $D^j = 0$, regardless of $\delta$. However, if $k\in J$ is not efficient, $F$ outperforms $k$ and the following holds: $\exists_{i=1,\ldots,m}~ x^{\ast k}_i < x^k_i$ or $\exists_{r=1,\ldots,s}~ y^{\ast k}_r > y^k_r$. These inequalities can be transformed into equations using \textit{slacks}. There is a non-negative slack for each variable. Assuming that these slacks result from the product between the scalar $D^k$ (distance to the frontier) and the components of the directional vector, as well as a quantity that does not depend on these two factors, we have: $x^{\ast k} = x^k - D^k~\delta^x - \gamma^x$ and $y^{\ast k} = y^k + D^k~\delta^y + \gamma^y$.

	\subsection{A radial directional DEA model}\label{subsec:DEA_Model}
\noindent	The most popular version of a \gls{DEA} model assumes that the targets correspond to the weighted arithmetic mean of all observations: 
	 $
     x^{\ast k}_{i} = \sum_{j=1}^n \mu_j^k x^{j}_{i},~ i=1,\ldots,m, ~\text{and}~ y^{\ast k}_r = \sum_{j=1}^n \mu_j^k y^{j}_{r},~ r=1,\ldots,s,
     $
such that $\sum_{j=1}^n \mu_j^k = 1$. Therefore, optimizing targets means optimizing the weights $\mu^k$. Because of the inequalities ruling the relationship between targets and observations, we have:
	$
    \sum_{j=1}^n \mu_j^k x^{j}_{i} \leqslant x^k_i,~ i=1,\ldots,m, ~\text{and}~ \sum_{j=1}^n \mu_j^k y^{j}_{r} \geqslant y^k_r,~ r=1,\ldots,s.$ Assuming that both inputs and outputs are allowed to change radially to project $(x^k,y^k)$ on $F$, and also that such a change depends on a predefined path, $\delta=(\delta^x,\delta^y)$, the radial distance from $k$ to $F$ is a non-negative scalar ($D^k \in \mathbb{Z}$). It means that the previous inequalities can be rewritten as follows:    
    $
    \sum_{j=1}^n \mu_j^k x^{j}_{i} \leqslant x^k_i - D^k\delta^x_i,~ i=1,\ldots,m, ~\text{and}~ \sum_{j=1}^n \mu_j^k y^{j}_{r} \geqslant y^k_r +  D^k\delta^y_r,~ r=1,\ldots,s.$ Since we want to know what is the largest value of $D^k$ and possible slacks that keep feasible the previous systems of constraints, a linear problem can be stated as follows, \textit{vide} Equation (\ref{eq:directionalDEA}) \citep{Fukuyama17}.
 
    \begin{equation}
    \label{eq:directionalDEA}
    	\begin{array}{ll}
        	D^{\ast k} = \max  & D^k +  {\displaystyle  \varepsilon \left( \sum_{i=1}^m \gamma_i^x + \sum_{r=1}^s \gamma_r^y \right)}  \\
            \mbox{subject to:}    &  \\
       & {\displaystyle \sum_{j=1}^n \mu_j^k x^{j}_{i} + D^k\delta^x_i + \gamma_i^x = x^k_i, \;\, i = 1,\ldots,m,} \\
       &  {\displaystyle \sum_{j=1}^n \mu_j^k y^{j}_{r} - D^k\delta^y_r - \gamma_r^y = y^k_r, \;\, r = 1,\ldots,s,} \\
       &  {\displaystyle \sum_{j=1}^{n} \mu_j^k=1,} \\
       & {\displaystyle \mu_j^k \geqslant 0, \;\, j=1,\ldots, n,} \\
       & {\displaystyle \gamma_i^x \geqslant 0, \;\, i=1,\ldots, m,} \\
       & {\displaystyle \gamma_r^y \geqslant 0, \;\, r=1,\ldots, s.}
        \end{array}
    \end{equation}

\subsection{Data Envelopment Analysis with undesirable outputs}\label{subsec:Methodological issues}

\noindent Hitherto, the \gls{DEA} model \ref{eq:directionalDEA} has considered inputs and desirable outputs. However, some undesirable outputs are usually produced (in some cases, desirable outputs cannot be produced/delivered without the undesirable ones). Let $u^j = (u_{1}^j,\ldots,u_{h}^j,\ldots,u_{v}^j)^\top$ be a vector associated with the $p$ components representing the undesirable output levels produced by the \gls{DMU} $j$. As before, $U$ denotes the $p \times n$ undesirable output matrix, for all the \glspl{DMU} considered. In this case, targets associated with undesirable outputs are: $u_h^{\ast k} = \sum_{j=1}^n \mu_j^k u_h^j,~h=1,\ldots,v.$ It is straightforward to conclude that $u_h^{\ast k} \leqslant u_h^k$, for all $h\in\{1,\ldots,v\}$. Given the commonly assumed weak disposability over undesirable outputs, the previous relationship can be rewritten in terms of the distance $D^k$ and the components $\delta^u$ for the directional vector, $\delta=(\delta^x, \delta^y, \delta^u)$:

\begin{equation}\label{eq.u_model}
    \sum_{j=1}^n \mu_j^k u_h^j + D^k \delta_h^u = u_h^k,~h=1,\ldots,v.
\end{equation}

Equation (\ref{eq.u_model}) can be inserted into Model (\ref{eq:directionalDEA}) as a new constraint. Unfortunately, it does not correctly deal with undesirable outputs \citep{Kuosmanen2005, Kuosmanen2009}. The imposition of an equation related to the weak disposability in undesirable outputs is not sufficient. An abatement factor $\theta_j^k \in [0,1]$ should be applied to the intensities $\mu_j^k$ in the outputs-related constraints. It results into a nonlinear problem that demands for linearization. Let the intensities $\mu_j^k$ be partitioned into two non-negative factors, $\alpha_j^k$ and $\beta_j^k$, \textit{i.e.}, $\mu_j^k = \alpha_j^k + \beta_j^k$ for $\alpha_j^k,\beta_j^k \geqslant 0$ for all $j\in J$. If $\alpha_j^k = \theta_j^k \mu_j^k$ represents the part of \gls{DMU} $k$ remaining active, then $\beta_j^k = (1-\theta_j^k)\mu_j^k$ is the part of that \gls{DMU}'s output abated via scaling down of activity level \citep{Kuosmanen2005}. In light of this, we get the following model:

		\begin{equation}\label{eq:directionalDEA_corrected}
        \begin{array}{ll}
        	D^{\ast k} = \max & D^k + \varepsilon \left( \displaystyle \sum_{i=1}^m \gamma_i^x + \displaystyle \sum_{r=1}^s \gamma_r^y \right) \\
            \mbox{subject to:}    &  \\
       & {\displaystyle \sum_{j=1}^n \alpha_j^k ~x^{j}_{i} + \sum_{j=1}^n \beta_j^k  x^{j}_{i} + D^k \delta^x_i +  \gamma_i^x= x^k_i, \;\, i = 1,\ldots,m,} \\
       &  {\displaystyle \sum_{j=1}^n \alpha_j^k  y^{j}_{r} - D^k  \delta^y_r ~ - ~ \gamma_r^y= y^k_r, \;\, r = 1,\ldots,s,} \\
       & {\displaystyle \sum_{j=1}^n \alpha_j^k  u^{j}_{h} + D^k  \delta^u_h= u^k_h, \;\, h = 1,\ldots,p,} \\
       &  {\displaystyle \sum_{j=1}^{n} \alpha_j^k +  \sum_{j=1}^{n} \beta_j^k = 1,} \\
       & {\displaystyle \alpha_j^k, \beta_j^k \geqslant 0, \;\, j=1,\ldots, n,} \\
       & {\displaystyle \gamma_i^x \geqslant 0, \;\, i=1,\ldots, m,} \\
       & {\displaystyle \gamma_r^y \geqslant 0, \;\, r=1,\ldots, s.}
        \end{array}
\end{equation}

We use Model (\ref{eq:directionalDEA_corrected}) to estimate the Portuguese public hospitals' efficiency levels in our case study (\textit{vide infra}).


\section{Dealing with imperfect knowledge in DEA: A brief  review}\label{sec:DEA_Review}
\noindent \gls{DEA} and other models alike, require in general that all inputs and outputs are perfectly known. If this is not the case, then biased conclusions may arise because models are typically sensitive to data quality. The biasing degree may naturally depend on the extent of the \gls{IKD}. Several alternatives have been proposed in the literature, each with its advantages, shortcomings, and caveats.

It is essential to understand the problem of \gls{IKD} for efficiency assessment. There are several ways of modeling the \gls{IKD}. They can be classified into four distinct groups: (1) deletion of observations, (2) simple or pure substitution of observations, (3) more sophisticated substitutions, and (4) feasible sets of values.

\begin{enumerate}
\item \textit{Modeling through the deletion of \glspl{DMU} with imperfect data (Omission).} Perhaps, deletion of \glspl{DMU} is the most employed way of modeling \gls{IKD}, especially in exploratory analyses involving statistical tests or when the sample's size is substantial. However, we note that disregarding \glspl{DMU} from the efficiency analysis could be a pitfall, as they can be potential benchmarks. It decreases the statistical power of the conducted analyses. Despite these disadvantages, the deletion of \glspl{DMU} featured by \gls{IKD} is easy to implement and justified, making it so largely employed. Two main approaches to deal with omission are:

\begin{enumerate}[label=\itshape\alph*\upshape)]
    \item \textit{Listwise (complete case) deletion.} In this case, we check for IKD cases and remove the associated \glspl{DMU} from the analysis. The remaining sample is, then, used for the efficiency assessment through the traditional linear programming techniques. Listwise deletion may decrease the statistical power of the employed method, in any case.
  
    \item \textit{Pairwise deletion.} Unlike the previous case, pairwise deletion corresponds to the removal of a \gls{DMU} when its data is imperfectly known for a given variable and only if this one is under analysis. The same \gls{DMU} is, then, carried back into the analysis if that variable is no longer considered. This kind of approach makes \gls{DEA}-based models incomparable if they were based on the same dataset.
\end{enumerate}
  
  
\item \textit{Modeling through simple substitution of imperfectly known data (Imputation).} In this group, \gls{IKD} are replaced by \textit{appropriate} estimates. Since these estimates could not be the most appropriate ones, the procedure should be repeated a considerable number of times. Some approaches for dealing with imputation are:

\begin{enumerate}[label=\itshape\alph*\upshape)]
    \item \textit{Hot-deck imputation} \citep{Reilly1993}. This technique is based on the idea that similar \glspl{DMU} exhibit identical consumption and production profiles. Hence, we replace imperfectly known data with values copied from (randomly selected) similar observations.

    \item \textit{Cold-deck imputation.} As in the case of hot-deck imputation, in this case, one substitutes \gls{IKD} using similar observation but that belongs to another dataset.

    \item \textit{Mean imputation (or mean substitution)} \citep{Raaijmakers1999}. \gls{IKD} are replaced with the average of the considered variable. It does not change the mean of that variable. However, it carries out some problems due to the attenuation of correlations involving the imputed variable(s), thus being problematic in multivariate analyses.

    \item \textit{Regression} \citep{OLINSKY200353}. We can use a simple or multiple regression model with non-imperfect data to replace values featured by imperfect knowledge. Random noise can also be added to the estimate. Each fitted value is associated with a confidence interval translating the error of the estimate. The estimation is as good as any other value within that interval to substitute imperfect data. Moreover, the \textit{optimal/ideal} model is often challenging to achieve.

    \item \textit{Multiple imputation.} Each entry of \gls{IKD} is substituted many times using appropriate distributions, and generating as much different outcomes and analyses.
\end{enumerate}


\item \textit{Modeling through more sophisticated substitution (Analysis).} This kind of method mainly uses the maximum likelihood estimation to assess some relevant parameters to the analysis.

\begin{enumerate}[label=\itshape\alph*\upshape)]
    \item \textit{Expectation-maximization (Dempster-Laird-Rubin) algorithm.} This algorithm starts by determining the model parameters and by estimating potential alternative values for \gls{IKD} given current observations. Parameters are, then, refined admitting that \gls{IKD} are perfectly known. Using these parameters, one re-estimates the substitutes for \gls{IKD}. The process repeats until convergence is achieved.

    \item \textit{Maximum likelihood estimation.} Differently from the previous method, in this case one only determines the model parameters once and, if necessary, estimate then the acceptable substitutes for the cases of \gls{IKD}. 

\end{enumerate}

\item \textit{Modeling through feasible sets of values.} When imperfect knowledge cannot be modeled by using a single point (as it is quite restrictive and, unfortunately, frequent), a pleasing way is to consider sets of feasible points. For continuous intervals, there are two main approaches:

\begin{enumerate}[label=\itshape\alph*\upshape)]
    \item \textit{Interval \gls{DEA}} \citep{DESPOTIS200224,SMIRLIS20061}. This procedure establishes boundaries for data imperfect knowledge: $x^j \in [\underline{x}^j,\overline{x}^j]$ and $y^j \in [\underline{y}^j,\overline{y}^j]$ for $j\in J$. Of course, if data are perfectly known, then $\exists j\in J$ such that $\underline{x}^j=\overline{x}^j$ and $\underline{y}^j=\overline{y}^j$. Notwithstanding, one defines two scenarios: (W) worst, described by $\mathscr{S}^W=\{(x^j,y^j)\in \mathbb{R}_+^m\times\mathbb{R}_+^s ~|~ x^j = \overline{x}^j,~ y^j=\underline{y}^j,~j\in J\}$, and (B) best, with $\mathscr{S}^B=\{(x^j,y^j)\in \mathbb{R}_+^m\times\mathbb{R}_+^s ~|~ x^j = \underline{x}^j,~ y^j=\overline{y}^j,~j\in J\}$. One, then, may apply Model (\ref{eq:directionalDEA}) to project the worst version of \gls{DMU} $k$, $(\overline{x}^k,\underline{y}^k)$, into the frontier constructed using $\mathscr{S}^B$, to achieve the maximum distance of $k$ to the frontier under imperfect knowledge: $\overline{D}^{\ast k}$. Likewise, one may project the best version of $k$, $(\underline{x}^k,\overline{y}^k)$, into the frontier constructed with $\mathscr{S}^W$, and get the smallest distance of $k$ to the frontier: $\underline{D}^{\ast k}$. Therefore, one concludes that there is an interval associated with the efficiency of \gls{DMU} $k$: $D^{\ast k} \in [\underline{D}^{\ast k}, \overline{D}^{\ast k}]$. Although useful to fix boundaries for the efficiency of \glspl{DMU}, this alternative disregards the modeling of IKD, making both the analysis of efficiency distributions and the application of statistical tests impossible. 
    
    \item \textit{Fuzzy set \gls{DEA}} \citep{SOLEIMANIDAMANEH20061199, Kao2007, WU2009227,Emrouznejad2014, Lio2018}. One applies the fuzzy set theory to \gls{DEA}. In general, it is not possible to solve fuzzy \gls{DEA} models using linear programming solvers because the coefficients of such models are fuzzy sets. Estimating efficiency scores through this approach is usually difficult because "\textit{a large number of input variables in fuzzy logic could result in a significant number of rules that are needed to specify a dynamic model}" \citep{SHOKOUHI2010387}. \cite{ENTANI200232} used the interval DEA together with the fuzzy approach to rank \glspl{DMU}.
\end{enumerate}
\end{enumerate}

Omission and imputation are the most relevant and frequent ways of handling \gls{IKD} regarding the \gls{DEA} utilization for efficiency assessment. Whereas the omission of \glspl{DMU} from the dataset is an easy exercise, there are several alternatives to attribute (estimate) values when data is imperfectly known. The simple act of deleting \glspl{DMU} from the analysis is often sufficient to bias the results. As it is widely known, \gls{DEA} is prone to the so-called \textit{curse of dimensionality}. Thus, imputation can be seen as a better approach for efficiency assessment in the presence of \gls{IKD}. The following are some of the most relevant alternatives to deal with this problem in \gls{DEA} (and models alike); see \cite{Wen2015} for more details:

\begin{enumerate}[label=\itshape\alph*\upshape)]
    \item \textit{Blank entries} \citep{Kuosmanen2009_blank}. When one input or one output is missing from the dataset for a given \gls{DMU}, it can simply be replaced by a large value (\textit{big M}) or by zero, respectively, to mitigate the influence of \glspl{DMU} with missing data on the efficiency assessment of other observations. Although formulated regarding the case of blank entries, this approach could easily be extended to the general case of \gls{IKD}. However, we note that, despite its simplicity, this alternative is problematic when the number of imperfectly known cases is large, translating into biased results (substantial inefficiency levels). The problem exacerbates when there is a certain degree of knowledge even for the cases of \gls{IKD}. Moreover, efficiency estimates are not comparable among different \glspl{DMU} because they are no longer evaluated using the same basis (the same variables).

    \item \textit{\gls{DEA} with Halo effect} \citep{ZHA20136135}. In this approach, one uses the mean imputation for the cases of \gls{IKD}. Then, one estimates the efficiency scores and rank \glspl{DMU}. Finally, considering \gls{DMU} $k$ (with IKD), which is in position $\mathcal{P}^k$, one defines the interval of admissible values for imperfectly known data using the values of \glspl{DMU} in positions $\mathcal{P}^k\pm 1$. This approach resembles the hot-deck imputation. First, using the mean values to replace \gls{IKD} should produce inaccurate efficiency levels and, accordingly, biased ranks. Sustaining a whole procedure on potentially biased ranks does not seem correct. Second, the values observed for the \gls{DMU} positioned in rank $\mathcal{P}^k + 1$ (or $\mathcal{P}^k - 1$) do not necessarily fit the \gls{IKD} of \gls{DMU} $k$.

    \item \textit{Uncertain \gls{DEA}} \citep{EHRGOTT2018231}. This model determines the amount of uncertainty necessary to raise the efficiency score of a \gls{DMU} featured by \gls{IKD}. Unfortunately, it usually results on nonlinear models that are difficult to solve.

    \item \textit{Imprecise \gls{DEA}} \citep{Cooper1999, PARK2010289}, which also returns nonlinear models because data are imprecise. Some linearizations have been proposed in the literature, as in \cite{ZHU2003513}, after scale transformation and variable alternations or procedures that turn imprecise into exact data. The interval DEA of \cite{DESPOTIS200224} is an extension of the imprecise DEA.

    \item \textit{Robust optimization and \gls{DEA}} \citep{SHOKOUHI2010387, SALAHI201667}, which is based on the concept of uncertainty sets, a robust counterpart optimization, and the imposition of a probability bound for constraints violation. 

    \item \textit{Stochastic \gls{DEA}} \citep{SENGUPTA1992259}. It specifies a probability density function to model errors in data. According to \cite{OLESEN20162}, there are two main directions upon which the stochastic DEA has been developed:
    \begin{itemize}
        \item[$-$] one based on statistical (but restrictive) axioms defining a statistical model and a sampling process into the DEA framework that provides biased estimators of the actual frontier \citep{Banker93}; and
        
        \item[$-$] another, based on the theory of chance constraints, replaces data with DMU-specific distributions \citep{OLESEN2006}. \cite{OLESEN20162} pointed out that one may criticize this approach because no formal statistical model with a sampling process is specified, making it challenging to identify what it is being estimated.
    \end{itemize}
\end{enumerate}

In the next section, we propose an alternative based on the so-called \gls{HR} to replace \gls{IKD} using few simple mathematical operations and a linear program within a for/while cycle (during some iterations). This Monte-Carlo-like operation makes the proposed alternative very easy to implement and run. It does not suffer from the problems verified for blank entries, \gls{DEA} with Halo effect, fuzzy set \gls{DEA}, and uncertain \gls{DEA}. This alternative was inspired by the interval \gls{DEA}, the robust optimization with \gls{DEA}, and the stochastic \gls{DEA}. We establish boundaries for \gls{IKD} and draw observations within that boundary. However, our approach does not insert in any of the two directions of the stochastic DEA as identified before. Also, we do not specify any distribution function to model errors in data, which may not be realistic. Usually, there is no evidence to choose one type of distribution function. Unlike interval \gls{DEA}, our proposal allows to specify IKD modeling using appropriate sets and obtain a considerable number of efficiency estimates. These, in turn, are useful for the analysis of efficiency distributions and statistical inference. Notice that all potential values belonging to those sets are admissible observations for the cases of \gls{IKD}. Likewise, any points in the Euclidean space $\mathbb{R}_+^{m+s+v}$ outside the sets are not admissible observations. Therefore, our proposal is out of the scope of the fuzzy set \gls{DEA}.


\section{The Hit \& Run algorithm}\label{sec:HiT_Run}
\noindent In this section, we propose a \gls{HR} routine \citep{Smith1984,Belisle1993,Kaufman1998} to simulate values belonging to a bounded set. This routine is a straightforward and useful procedure to simulate feasible points within a bounded (either convex or not) set, $\Lambda$.

\subsection{A first illustrative example}
\label{subsec: Basics regarding the HitRun routine}	
\noindent Let us start with an example of a convex set in the Euclidean space $\mathbb{R}^2$. We show the \gls{HR} routine, step-by-step, based on the example of Figure \ref{fig:HitandRun}. 

\begin{enumerate}
    \item Define the constraints associated with the set $\Lambda$. In this example, we consider $\Lambda$ defined through the following five constraints: $(1)~x_1 + x_2 \leqslant 10$, $(2)~-5x_1 + x_2 \geqslant -10$, $(3)~-2x_1 +x_2 \leqslant 5$, $(4)~x_1\geqslant 0$, and $(5)~x_2 \geqslant 0$.
    
    \item Select an initial point within $\Lambda\in \mathbb{R}^z$, $P^{(0)}=(P^{(0)}_1,\ldots,P^{(0)}_l,\ldots,P^{(0)}_z)$. In this case, $\Lambda\in \mathbb{R}^2$ and $P^{(0)}=(2,6)$. Figure \ref{fig:1b} exhibits the bounded set and the starting point for this example.
    
    \item Randomly select a directional vector, $d^{(1)}$, with unitary Euclidean norm, say $d^{(1)} =\frac{(0.20,0.40)}{\sqrt{0.20^2 + 0.40^2}}$.
    
    \item Determine the distance, $\lambda^{(1)}$, between $P^{(0)}$ and the boundary of $\Lambda$. 
    \begin{enumerate}
        \item Define the point in the boundary as $\hat{P}^{(0)} = P^{(0)} + \lambda^{(1)} d^{(1)}$.
        
        \item Compute $\lambda^{(1)}_1$ associated with the first constraint defining $\Lambda$. The point $\hat{P}^{(0)}$ verifies $\hat{P}^{(0)}_1+\hat{P}^{(0)}_2=10$ or, equivalently, $P^{(0)}_1 + \lambda^{(1)}_1 d_1^{(1)} + P^{(0)}_2 + \lambda^{(1)}_1 d_2^{(1)} = 10 \Leftrightarrow \lambda^{(1)}_1 = \frac{10 - (P^{(0)}_1+P^{(0)}_2)}{d_1^{(1)}+d_2^{(1)}}$. In this case, $\lambda^{(1)}_1 = \frac{10 - (2 + 6)}{0.20 + 0.40} \sqrt{0.20^2 + 0.40^2} = \frac{10}{3} \sqrt{0.20^2 + 0.40^2}$.
        
        \item Compute $\lambda^{(1)}_2,\ldots,\lambda^{(1)}_5$ associated with the other four constraints defining $\Lambda$, similarly. See Figure \ref{fig:1a}. In this example, $\lambda^{(1)}_2 = 10\sqrt{0.20^2 + 0.40^2},~\lambda^{(1)}_3 = +\infty$ (because $(-2,1)^\top d^{(1)} = 0$), $\lambda^{(1)}_4=-10\sqrt{0.20^2 + 0.40^2},\lambda^{(1)}_5=-15\sqrt{0.20^2 + 0.40^2}$.
        
        \item Compute $\lambda^{(1)}=\displaystyle{\min_{\lambda_q^{(1)}\geqslant 0,~q=1,\ldots,5} \{\lambda^{(1)}_1,\lambda^{(1)}_2,\lambda^{(1)}_3,\lambda^{(1)}_4,\lambda^{(1)}_5 \}} = \frac{10}{3}\sqrt{0.20^2 + 0.40^2}$.
    \end{enumerate}  
    
    \item Project $P^{(0)}$ on the boundary of $\Lambda$ following the direction $d^{(1)}$, and obtain the point $\hat{P}^{(0)}=(\hat{P}^{(0)}_1,\hat{P}^{(0)}_2)$ in the boundary. In this example, $\hat{P}^{(0)} = (2,6)+\frac{10}{3}\sqrt{0.20^2 + 0.40^2}(0.20,0.40) = \frac{\sqrt{0.20^2 + 0.40^2}}{3}(8, 22)$.
    
    \item Define the line linking $P^{(0)}$ to $\hat{P}^{(0)}$, as $L^{(1)} = \Lambda \cap \{ P\in\mathbb{R}^z,~e \in [0,1] ~|~ P = P^{(0)} + \lambda^{(1)} d^{(1)} e\}$.
    
    \item Define a new point $P^{(1)}$ belonging to $L^{(1)}$, as follows:
    \begin{enumerate}
        \item Randomly (with replacement) generate a number $\xi\sim \text{uniform}(0,1)$. In this example, $\xi=0.60$.
        
        \item Compute the point $P^{(1)} = P^{(0)} + \lambda^{(1)} d^{(1)} e$, replacing $e$ by $\xi$.
    \end{enumerate}
    
    \item Repeat Steps 3-7 considering the previous point $P^{(1)}$ as the starting point. For example, the second iteration starts from this intermediary point, whose projection is $(\hat{P}_1^{(1)},\hat{P}_2^{(1)})=(16/5,6)$ because $\lambda^{(2)}=1.60$ and according to $d^{(2)}=\frac{(0.5,-0.5)}{\sqrt{0.5^2 + 0.5^2}}$. Choosing $\xi=0.75$, we get $(P_1^{(2)},P_2^{(2)})=(3,31/5)$. The result of a few more iterations is also shown in Figure \ref{fig:1b} using red dots.
\end{enumerate}

\begin{figure}[t]
		\centering
        \subfloat[Application]{
		\begin{tikzpicture}
		\begin{axis}[
		height=9cm,
		width=10cm,
		xmin=0,
		xmax=4,
		ymin=0,
		ymax=9,
		xlabel=$x_1$,
		ylabel=$x_2$,
		ytick={0,1,...,9},
		xtick={0,1,...,4},
		grid=major,
		]

		\addplot[thick, blue, name path=A] {5*x - 10};
		\addplot[thick, red, name path=B] {2*x + 5};
		\addplot[thick, black, name path=C] {-x + 10};
		\addplot+[draw=none, mark=none, name path=D] {0};
		\draw[dashed, thick] (axis cs:2,6) -- (axis cs:2.6667,7.3333);
		\node[label={135:{$P^{(0)}$}},circle,fill,inner sep=2pt, color=blue, mark=*] at (axis cs:2,6) {};
		\node[label={90:{$\hat{P}^{(0)}$}},circle,fill,inner sep=2pt, color=blue, mark=*] at (axis cs:2.6667,7.3333) {};
		\draw[dashed, thick] (axis cs:2,6) -- (axis cs:2.6667,7.3333);
		\draw[->, color=blue, thick] (axis cs:2,6) -- (axis cs:2.2,6.4);
		\draw[dashed, thick] (axis cs:2.4,6.8) -- (axis cs:3.2,6);
		\node[label={160:{$P^{(1)}$}},circle,fill,inner sep=2pt, color=blue, mark=*] at (axis cs:2.4,6.8) {};
		\node[label={0:{$\hat{P}^{(1)}$}},circle,fill,inner sep=2pt, color=blue, mark=*] at (axis cs:3.2,6) {};
		\draw[->, color=blue, thick] (axis cs:2.4,6.8) -- (axis cs:2.65,6.55);
		\node[label={190:{$P^{(2)}$}},circle,fill,inner sep=2pt, color=blue, mark=*] at (axis cs:3,6.2) {};
		
		\pgfplotstableread{inputdata.txt}\mydata;
		\addplot [color=red, circle,fill,inner sep=1pt, only marks,] table [x expr = \thisrowno{0}, y expr = \thisrowno{1}] {\mydata};
		
		\end{axis}
		\end{tikzpicture}\label{fig:1b}}
		\subfloat[Equations ruling the Hit \& Run framework for the present example, considering the generic iteration $\ell=1,\ldots,t$. ]{\label{fig:1a}\includegraphics[width = .3\columnwidth]{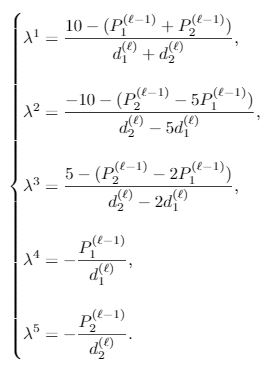}}
		\caption{Hit \& Run framework applied to a simple convex set.\label{fig:HitandRun}}
	\end{figure}

\subsection{The Hit \& Run algorithm for imperfect knowledge of data modeling}
\noindent Algorithm 1 in Appendix A synthesizes the HR algorithm for data generation (simulation) within the set $\Lambda$.\footnote{Appendix available online at: \url{https://drive.google.com/drive/folders/1jAmKFzz_PWyPKSNTxqO_0mM8BWKWn3-D?usp=sharing}}

\subsection{General computation of distance parameters for convex sets defined by linear constraints}
\noindent From the example in Subsection \ref{subsec: Basics regarding the HitRun routine} and Algorithm 1 (Appendix A), the importance of computing the parameter $\lambda$ related to the distance of a point to the frontier makes it clear. Of course, such a distance depends on the shape of $\Lambda$. This subsection describes the general computation of $\lambda$ parameters considering convex polytopes defined by several linear constraints. From now on, we consider that these sets must be convex \citep{GREGORY2011417}, in line with some robust optimization developments.

Let us consider a polytope defined by the intersection of $p$ linear constraints: 

\begin{equation}
    \sum_{l=1}^z a_{ql} x_l \leqslant b_q,\;\; q=1,\ldots,p, 
\end{equation}

\noindent such that the parameters $a_{ql}$ and $b_q$ are real numbers. One of the first iterations of \gls{HR} consists of defining a starting point $P^{(0)}$ and to project it on the boundary of $\Lambda$ following a directional vector $d^{(1)}$. Note that, in the boundary of $\Lambda$, the constraints of the polytope write as $\sum_{l=1}^z a_{ql} x_l = b_q,\;\; q=1,\ldots,p$. That being said, it is straightforward to conclude that $\sum_{l=1}^z a_{ql} (P^{(0)}_l + \lambda^{(1)}_q d^{(1)}_l) = b_q,\;\; q=1,\ldots,p$, being $\lambda^{(1)}_q$ the distance between $P^{(0)}$ and the boundary of $\Lambda$. This Equation can be rewritten as follows: 

\begin{equation}
    \lambda^{(1)}_q = \frac{b_q - \sum_{l=1}^z a_{ql}P^{(0)}_l}{\sum_{l=1}^z a_{ql}d^{(1)}_l}, \;\; q=1,\ldots,p.
\end{equation}

\noindent Note that there are $p$ values for these $\lambda$ parameters, and we select the smallest non-negative value among them:

\begin{equation}
    \lambda^{(1)} = \min_{\lambda\geqslant 0} \{ \lambda^{(1)}_1, \ldots, \lambda^{(1)}_q, \ldots, \lambda^{(1)}_p \}
\end{equation}

The \gls{HR} routine runs for a significant number of times, $t$. Hence, there are $p\cdot t$ distinct estimates for $\lambda$, denoted from now on by $\lambda^{(\ell)}_q$ for $q=1,\ldots,p$ and $\ell=1,\ldots,t$. The generic Equation ruling the estimation of these parameters is as follows:

\begin{equation}\label{eq:lambda_ell}
    \lambda^{(\ell)}_q = \frac{b_q - \sum_{l=1}^z a_{ql}P^{(\ell-1)}_l}{\sum_{l=1}^z a_{ql}d^{(\ell)}_l}, \;\; q=1,\ldots,p, \;\; \ell=1,\ldots,t.
\end{equation}

The parameter $\lambda^{(\ell)}$ associated with the $\ell$-th iteration of the HR algorithm is:

\begin{equation}\label{eq:lambda_min}
    \lambda^{(\ell)} = \min_{\lambda\geqslant 0} \{ \lambda^{(\ell)}_1, \ldots, \lambda^{(\ell)}_q, \ldots, \lambda^{(\ell)}_p \}, \;\; \ell=1,\ldots,t.
\end{equation}

The estimation of the parameter $\lambda^{(\ell)}$ naturally depends upon the shape of $\Lambda$. The next subsection presents some particular ways of modeling the \gls{IKD} through the definition of $\Lambda$.

\subsection{Particular cases of modeling data imperfect knowledge}
\label{subsec: Different ways of modeling uncertainty}
\noindent There are endless ways of modeling the sets describing the \gls{IKD}. The most natural one is, perhaps, assuming that it is well modeled by a box in the Euclidean space $\mathbb{R}^z$ \citep{Soyster1973}. However, other alternatives are available, including hyper-ellipsoids \citep{Ben98,Ben00}, hyper-rhombi or polyhedral \citep{Bertsimas2004}, or, more generic, super-ellipses \citep{Gielis2003}. The estimation of the parameter $\lambda^{(\ell)}$ associated with the $\ell$th iteration of the \gls{HR} procedure largely depends on the definition of $\Lambda$. The definition of this set's shape is not always straightforward, although some insights have been provided in the literature. For instance, \cite{Bertsimas2018} propose to use historical data and statistical estimates to define data-driven sets, while \cite{Bertsimas2009} rely on decision-maker risk preferences based on the theory of coherent risk measures of \cite{Artzner1999}.

\subsubsection{The case of hyper-box}
\noindent As said, the most simple way of modeling IKD is through a hyper-box in $\mathbb{R}^z$; see Definition \ref{proposition1}. Figure B.1a in Appendix B exhibits 10,000 points generated by HR for a two-dimension box with $x_1\in [2,8]$ and $x_2\in [3,7]$.

\begin{definition}[Hyper-box in $\mathbb{R}^z$]\label{proposition1}\normalfont The hyper-box whose set $\Lambda$ is defined by
\begin{equation}
\label{eq:Set_box}
\Lambda = \left\{ (x_1,\ldots,x_l,\ldots,x_z)\in\mathbb{R}^z~|~ 
x_1\in[\underline{x}_1,\overline{x}_1],\ldots,
x_l\in[\underline{x}_l,\overline{x}_l],\ldots,
x_z\in[\underline{x}_z,\overline{x}_z]      \right\}
\end{equation}
is associated with
\begin{equation}
\label{eq:Lambdas_box}
\lambda^{(\ell)} = \min \left\{ \min_{\lambda\geqslant 0} \left\{ \frac{\underline{x}_l - P_l^{(\ell -1)}}{d_l^{(\ell)}} ,~l=1,\ldots,z\right\}, \min_{\lambda\geqslant 0} \left\{\frac{\overline{x}_l - P_l^{(\ell -1)}}{d_l^{(\ell)}} ,~l=1,\ldots,z\right\}      \right\},~\ell=1,\ldots,t,
\end{equation}
which immediately holds from Equation (\ref{eq:lambda_ell}).
\end{definition}

\subsubsection{The case of hyper-ellipsoid}
\noindent Another simple way of modeling IKD is using a hyper-ellipsoid in $\mathbb{R}^z$, whose set $\Lambda$ obeys to:
\begin{equation}
\label{eq:Set_ellipsoid}
\Lambda = \left\{ (x_1,\ldots,x_z)\in\mathbb{R}^z~|~ 
\left(\frac{x_1 - P^{(0)}_1}{w_1}\right)^2 + \cdots +
\left(\frac{x_z - P^{(0)}_z}{w_z}\right)^2 = \sum_{l=1}^z \left(\frac{x_l - P^{(0)}_l}{w_l} \right)^2 \leqslant 1
\right\},
\end{equation}
where $w_1,\ldots,w_z$ are the semi-diameters or semi-axes of the hyper-ellipsoid contained in a box characterized by $x_1\in [P^{(0)}_1 - w_1, P^{(0)}_1 + w_1],\ldots,~ x_z\in [P^{(0)}_z - w_z, P^{(0)}_z + w_z]$. Through the hyper-ellipsoids centered in the (starting) point $P^{(0)}=(P^{(0)}_1,\ldots,P^{(0)}_l,\ldots,P^{(0)}_z)$ and with semi-axes $w_l~(\geqslant 0),~l=1,\ldots,z,$ we have the following Equation in the boundary:

\begin{equation}\label{eq:cond_ellip}
\sum_{l=1}^z \left(\frac{x_l - P^{(0)}_l}{w_l} \right)^2 = 1.
\end{equation}
We can define the hitting point $\hat{P}^{(0)}=(\hat{P}^{(0)}_1,\ldots,\hat{P}^{(0)}_l,\ldots,\hat{P}^{(0)}_z)$ in the boundary of $\Lambda$ through the centroid of $\Lambda$, a vector $d^{(1)}$ with unitary Euclidean norm, and a parameter $\lambda$: $\hat{P}^{(0)}_l = P^{(0)}_l + \lambda^{(1)} d^{(1)}_l,~l=1,\ldots,z$. Given Equation (\ref{eq:cond_ellip}), we have:

\begin{equation}
    \sum_{l=1}^z \left(\frac{(x_l - P^{(0)}_l)  +  d^{(1)}_l \lambda^{(1)}}{w_l} \right)^2 = 
\sum_{l=1}^z \frac{\left(x_l - P^{(0)}_l\right)^2  + \left(d^{(1)}_l\right)^2 \left(\lambda^{(1)}\right)^2  + 2 (x_l - P^{(0)}_l) d^{(1)}_l \lambda^{(1)}}{(w_l)^2}  = 1, 
\end{equation}

\noindent which can be rearranged as:

\begin{equation}\label{eq:split_ellip}
\begin{split}
&\underbrace{\sum_{l=1}^z \left( \frac{d^{(1)}_l}{w_l} \right)^2}_{=\varphi^{(1)}} \left(\lambda^{(1)}\right)^2 +  \underbrace{\left( 2 \sum_{l=1}^z \frac{(x_l - P^{(0)}_l) d^{(1)}_l}{(w_l)^2}  \right)}_{=\psi^{(1)}} \lambda^{(1)} + \underbrace{\sum_{l=1}^z \left( \frac{x_l - P^{(0)}_l}{w_l} \right)^2 -1}_{=\varsigma^{(1)}} =\\ &=\varphi^{(1)}\left(\lambda^{(1)}\right)^2  + \psi^{(1)} \lambda^{(1)} + \varsigma^{(1)} = 0 \Longleftrightarrow\\
&\Longleftrightarrow \lambda^{(1)} = \left[-\psi^{(1)} \pm \sqrt{\left(\psi^{(1)} \right)^2 - 4 \varphi^{(1)}\varsigma^{(1)}}\right]/2\varphi^{(1)},
\end{split}
\end{equation}
which holds after using Bhaskara's (quadratic) formula. Equation (\ref{eq:cond_ellip}) is quadratic. Hence, at the most, there are two different solutions for $\lambda^{(\ell)},~\ell=1,\ldots,t,$ in Equation (\ref{eq:lambda_ellipsoid}). Therefore, according to Equation (\ref{eq:split_ellip}), we select the smallest strictly positive solution for $\lambda^{(\ell)}$. 

Figure B.1b portrays the result of 10,000 iterations of the \gls{HR} procedure applied to the two-dimensional ellipse, centered in the starting point $P^{(0)}=(5,5)$ with $w_1=3$ and $w_2=1$. If $\underline{x}_l = P_l^{(0)} - w_l$ and $\overline{x}_l = P_l^{(0)} + w_l$, then we can observe that the hyper-ellipsoid is more restrictive than hyper-boxes to model \gls{IKD}.

The parameters $\lambda^{(\ell)}$ for the hyper-ellipsoid are in Definition \ref{proposition2}. It is immediate to conclude that Equation (\ref{eq:lambda_ellipsoid}) results from the generalization of $\lambda^{(1)}$ in Equation (\ref{eq:split_ellip}) for the $\ell$-th iteration of the \gls{HR} procedure.

\begin{definition}
[Hyper-ellipsoid in $\mathbb{R}^z$]\label{proposition2}\normalfont 
The hyper-ellipsoid defined by Equation (\ref{eq:Set_ellipsoid}) is associated with
\begin{equation}\label{eq:lambda_ellipsoid}
\lambda^{(\ell)} = \frac{-\psi^{(\ell)} \pm \sqrt{\left(\psi^{(\ell)} \right)^2 - 4 \varphi^{(\ell)}\varsigma^{(\ell)} }}{2\varphi^{(\ell)}},~\text{s.t. }\begin{cases}
\varphi^{(\ell)} =  \displaystyle\sum_{l=1}^z \left( \frac{d^{(\ell)}_l}{w_l} \right)^2\\
\\
\psi^{(\ell)} = \displaystyle 2 \sum_{l=1}^z \frac{(P^{(\ell-1)}_l - P^{(0)}_l)~ d^{(\ell)}_l}{(w_l)^2}\\
\\
\varsigma^{(\ell)} = -1+\displaystyle\sum_{l=1}^z \left( \frac{P^{(\ell-1)}_l - P^{(0)}_l}{w_l} \right)^2
\end{cases}.
\end{equation}
\end{definition}

\subsubsection{The case of hyper-rhombus}
\noindent Finally, we can also consider the hyper-rhombi for IKD modeling, as shown in Figure B.1c for the rhombus in $\mathbb{R}^2$ centered in $P^{(0)}=(5,5)$. Generically, a hyper-rhombus centered in $P^{(0)}$ and has a set $\Lambda$ defined by:
\begin{equation}
\label{eq:hyper_rombuses_Lambda}
\Lambda = \left\{ (x_1,\ldots,x_z)\in\mathbb{R}^z~|~ 
\displaystyle\sum_{l=1}^z\left| \frac{x_l - P_l^{(0)}}{w_l} \right| \leqslant 1 \right\}.
\end{equation}

Let us consider a (two-dimension) rhombus centered in $P^{(0)}=(P^{(0)}_1,P^{(0)}_2)$ and with semi-axis $2w_1$ and $2w_2$. This rhombus is featured by $\left|\frac{x_1 - P_1^{(0)}}{w_1} \right| + \left|\frac{x_2 - P_2^{(0)}}{w_2} \right|\leqslant 1$, which translates into four constraints ($p=4$): $\frac{x_1- P_1^{(0)}}{w_1} + \frac{x_2- P_2^{(0)}}{w_2} \leqslant 1$, 
$\frac{x_1- P_1^{(0)}}{-w_1} + \frac{x_2- P_2^{(0)}}{w_2} \leqslant 1$, 
$\frac{x_1- P_1^{(0)}}{w_1} + \frac{x_2- P_2^{(0)}}{-w_2} \leqslant 1$, and 
$\frac{x_1- P_1^{(0)}}{-w_1} + \frac{x_2- P_2^{(0)}}{-w_2} \leqslant 1$. These constraints can be rewritten as follows:
$\frac{x_1}{w_1} + \frac{x_2}{w_2} \leqslant 1 + \frac{P_1^{(0)}}{w_1} + \frac{P_2^{(0)}}{w_2}$, 
$\frac{x_1}{-w_1} + \frac{x_2}{w_2} \leqslant 1+ \frac{P_1^{(0)}}{-w_1} + \frac{P_2^{(0)}}{w_2}$, 
$\frac{x_1}{w_1} + \frac{x_2}{-w_2} \leqslant 1+ \frac{P_1^{(0)}}{w_1} + \frac{P_2^{(0)}}{-w_2}$, and 
$\frac{x_1}{-w_1} + \frac{x_2}{-w_2} \leqslant 1+ \frac{P_1^{(0)}}{-w_1} + \frac{P_2^{(0)}}{-w_2}$, respectively. We have rewritten these constraints in the form of $\sum_{l=1}^2 a_{ql} x_l \leqslant b_q$, with $b_q = 1 + \sum_{l=1}^2 a_{ql} P_l^{(0)}$ for $q=1,\ldots,4$. From Equation (\ref{eq:lambda_ell}) we know that the parameter $\lambda$ associated with the $\ell$-th iteration is as follows:

\begin{equation}
    \lambda^{(\ell)}=\min_{\lambda\geqslant 0} \left\{  \lambda_q = \frac{1 + \displaystyle\sum_{l=1}^2 a_{ql} P_l^{(0)} - \displaystyle\sum_{l=1}^2 a_{ql} P_l^{(\ell-1)}}{\displaystyle\sum_{l=1}^2 a_{ql} d_l^{(\ell)}} = \frac{1 - \displaystyle\sum_{l=1}^2 a_{ql} \left(P_l^{(\ell-1)} - P_l^{(0)}  \right)}{\displaystyle\sum_{l=1}^2 a_{ql} d_l^{(\ell)}},~   q=1,\ldots,4 \right\},
\end{equation}

\noindent where $a_{ql} = 1/[\text{sgn}(a_{ql}) w_l]$ (the sign associated with $a_{ql}$ depends on the constraint. Still, it is clear that $|\alpha_{ql}|=1/w_l$ for $l=1,2$ and $q=1,\ldots,4$). In the present case ($\Lambda\in\mathbb{R}^2$), we have
\begin{enumerate}
\item[] $\lambda_1 = \left( 1 - \left( \frac{1}{w_1}(P_1^{(\ell-1)} - P_1^{(0)}) + \frac{1}{w_2}(P_2^{(\ell-1)} - P_2^{(0)})\right)\right)/\left( \frac{1}{w_1}d_1^{(\ell)} + \frac{1}{w_2}d_2^{(\ell)}  \right),$
\item[] $\lambda_2 = \left( 1 - \left( \frac{1}{-w_1}(P_1^{(\ell-1)} - P_1^{(0)}) + \frac{1}{w_2}(P_2^{(\ell-1)} - P_2^{(0)})\right)\right)/\left( \frac{1}{-w_1}d_1^{(\ell)} + \frac{1}{w_2}d_2^{(\ell)}  \right),$
\item[] $\lambda_3 = \left( 1 - \left( \frac{1}{w_1}(P_1^{(\ell-1)} - P_1^{(0)}) + \frac{1}{-w_2}(P_2^{(\ell-1)} - P_2^{(0)})\right)\right)/\left( \frac{1}{w_1}d_1^{(\ell)} + \frac{1}{-w_2}d_2^{(\ell)}  \right),$
\item[] $\lambda_4 = \left( 1 - \left( \frac{1}{-w_1}(P_1^{(\ell-1)} - P_1^{(0)}) + \frac{1}{-w_2}(P_2^{(\ell-1)} - P_2^{(0)})\right)\right)/\left( \frac{1}{-w_1}d_1^{(\ell)} + \frac{1}{-w_2}d_2^{(\ell)}  \right).$
\end{enumerate}
Determining the parameter $\lambda^{(\ell)}$ is identical if $z\geqslant 2$. There are $p=2^z$ constraints to describe the hyper-rhombus in $\mathbb{R}^z$. These constraints can be written as $\sum_{l=1}^z a_{ql} x_l \leqslant b_q$, with $b_q = 1 + \sum_{l=1}^z a_{ql} P_l^{(0)}$ and $a_{ql} = 1/[\pm w_l]$, for $q=1,\ldots,2^z$.

Definition \ref{proposition3} presents the Equation ruling the computation of the parameters $\lambda^{(\ell)}$ for the hyper-rhombi in $\mathbb{R}^z$. 

\begin{definition}[Hyper-rhombus in $\mathbb{R}^z$]\label{proposition3}\normalfont
The hyper-rhombus whose set $\Lambda$ is defined by Equation (\ref{eq:hyper_rombuses_Lambda}) verifies:
\begin{equation}
\lambda^{(\ell)}=\min_{\lambda\geqslant 0} \left\{  \lambda_q = \frac{1 - \displaystyle\sum_{l=1}^z a_{ql} \left(P_l^{(\ell-1)} - P_l^{(0)}  \right)}{\displaystyle\sum_{l=1}^z a_{ql} d_l^{(\ell)}},~   q=1,\ldots,2^z \right\},~\ell=1,\ldots,t,
\end{equation}
where $a_{ql} = 1/\pm w_l$, for $q=1,\ldots,2^z$.
\end{definition}


\section{DEA under the imperfect knowledge of data: Applying the Hit \& Run routine}
\label{subsec: DEA under uncertainty: Applying the Hit Run routine}	

\noindent This section explains how the \gls{HR} routine can be integrated with \gls{DEA} for robust efficiency estimates assessment, compares the proposed method with other alternatives, provides insights on how we can make statistical inference using the efficiency estimates, and makes some additional considerations regarding the integrated approach's robustness.

\subsection{The integrated algorithm HR+DEA}
\noindent Using the input and output matrices associated with the $n$ \glspl{DMU} under evaluation, a benchmarking exercise's main objective consists of estimating the $n$ efficiency scores (or, similarly, distances to the frontier). It is possible if inputs and outputs are known perfectly. Otherwise, in the presence of IKD, a solution can be used to estimate $t$ possible efficiency scores \textit{per} \gls{DMU}, \textit{i.e.}, by randomly generating a $n\times t$ matrix of possible scores, $E$, where $t$ is the number of realizations considered for the unknown values.

Indeed, due to the IKD, unobserved or incorrect input/output variables for a particular \gls{DMU} $k\in J$ can be replaced by a set $\Lambda^k$ in the $\mathbb{R}_+^m \times \mathbb{R}_+^s \times \mathbb{R}_+^v$-Euclidean space. As previously mentioned, this set can assume different shapes, such as hyper-boxes or hyper-ellipsoids. 

In a perfect knowledge situation, $\Lambda^k \longmapsto (x^k,y^k,u^k)$, \textit{i.e.}, the set reduces to a single point in the same Euclidean space. Otherwise, any point within $\Lambda^k$ is a possible candidate to be a potential observation for \gls{DMU} $k$. Since $\Lambda^k$ is a continuous space, there are infinite potential candidates for $k$, and the same does apply for each $j\in J$ concerning $\Lambda^j$. It means that achieving all potential candidates to cover the entire set requires a simulation routine with infinite iterations, $t$, like the one of \gls{HR}; see Algorithm 1 in Appendix A.\footnote{Appendix available at: \url{https://drive.google.com/drive/folders/1jAmKFzz_PWyPKSNTxqO_0mM8BWKWn3-D?usp=sharing}} Nonetheless, given the practical limitations of programming tools, the number of iterations must be finite, $t<+\infty$, although this parameter should be large enough to get satisfactory results. We expect no potential accuracy gains due to the increase of $t$ beyond a sufficiently large threshold (typically, a few thousand).

Before starting the \gls{HR} routine, it is necessary to define $n$ initial points, $(x^{j(0)}, y^{j(0)}, u^{j(0)})$ for each $j\in J$. These initial points must belong to the corresponding set $\Lambda^j$. We can, thus, pick any initial or starting point from $\Lambda^j$. For the sake of simplicity, the centroids are chosen for such a purpose. Using the centroids points, we can thus construct new inputs and outputs matrices, $(X^{(0)}, Y^{(0)}, U^{(0)})$, and estimate the 0-order frontier, $F^{(0)}$, as well as the corresponding efficiency scores, $D^{j(0)}$, for all $j\in J$, \textit{e.g.}, by using Model (\ref{eq:directionalDEA}) or Model (\ref{eq:directionalDEA_corrected}).

Consider a \gls{DMU} $k\in J$, whose efficiency score concerning the frontier $F$ we want to estimate. The \gls{HR} routine imposes that, for each iteration, $\ell$, $n$ new points $(x^{j(\ell)}, y^{j(\ell)}, u^{j(\ell)})\in \Lambda^j,~j\in J$, must be estimated. There are as many possible frontiers, $F^{(\ell)}$, as many iterations, $t$, we have defined at the beginning. Accordingly, there are, at most, $t$ possible efficiency estimates for \gls{DMU} $k$. To estimate those points, we propose the following procedure. For each iteration $\ell$: 

\begin{enumerate}
    \item Define a vector $d^{(\ell)}_j = (d^{x(\ell)}_j, d^{y(\ell)}_j, d^{u(\ell)}_j) $ for all $j\in J$ and such that $\| d^{(\ell)}_j \|_2 = 1$. This vector's entries are randomly selected (with replacement) from the uniform distribution bounded by --1 and +1.
    
    \item Following the path defined by $d^{(\ell)}_j$, estimate the distance $\lambda_j^{(\ell)}$ \textit{per} \gls{DMU} $j\in J$, according to the appropriate Equation provided in Subsection \ref{subsec: Different ways of modeling uncertainty}. This one is, thus, the distance between the point $(x^{j(\ell-1)}, y^{j(\ell-1)}, u^{j(\ell-1)})\in \Lambda^j,~j\in J$, that was achieved in the previous iteration, and the boundaries of $\Lambda^j$.
    
    \item Construct the line $L_j^{(\ell)}$ for each $j\in J$: 

        \begin{equation}
        L_j^{(\ell)} = \Lambda^j \cap \left\{ 
        \begin{pmatrix} 
        x^{j(\ell)} \\
        y^{j(\ell)} \\
        u^{j(\ell)}
        \end{pmatrix}^\top\in\mathbb{R}_+^{m+s+v},~ e \in [0,1]~|~ \begin{pmatrix} 
        x^{j(\ell)} \\
        y^{j(\ell)} \\
        u^{j(\ell)}
        \end{pmatrix}^\top = \begin{pmatrix} 
        x^{j(\ell-1)} \\
        y^{j(\ell-1)} \\
        u^{j(\ell-1)}
        \end{pmatrix}^\top + \lambda_j^{(\ell)} \begin{pmatrix} 
        d^{x(\ell)}_j \\
        d^{y(\ell)}_j \\
        d^{u(\ell)}_j
        \end{pmatrix}^\top e \right\}  ;
        \end{equation}

    \item Randomly and with reposition, generate $n$ quantities $\xi_j^{(\ell)}$ by following a predefined probability density function, typically uniform. The value of $\xi_j^{(\ell)}$ should be bounded by 0 and 1.
    
    \item Using the previous points $(x^{j(\ell-1)}, y^{j(\ell-1)}, u^{j(\ell-1)})$, the parameters $\xi_j^{(\ell)}\sim f_j$, the vectors $d^{(\ell)}_j$, and the distances $\lambda_j^{(\ell)}$, estimate the set of $n$ new points  $$( x^{j(\ell)}, y^{j(\ell)}, u^{j(\ell)}) =  ( x^{j(\ell-1)}, y^{j(\ell-1)}, u^{j(\ell-1)})  + \lambda_j^{(\ell)} d^{(\ell)}_j \xi_j^{(\ell)} ,~j\in J.$$
\end{enumerate}

Once the set of new points have been estimated, we can then construct the new matrices, $(X^{(\ell)}, Y^{(\ell)}, U^{(\ell)})$, as well as the new frontier, $F^{(\ell)}$, and estimate the distance of $(x^{k(\ell)}, y^{k(\ell)}, u^{k(\ell)})$ concerning $F^{(\ell)}$, \textit{i.e.}, $D^{k(\ell)}$. Figure B.3 in Appendix B presents an illustrative example of the \gls{HR}+\gls{DEA} integrated approach.\footnote{Appendix available at: \url{https://drive.google.com/drive/folders/1jAmKFzz_PWyPKSNTxqO_0mM8BWKWn3-D?usp=sharing}}

Note that there are $t$ possibly different estimates of $D^{k(\ell)}$ because $\xi_j^{(\ell)}$ depends on $j\in J$ and the probability density function, $f_j$, which can also be different for each \gls{DMU}. Algorithm 2 (in Appendix A) synthesizes the previous details on integrating the \gls{HR} routine with \gls{DEA}.

\subsection{Some interesting results about the integrated algorithm}
\noindent Given the formulation of the integrated approach, the following propositions hold. Proposition \ref{prop: simplification} states that, under the perfect knowledge case for all \glspl{DMU}, the integrated approach \gls{HR}+\gls{DEA} delivers the exact same efficiency scores (or distances) as \gls{DEA}. Therefore, \gls{DEA} applied to perfect knowledge of data is a particular case of this integrated approach.

\begin{proposition}\label{prop: simplification}\normalfont
The \gls{HR}+\gls{DEA} produces the same frontier (and, hence, the same efficiency estimates) as \gls{DEA} if $\Lambda^j \longmapsto (x^j,y^j)$ for all $j\in J$, regardless of the number of iterations, $t$.
\end{proposition}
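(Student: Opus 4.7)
The idea is to track what happens, iteration by iteration, when each set $\Lambda^j$ collapses to the singleton $\{(x^j,y^j,u^j)\}$, and argue that the sampled point of each DMU is stuck at that singleton throughout the HR procedure. This forces every iterated matrix $(X^{(\ell)},Y^{(\ell)},U^{(\ell)})$ to coincide with $(X,Y,U)$, and hence every frontier $F^{(\ell)}$ and every distance $D^{k(\ell)}$ to coincide with the plain DEA values.

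\emph{Step 1 (Initialization).} By the construction of the HR+DEA algorithm, the starting point for each DMU $j$ is chosen inside $\Lambda^j$ (the centroid is used in the paper). When $\Lambda^j=\{(x^j,y^j,u^j)\}$, the centroid is this very point, so $(x^{j(0)},y^{j(0)},u^{j(0)}) = (x^j,y^j,u^j)$ for every $j\in J$. Consequently, $(X^{(0)},Y^{(0)},U^{(0)})=(X,Y,U)$, which already implies $F^{(0)}=F$ and $D^{j(0)}$ equals the standard DEA distance of DMU $j$ for all $j\in J$.

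\emph{Step 2 (Iterative stagnation).} The update rule is
\begin{equation*}
(x^{j(\ell)},y^{j(\ell)},u^{j(\ell)}) = (x^{j(\ell-1)},y^{j(\ell-1)},u^{j(\ell-1)}) + \lambda_j^{(\ell)}\, d_j^{(\ell)}\, \xi_j^{(\ell)} .
\end{equation*}
I show by induction on $\ell$ that $\lambda_j^{(\ell)}=0$, so the point never moves. Indeed, specializing any of the shape formulas in Definitions~\ref{proposition1}--\ref{proposition3} to the degenerate case where the set reduces to one point gives $\lambda=0$: in Definition~\ref{proposition1} one has $\underline{x}_l=\overline{x}_l=P_l^{(\ell-1)}$, so both numerators in~(\ref{eq:Lambdas_box}) vanish; in Definition~\ref{proposition3}, with $w_l\to 0$, the only admissible solution is $\lambda=0$ because any nonzero step leaves $\Lambda$; the hyper-ellipsoid case follows by the same limiting argument. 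Combined with the inductive hypothesis $(x^{j(\ell-1)},y^{j(\ell-1)},u^{j(\ell-1)}) = (x^j,y^j,u^j)$, the update yields $(x^{j(\ell)},y^{j(\ell)},u^{j(\ell)}) = (x^j,y^j,u^j)$.

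\emph{Step 3 (Identity of frontiers and scores).} Since the iterated data matrices all equal $(X,Y,U)$ for every $\ell=0,1,\ldots,t$, the DEA program solved inside the HR loop is literally Model~(\ref{eq:directionalDEA}) (or~(\ref{eq:directionalDEA_corrected}) in the presence of undesirable outputs) on the original data. Hence $F^{(\ell)}=F$ and $D^{k(\ell)}=D^{\ast k}$ for every $k\in J$ and every $\ell$, independently of $t$ and of the random draws of $d_j^{(\ell)}$ and $\xi_j^{(\ell)}$.

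\emph{Main obstacle.} The only subtlety is justifying $\lambda_j^{(\ell)}=0$ uniformly across the shape models of Subsection~\ref{subsec: Different ways of modeling uncertainty}: the hyper-ellipsoid and hyper-rhombus formulas divide by $w_l$, so the singleton limit must be handled by either (i) arguing from the geometric definition that no direction can leave a one-point set while remaining in it, or (ii) taking the appropriate $w_l\to 0$ limit. Either route is short, so once this point is clarified the rest of the argument is a direct induction.
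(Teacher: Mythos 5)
Your argument is correct and follows essentially the same route as the paper's own proof, which is a one-line chain of implications: the singleton assumption forces $\lambda_j^{(\ell)}=0$, hence the sampled points, the frontier, and the distances are constant across all iterations. Your added care in justifying $\lambda_j^{(\ell)}=0$ for the degenerate shapes is a reasonable elaboration of a step the paper simply asserts as straightforward, but it does not change the structure of the argument.
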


\begin{proof}
\small It is straightforward to conclude that $(\forall j\in J,~\Lambda^j \longmapsto (x^j,y^j)) \Longrightarrow \lambda_j^{(\ell)}=0 \Longrightarrow ( x^{j(\ell)}, y^{j(\ell)}) =  ( x^{j(\ell-1)}, y^{j(\ell-1)})=\cdots=(x^j,y^j)\Longrightarrow F^{(\ell-1)}=F^{(\ell)}\Longrightarrow D^{j(\ell-1)}=D^{j(\ell)}=\cdots=D^{j(0)},~\forall \ell=1,\ldots,t$.
\end{proof}

Proposition \ref{prop: interval DEA} states that the interval \gls{DEA} approach, as proposed by \cite{DESPOTIS200224}, is a particular case of the proposed integrated approach \gls{HR}+\gls{DEA}. In that case, we have to impose $t\longmapsto+\infty$ to achieve the efficiency distributions' extremes. Still, in practice it is not possible to impose an infinite loop for the \gls{HR} procedure. Even if it would be possible, we often truncate those distributions; thus, the limits of interval \gls{DEA} will rarely be obtained in \gls{HR}+\gls{DEA}. It implies that the interval's width for efficiency estimates derived by \gls{HR}+\gls{DEA} is smaller than (or equal to) the width achieved by interval \gls{DEA}. 

\begin{proposition}\label{prop: interval DEA}\normalfont
If $t\longmapsto +\infty$ and $\Lambda^j$ is modeled using hyper-boxes for all $j\in J$, then the extremes of the distributions associated with $E^k,~k\in J,$ are equal to the ones derived using interval \gls{DEA}.
\end{proposition}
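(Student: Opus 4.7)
The approach is to exhibit a correspondence between the interval \gls{DEA} extreme scenarios $\mathscr{S}^W,\mathscr{S}^B$ and extreme corners of the joint hyper-box $\prod_{j\in J}\Lambda^j$, and then to argue that as $t\to+\infty$ the \gls{HR} chain explores these corners (arbitrarily closely) with positive asymptotic frequency, so that the empirical extremes of $E^k$ converge to the interval \gls{DEA} bounds $\underline{D}^{\ast k}$ and $\overline{D}^{\ast k}$.

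\textbf{Step 1 (monotonicity of the directional distance).} First I would establish that the optimal value $D^{\ast k}$ of Model~(\ref{eq:directionalDEA}) is monotone in the data: increasing $x^k_i$ or decreasing $y^k_r$ cannot decrease $D^{\ast k}$ (the evaluated unit gets worse), while for $j\ne k$ decreasing $x^j_i$ or increasing $y^j_r$ can only push the frontier further from $(x^k,y^k)$, hence cannot decrease $D^{\ast k}$ either. Both claims follow directly from inspecting the linear program: in the first case the right-hand sides are relaxed/tightened in favor of larger $D^k$; in the second case the feasible region for the multipliers $\mu^k$ is enlarged in directions that dominate $k$. Continuity of $D^{\ast k}$ in the data is standard for linear programs with bounded feasible set. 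The same argument is then repeated with the roles reversed to characterize the minimizer.

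\textbf{Step 2 (identification of extremal corners).} By monotonicity, the supremum of $D^{\ast k}$ over $\prod_{j\in J}\Lambda^j$ is attained at the vertex $(\overline{x}^k,\underline{y}^k)$ for unit $k$ combined with $(\underline{x}^j,\overline{y}^j)$ for every $j\ne k$, which is exactly the configuration $\mathscr{S}^B$ that defines $\overline{D}^{\ast k}$ in interval \gls{DEA}. Symmetrically, the infimum of $D^{\ast k}$ equals $\underline{D}^{\ast k}$ and is attained at the opposite vertex associated with $\mathscr{S}^W$. At this point interval \gls{DEA} coincides with the essential supremum/infimum of $D^{\ast k}$ over the admissible joint set.

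\textbf{Step 3 (density of the \gls{HR} trajectory).} The \gls{HR} chain on the bounded convex set $\prod_{j\in J}\Lambda^j$ is irreducible with uniform invariant distribution (Smith, 1984; B\'elisle et al., 1993). Hence the sequence $\{(x^{j(\ell)},y^{j(\ell)})\}_{\ell\ge 1}$ is almost surely dense in $\prod_{j\in J}\Lambda^j$; in particular, for every $\varepsilon>0$ the trajectory enters any $\varepsilon$-neighborhood of the two corners identified in Step~2 infinitely often. Using continuity of $D^{\ast k}$ in the data, this yields
\[
\liminf_{t\to\infty}\min_{1\le\ell\le t} D^{k(\ell)} \;=\; \underline{D}^{\ast k}, \qquad \limsup_{t\to\infty}\max_{1\le\ell\le t} D^{k(\ell)} \;=\; \overline{D}^{\ast k},
\]
with probability one. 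Combining with Step~2 (which gives the opposite inequalities, since every $D^{k(\ell)}$ lies in $[\underline{D}^{\ast k},\overline{D}^{\ast k}]$) delivers the claimed equality of extremes.

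\textbf{Main obstacle.} The delicate point is Step~3: one must justify that the product-space \gls{HR} construction used in Algorithm~2 (which updates each \gls{DMU}'s coordinates along an independent direction per iteration) still produces a trajectory dense in the Cartesian product $\prod_{j\in J}\Lambda^j$, rather than only within each factor separately. For hyper-boxes this is straightforward because the product of boxes is itself a box and the per-\gls{DMU} updates are independent, so the joint chain inherits irreducibility and a uniform stationary law on the product; but this needs to be stated carefully to avoid the possible pitfall of a reducible joint chain. Continuity of $D^{\ast k}$ at the extreme corners (where the linear program may have multiple optima) should also be invoked explicitly, since strict continuity of optimal values under data perturbation requires only that the feasible set be nonempty and bounded, which holds throughout $\prod_{j\in J}\Lambda^j$.
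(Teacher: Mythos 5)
Your proposal is correct in substance and reaches the same conclusion, but it is a genuinely fuller argument than the one the paper gives: the paper's proof is essentially a two-sentence assertion that, as $t\longmapsto+\infty$, the maximum and minimum of the sorted list $\mathscr{E}^{k(t)}$ ``correspond to'' the extreme corner configurations of the hyper-boxes, with no justification of why those corners are the extremizers, why the \gls{HR} chain reaches them in the limit, or why the limit can be passed through the linear program. Your three steps supply precisely those missing ingredients: monotonicity of the directional-distance value in the data (to identify which vertices of $\prod_{j\in J}\Lambda^j$ maximize and minimize $D^{\ast k}$), irreducibility and uniform ergodicity of the joint \gls{HR} chain on the product of boxes (so the trajectory is almost surely dense and visits every neighborhood of those vertices), and continuity of the LP optimal value (to convert density into convergence of the empirical extremes). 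The obstacle you flag about the product-space chain is exactly the point the paper silently assumes. One caution on Step~2: your identification of the maximizing configuration with ``exactly $\mathscr{S}^B$'' is loose, because $\mathscr{S}^B$ as defined in the paper places \emph{every} \gls{DMU}, including $k$ itself, at its best in the reference set, whereas an \gls{HR} iteration forces a single realization of $k$ that serves simultaneously as the evaluated point and as a member of the reference set; whether worst-$k$-plus-best-others yields the same optimal value as evaluating worst-$k$ against the frontier of $\mathscr{S}^B$ (which contains best-$k$) is a genuine gap that neither you nor the paper closes. Note also that the paper's own proof pairs $\max\mathscr{E}^{k(t)}$ with the frontier of $\mathscr{S}^W$, which contradicts its Section~3 definition of $\overline{D}^{\ast k}$ as the projection of $(\overline{x}^k,\underline{y}^k)$ onto the $\mathscr{S}^B$ frontier; your monotonicity-based identification is the internally consistent one.
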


\begin{proof}
\small Let $\mathscr{E}^{k(t)}=\{ D^{j[1]},\ldots, D^{j[\ell]},\ldots, D^{j[t]}\}$ be the list of sorted efficiency scores (or distances) achieved using $t$ loops, such that $D^{j[1]}\leqslant\cdots\leqslant D^{j[\ell]}\leqslant\cdots\leqslant D^{j[t]}$. If $\Lambda^j$ is modeled using hyper-boxes for all $j\in J$, then $\max \mathscr{E}^{k(t)}~(=D^{k[t]})$ for $t\longmapsto+\infty$ corresponds to project $(\overline{x}^k, \underline{y}^k)$ in the frontier constructed using $\mathscr{S}^W=\{(x^j,y^j)\in \mathbb{R}_+^m\times\mathbb{R}_+^s ~|~ x^j = \overline{x}^j,~ y^j=\underline{y}^j,~j\in J\}$ (worst scenario of interval \gls{DEA}). Likewise, $\min \mathscr{E}^{k(t)}~(=D^{k[1]})$ is the distance of $(\underline{x}^k, \overline{y}^k)$ regarding $\mathscr{S}^B=\{(x^j,y^j)\in \mathbb{R}_+^m\times\mathbb{R}_+^s ~|~ x^j = \underline{x}^j,~ y^j=\overline{y}^j,~j\in J\}$ (best scenario of interval \gls{DEA}).
\end{proof}

\subsection{On comparing the integrated approach with other alternatives}\label{subsec:On comparing the Hit Run approach with other alternatives}
\noindent As previously mentioned, there are few alternatives to account for data imperfect knowledge, either in analyses based on \gls{DEA} or in any statistical analysis. Regarding \gls{DEA}, imputation remains one of the preferable alternatives given the dimensionality problem that affects non-parametric benchmarking models. To compare our alternative with others, we ran a Monte-Carlo simulation with 150 iterations. The chosen alternatives were: (\textit{i}) mean imputation, (\textit{ii}) hot-deck imputation, (\textit{iii}) interval \gls{DEA}, (\textit{iv}) regression using linear functions, and (\textit{v}) \gls{HR} with $t=100$ iterations using (\textit{v\textsubscript{1}}) hyper-boxes, (\textit{v\textsubscript{2}}) hyper-ellipsoids, and (\textit{v\textsubscript{3}}) hyper-rhombuses to model data imperfect knowledge.

For each Monte-Carlo iteration, we generated 300 \glspl{DMU} consuming two inputs, $x_1\sim 10+5\cdot\text{uniform}(0,1)$ and $x_2\sim 20+10\cdot\mathcal{N}^+ (0,1)$,\footnote{$\mathcal{N}(\rho,\sigma)$ stands for the Gaussian probability density function with mean $\rho$ and standard deviation $\sigma$. $\mathcal{N}^+ (\rho,\sigma)$ represents the strictly positive observations of the same density function.} and producing one output, $y_1$, which is a function of both $x_1$ and $x_2$. Three scenarios were constructed: 

(I) $y_1 \sim 5 \cdot (x_1)^{0.5}\cdot (x_2)^{0.7}\cdot \text{uniform}(0,1)$; 

(II) $y_1 \sim (2\cdot x_1 + 4\cdot x_2 + x_1\cdot  x_2)\cdot \mathcal{N}^+ (0,\frac{1}{3})$; and 

(III) $\log y_1 \sim \frac{1}{4} + \frac{1}{5}\cdot \log x_1 + \log x_2 + \frac{4}{10}\cdot \log^2 x_1 + \frac{1}{10}\cdot \log^2 x_2 + \frac{3}{10}\cdot \log x_1 \cdot \log x_2 + \log \text{uniform}(0,1)$. 

\noindent Then, we estimated the efficiency scores of the whole sample using the output-oriented \gls{DEA} (Model (\ref{eq:directionalDEA}) with $\delta^k=(0,y^k)$). 

Eighty gaps were randomly introduced into the dataset. The intervals used for the alternatives (\textit{iii}) and (\textit{v}) were randomly generated: $w_i^j = x_i^j + 10^{O_i -1}\rho_i,~i=1,2,$ and $w_r^j = y_r^j + 10^{O_r -1}\varpi_r,~r=1,$ where $O$ is the magnitude of the variable and $\rho_i,\varpi_r\sim\text{uniform}(0,1)$. 

We re-estimated the efficiency scores under the presence of \gls{IKD} through the eight imputation techniques above. To compare the former with the original efficiency scores, we used the following statistics: 

\begin{itemize}
    \item[$-$] Pearson's correlation coefficient (the larger, the better);
    \item[$-$] Kendall's correlation coefficient (the larger, the better);
    \item[$-$] Mean absolute error (MAE) (the smaller, the better);
    \item[$-$] Mean signed difference (MSD) (the smaller, the better).
\end{itemize}

Table \ref{tab:results_MonteCarlo} presents the main results obtained for the three scenarios and the eight alternatives, using two correlation coefficients and two error metrics. It is immediate to conclude that neither regression nor hot-deck imputations are good alternatives, as they minimize both correlation coefficients and maximize both error metrics (in absolute value). Using a mathematical model to regress data when the production function is unknown is likely to produce biased estimates to substitute imperfectly known data. A similar argument can be used for the imputation based on identical \glspl{DMU} (hot-deck). We also observe that the three \gls{HR} alternatives produce similar outcomes and are equally good at modeling \gls{IKD}. Furthermore, we verify that correlations (resp. error metrics) are larger (resp. smaller) in \gls{HR} compared to the mean imputation or the interval \gls{DEA}. However, it may not be sufficient to justify the adoption of an alternative like \gls{HR}, especially when the other two alternatives are simpler. We remark, though, that \gls{HR} can introduce a stochastic nature into \gls{DEA}, allowing statistical inference, as detailed in the next subsections. 

\begin{table}[t]
\begin{minipage}{\columnwidth}
\def\arraystretch{1.5}
  \centering
  \caption{On comparing the Hit \& Run approach with other alternatives.}
  \resizebox{\columnwidth}{!}{
    \begin{tabular}{ccrrrrccrrrr}
    \hline
    Correlation &  & Sc. (I) & Sc. (II) & Sc. (III) & Mean & Error &       & Sc. (I) & Sc. (II) & Sc. (III) & Mean \\
    \hline
    \multirow{7}[0]{*}{Pearson} & (\textit{i})   & 0.9564 & 0.9684 & 0.9793 & 0.9681 & \multirow{7}[0]{*}{MAE~\footnote{Mean absolute error}} & (\textit{i})   & 1.5197 & 5.2629 & 1.7957 & 2.8594 \\
          & (\textit{ii})  & 0.6620 & 0.6931 & 0.6611 & 0.6721 &       & (\textit{ii})  & 11.2857 & 18.4029 & 15.8846 & 15.1910 \\
          & (\textit{iii}) & 0.9404 & 0.9449 & 0.9528 & 0.9460 &       & (\textit{iii}) & 1.3318 & 4.7873 & 2.3864 & 2.8351 \\
          & (\textit{iv})  & 0.7630 & 0.7915 & 0.1055 & 0.5533 &       & (\textit{iv})  & 3.5707 & 11.5670 & 1196.7238 & 403.9538 \\
          & (\textit{v\textsubscript{1}}) & 0.9688 & 0.9658 & 0.9750 & 0.9699 &       & (\textit{v\textsubscript{1}}) & 1.0651 & 4.4800 & 1.3666 & 2.3039 \\
          & (\textit{v\textsubscript{2}}) & 0.9790 & 0.9737 & 0.9819 & 0.9782 &       & (\textit{v\textsubscript{2}}) & 1.0228 & 4.3977 & 1.2257 & 2.2154 \\
          & (\textit{v\textsubscript{3}}) & 0.9758 & 0.9777 & 0.9806 & 0.9780 &       & (\textit{v\textsubscript{3}}) & 1.0320 & 4.3235 & 1.2641 & 2.2065 \\
          \hline
    \multirow{7}[0]{*}{Kendall} & (\textit{i})   & 0.9945 & 0.9964 & 0.9966 & 0.9959 & \multirow{7}[0]{*}{MSD~\footnote{Mean signed difference}} & (\textit{i})   & -0.3789 & -2.7731 & -0.3010 & -1.1510 \\
          & (\textit{ii})  & 0.6113 & 0.6123 & 0.6091 & 0.6109 &       & (\textit{ii})  & 5.0912 & -2.2049 & 2.1883 & 1.6915 \\
          & (\textit{iii}) & 0.9946 & 0.9965 & 0.9969 & 0.9960 &       & (\textit{iii}) & -0.7558 & -3.9770 & -0.1579 & -1.6303 \\
          & (\textit{iv})  & 0.7014 & 0.6999 & 0.5554 & 0.6522 &       & (\textit{iv})  & -2.3183 & -8.0147 & 1194.9248 & 394.8640 \\
          & (\textit{v\textsubscript{1}}) & 0.9950 & 0.9966 & 0.9970 & 0.9962 &       & (\textit{v\textsubscript{1}}) & -0.9081 & -4.1164 & -0.9186 & -1.9810 \\
          & (\textit{v\textsubscript{2}}) & 0.9952 & 0.9967 & 0.9971 & 0.9963 &       & (\textit{v\textsubscript{2}}) & -0.8678 & -4.0628 & -0.8876 & -1.9394 \\
          & (\textit{v\textsubscript{3}}) & 0.9952 & 0.9967 & 0.9971 & 0.9963 &       & (\textit{v\textsubscript{3}}) & -0.8894 & -4.0389 & -0.8936 & -1.9406 \\
          \hline
    \end{tabular}%
    }
  \label{tab:results_MonteCarlo}%
\end{minipage}
\end{table}%

\subsection{Efficiency as a stochastic variable}\label{subsec:Efficiency as a stochastic variable}
\noindent Because of \gls{IKD}, the efficiency cannot be deterministic. Rather, it is necessarily stochastic. Let us assume that we can fit our estimates of $D^{k}$ to a probability density function from a well-known family of densities, \textit{e.g.}, Gaussian, Weibull, and \textit{t}-location scale, to name a few. From a finite set of possible functions, we may sort them (concerning their capacity of fitting to the empirical data) using the Bayesian information criterion, the Akaike information criterion, and the log-likelihood. The goodness-of-fit can be easily tested using the non-parametric Kolmogorov-Smirnov test, for instance. Let $f_{D^k} (\mathscr{d})$ be the probability density function associated with the (stochastic) distance of \gls{DMU} $k$ to the frontier $F$. The expected value of $D^k$ is as follows:

\begin{definition}[Expected value of $D^k$]\normalfont If $\Omega^k$ is the domain of $D^k$, \textit{i.e.}, $\Omega^k=[\min_\ell D^{k(\ell)}, \max_\ell D^{k(\ell)}]$, the expected value of these distances to the frontier is as follows:
\begin{equation}\label{eq:exp_D^k}
    \mathbb{E}(D^k) = \int_{\Omega^k} \mathscr{d}~ f_{D^k} (\mathscr{d}) ~ d\mathscr{d} .
\end{equation}

\end{definition}

\begin{example}[When $D^k\sim \text{Beta}(\alpha,\beta;q_1,q_2)$ distribution with shape parameters $\alpha,~\beta$]\label{example:5.1 Beta}\normalfont Suppose that the distance of \gls{DMU} $k$ to the frontier is a stochastic variable following a Beta distribution. Note that the Beta distribution considered here has support in $\mathscr{d}\in[q_1,q_2]$. In that case, should $\alpha$ and $\beta$ be two non-negative shape parameters, then \citep{Nadarajah2006}:
\begin{equation}
    f_{D^k} (\mathscr{d}) = \frac{1}{(q_2 - q_1)B(\alpha,\beta)}\left( \frac{\mathscr{d} - q_1}{q_2 - q_1}  \right)^{\alpha-1} \left( 1 -  \frac{\mathscr{d} - q_1}{q_2 - q_1}  \right)^{\beta-1},
\end{equation}
\noindent being $B(\alpha,\beta)=\frac{\Gamma(\alpha)\Gamma(\beta)}{\Gamma(\alpha + \beta)}$ the Euler's Beta function and $\Gamma$ the gamma function. The indefinite integral of $\mathscr{d}~ f_{D^k} (\mathscr{d})$ as in Equation (\ref{eq:exp_D^k}) is as follows:

\begin{equation}
\begin{split}
    I(\mathscr{d},q_1,q_2;\alpha,\beta) =& \int \mathscr{d}~ f_{D^k} (\mathscr{d}) ~ d\mathscr{d} =\\ =&\frac{\displaystyle\left( \frac{q_1 - \mathscr{d}}{q_1 - q_2} \right)^\alpha}{\alpha\beta B(\alpha,\beta)}  \left[
    (\beta q_1 + \alpha q_2)~{}_2 F_1 \left(\alpha;-\beta;1+\alpha;\frac{q_1 - \mathscr{d}}{q_1 - q_2}\right) - \alpha q_2 \left( \frac{\mathscr{d} - q_2}{q_1 - q_2} \right)^\beta \right],
\end{split}
\end{equation}

\noindent being ${}_2 F_1$ the Gauss hypergeometric function. Therefore, the expected value of $D^k$ is simply:

\begin{equation}
    \mathbb{E}(D^k) = I\left(\max_\ell D^{k(\ell)},q_1,q_2;\alpha,\beta\right) - I\left(\min_\ell D^{k(\ell)},q_1,q_2;\alpha,\beta\right),
\end{equation}

\noindent as a result of the Newton-Leibniz axiom (\textit{i.e.}, the second fundamental theorem of calculus). The definite integral in the domain of the Beta distribution gives us the following output:
\begin{equation}
    \mathbb{E}(D^k) = \int_{q_1}^{q_2} \mathscr{d}~ f_{D^k} (\mathscr{d}) ~ d\mathscr{d} = \frac{\beta q_1 + \alpha q_2}{B(\alpha,\beta)} \frac{\Gamma(\alpha)\Gamma(\beta)}{\Gamma(1+\alpha+\beta)}.
\end{equation}

Naturally, if $q_1 = 0,~q_2 = 1$, $\max_\ell D^{k(\ell)}=1$, and $\min_\ell D^{k(\ell)}=0$, we have $\mathbb{E}(D^k) = \frac{\alpha}{\alpha + \beta}$, which is the well-known expected value of a Beta distribution with shape parameters $\alpha,\beta$.
\end{example}

\begin{remark}\label{remark:Beta_to_Gauss}\normalfont
The Taylor's expansion of the Beta distribution shows that, for sufficiently large shape parameters such that $(\alpha + 1)/(\alpha -1 )\approx 1$ and $(\beta + 1)/(\beta -1 )\approx 1$, then the Beta distribution can be approximated by a Gaussian distribution, $\mathcal{N}$, with mean $\rho$ and standard deviation $\sigma$ \citep{Peizer68_1}:
\begin{equation}
    X\sim \text{Beta}(\alpha,\beta);~Y\sim \mathcal{N}(\rho, \sigma)~\text{and}~\alpha,\beta\gg 0 \Longrightarrow X \approx Y,
\end{equation}
\noindent such that:
\begin{equation}\label{eq:Beta_to_Normal}
    \begin{cases}
    \rho = \displaystyle\frac{\alpha}{\alpha+\beta},~\text{and}\\
    \sigma = \displaystyle\sqrt{\frac{\alpha\beta}{(\alpha+\beta)^2 (\alpha+\beta +1)}}.
    \end{cases}
\end{equation}

Of course, the expected value of the Gaussian distribution is equal to $\rho$, which should take the form mentioned in Equation (\ref{eq:Beta_to_Normal}).
\end{remark}

\vspace{.25cm}

Sometimes an efficiency score, rather than the distance to the frontier, is desirable. If the distance $D^k$ would be deterministic, we could derive a deterministic efficiency score, $\theta^k$, as follows:

\begin{definition}[(Deterministic) Efficiency score, $\theta^k$]\normalfont We may define efficiency as the relationship between the consumed inputs and the produced outputs of a \gls{DMU}. More precisely, efficiency measures the relationship between (optimal) targets and observations. Mathematically, the (deterministic) efficiency score of \gls{DMU} $k$ can be estimated as in \cite{Portela06}, \cite{Ferreira2016c}, and \cite{Ferreira2017b}:

\begin{equation}\label{eq:theta}
    \theta^k = \frac{\left( \displaystyle \prod_{i=1}^m \frac{(x_i^k)^\star}{x_i^k} \right)^{\frac{1}{m}}}{\left( \displaystyle \prod_{r=1}^s \frac{(y_r^k)^\star}{y_r^k} \right)^{\frac{1}{s}}}
    =\frac{\left( \displaystyle \prod_{i=1}^m \frac{x_i^k - \delta_i^x\cdot  D^k}{x_i^k} \right)^{\frac{1}{m}}}{\left( \displaystyle \prod_{r=1}^s \frac{y_r^k + \delta_r^y\cdot  D^k}{y_r^k} \right)^{\frac{1}{s}}},
\end{equation}

\noindent where the numerators $(x_i^k)^\star = x_i^k - \delta_i^x\cdot  D^k ~ (\leqslant x_i^k)$ and $(y_r^k)^\star = y_r^k + \delta_r^y\cdot  D^k ~ (\geqslant y_r^k)$, for $i=1,\ldots,m$ and $r=1,\ldots,s$, respectively, are the targets of inputs and outputs.
\end{definition}

\begin{remark}\normalfont
In line with \cite{Chambers96, Chambers98}, it usual to fix $(\delta_i^x,\delta_r^y,\delta_h^u)=(x_i^k,y_r^k,u_h^k)$ for any $i,r,h$. In that case, Equation (\ref{eq:theta}) simplifies to
\begin{equation}\label{eq:theta_simpl}
    \theta^k =  \frac{1-D^k}{1+D^k}.
\end{equation}
\end{remark}

From the simulations of the integrated approach, we cannot estimate the expected value of efficiency, $\mathbb{E}(\Theta^k)$, by simply replacing $D^k$ by $\mathbb{E}(D^k)$ in Equations (\ref{eq:theta}-\ref{eq:theta_simpl}). It is because $D^k$ is a stochastic variable with density $f_{D^k}$. We detail below how to estimate $\mathbb{E}(\Theta^k)$ based on the density of $\Theta^k = \frac{1-D^k}{1+D^k}\sim f_{\Theta^k} (\theta^k)$, which should depend on $f_{D^k}$.

If $W = X + Y$ is a random continuous variable resulting from the summation of two random continuous and independent variables, $X$ and $Y$, its probability density function is the convolution of the two densities:\footnote{Here and in the next few Equations, $X,~Y,$ and $U$ have nothing to do with the matrices of inputs, desirable outputs, and undesirable outputs, \textit{i.e.}, the observations plagued with \gls{IKD}. They are just random variables with densities.}

\begin{equation}
    f_W (w) = f_X (x)\ast f_Y (y) = \int_{-\infty}^{+\infty} f_X (x) f_Y (w -x)~dx,
\end{equation}

\noindent where $\ast$ denotes convolution. If $X$ is a constant (equal to 1 in the present case), its density is the Dirac's delta function:

\begin{equation}
    f_X (x-1) = \begin{cases}
    +\infty,~\text{if}~x=1\\
    0,~~~~~\text{otherwise.}
    \end{cases}
\end{equation}

\noindent In this case, the convolution between $f_Y (y)$ and $f_X (x-1)$ is $f_Y (y - 1)$ \citep{Schwartz1950, Schwartz1951}. In the same vein, if $Z = X - Y$ is a random continuous variable resulting from the subtraction of two random continuous and independent variables, we have:

\begin{equation}\label{eq:Z=X-Y pdf}
    f_Z (z) = f_X (x) \ast f_{-Y} (y) = \int_{-\infty}^{+\infty} f_X (x) f_{-Y} (z -x)~dx = \int_{-\infty}^{+\infty} f_X (x) f_{Y} (x - z)~dx.
\end{equation}

\noindent Being $X=1$ a constant, the convolution between $f_{-Y} (y)$ and $f_X (x - 1)$ is $f_{-Y} (y - 1) = f_Y (1 - y)$. From Equation (\ref{eq:theta_simpl}), we want to estimate the density of the random variable $U = Z(X,Y)/W(X,Y)$, such that $X \equiv 1$ and $Y \equiv D^k$. Following \cite{Fieller1932} and \cite{curtiss1941}, if $Z$ and $W$ are independent, then $U=Z/W \sim f_U (u)$, with:

\begin{equation}\label{eq:pdf_theta}
    f_U (u) = \int_{-\infty}^{+\infty} |z| f_W (uz) f_Z (z) ~dz = \int_{-\infty}^{+\infty} |z| f_Y (1 - uz) f_Y (z-1) ~dz.
\end{equation}

Because of Equations (\ref{eq:theta_simpl}) and (\ref{eq:pdf_theta}), the following Definition holds:

\begin{definition}[Probability density function of $\Theta^k$]\normalfont The density of $\Theta^k = \frac{1-D^k}{1+D^k}$ depends on the density of $D^k$ as follows:

\begin{equation}\label{eq:pdf_theta2}
    f_{\Theta^k} (\theta) = \int_{-\infty}^{+\infty} |\eta|~ f_{D^k} (1 - \theta \eta) ~ f_{D^k} (\eta - 1) ~d\eta = \int_0^{+\infty} \eta~ f_{D^k} (1 - \theta \eta) ~ f_{D^k} (\eta - 1) ~d\eta -\int_{-\infty}^0 \eta ~ f_{D^k} (1 - \theta \eta) ~ f_{D^k} (\eta - 1) ~d\eta.
\end{equation}

\end{definition}

\begin{example}[When $D^k\sim \text{Beta}(\alpha,\beta;0,1)$]\normalfont 
Suppose that $D^k\sim \text{Beta}(\alpha,\beta;0,1)$ distribution with shape parameters $\alpha,~\beta$. The mirror-image symmetry of the Beta distribution allows us to conclude that $D^k\sim \text{Beta}(\alpha,\beta;0,1) \Longleftrightarrow 1 - D^k = X \sim \text{Beta}(\beta,\alpha;0,1)$. Furthermore, we know from convolution with a Dirac's delta that $1+D^k = Y \sim f_Y (y)$, which has support in $- 1 \leqslant y \leqslant 0$. Therefore, $f_{\Theta^k} (\theta) = \int_{-\infty}^{+\infty} |y| f_X (\theta y) f_Y (y) dy = \int_{-1}^0 |y| f_X (\theta y) f_Y (y) dy$, with $f_X(\theta y) = \frac{\theta^{\beta - 1} y^{\beta - 1} (1-\theta y)^{\alpha - 1}}{B(\alpha,\beta)}$ and $f_Y (y) = \frac{y^{\beta-1} (1+y)^{\alpha-1}}{B(\alpha,\beta)}$. We finally get the following expression for $f_{\Theta^k} (\theta)$:\footnote{We used the software Wolfram Mathematica 12.1 (\url{https://www.wolfram.com/mathematica/}) to integrate.}

\begin{equation}\label{f_t_B}
    f_{\Theta^k} (\theta) = - (-1)^{1+2\beta} \frac{\theta^{\beta-1}}{B^2 (\alpha,\beta)} \frac{\Gamma(\alpha)\Gamma(2\beta)}{\Gamma(\alpha+2\beta)} {}_2 F_1 \left( 1-\alpha; 2\beta; \alpha+2\beta; -\theta  \right).
\end{equation}

\end{example}

\begin{example}[When $D^k\sim \mathcal{N}(\rho,\sigma)$, a Gaussian distribution with parameters $\rho,~\sigma$]\label{example:5.2 Beta}\normalfont Suppose that $D^k \sim \mathcal{N}(\rho,\sigma)$, where $\rho$ and $\sigma$ are, respectively, the mean and the standard deviation of the Gaussian distribution, $\mathcal{N}$. Then, $1-D^k \sim \mathcal{N}(1-\rho,\sigma)$ and $1+D^k \sim \mathcal{N}(1+\rho,\sigma)$. Following \cite{PhamGia2006}, we may use Hermite functions to solve Equation (\ref{eq:pdf_theta2}) and get $(1-D^k)/(1+D^k) \sim f_{\Theta^k} (\theta)$, with:

\begin{equation}\label{eq:f_t_N}
    f_{\Theta^k} (\theta) = \frac{T}{1 + \theta^2} ~{}_1 F_1\left(1;\frac{1}{2};\omega(\theta)\right),    
\end{equation}

\noindent being ${}_1 F_1$ the Kummer's classical confluent hypergeometric function of first kind \citep{CAMPOS2001177}, and:

\begin{equation}
    \begin{cases}
    T=\displaystyle\frac{1}{\pi}\exp \left[ -\frac{(1-\rho)^2 + (1+ \rho)^2}{2\sigma^2} \right],\\
    \text{and}\\
    \omega (\theta) = \displaystyle\frac{(1 + \rho)^2 + (1-\rho)^2\theta^2 + 2(1-\rho)(1+\rho)\theta}{2\sigma^2 (1 + \theta^2)}.
    \end{cases}
\end{equation}

Suppose that $\rho$ and $\sigma$ are defined as functions of $\alpha$ and $\beta$, shape parameters of a Beta distribution, according to the Equation (\ref{eq:Beta_to_Normal}). Because of Remark \ref{remark:Beta_to_Gauss}, Equations (\ref{f_t_B}) and (\ref{eq:f_t_N}) return similar values should $\alpha,\beta\gg 1$.

\end{example}

\vspace{.25cm}

As in the case of $D^k$, we usually need the expected value of the efficiency, $\mathbb{E}(\Theta^k)$. The next definition provides the equation for the expected value of $\Theta^k$, which also depends on the density of $D^k$:

\begin{definition}[Expected value of $\Theta^k$]\normalfont The expected value of $\Theta^k$ in the domain $\theta\in]0,1]$ is:

\begin{equation}\label{eq:exp_theta}
    \mathbb{E}(\Theta^k) = \int_{-\infty}^{+\infty} \theta f_{\Theta^k} (\theta)~ d\theta = \int_0^1 \theta f_{\Theta^k} (\theta)~ d\theta = \int_0^1 \int_{-\infty}^{+\infty} \theta |\varphi| f_{D^k} (1 - \theta \varphi) f_{D^k} (\varphi - 1)~ d\varphi ~d\theta.
\end{equation}

\end{definition}

\begin{remark}\normalfont
Let $\rho=\mathbb{E}(D^k)$. Equation (\ref{eq:exp_theta}) does not return $(1 - \rho)/(1+\rho)$ that would result from replacing $D^k$ by its expected value in Equation (\ref{eq:theta_simpl}). Indeed, following \cite{Stuart2010}, we may use the second order Taylor expansion over the ratio $(1-D^k)/(1+D^k)$ in the point $(1-\rho,1+\rho)$ to approximate the expected value $\mathbb{E}(\Theta^k)$ of Equation (\ref{eq:exp_theta}), as follows:

\begin{equation}\label{eq:Exp_theta2}
    \mathbb{E}(\Theta^k) = \frac{1-\rho}{1+\rho} - \frac{\text{Cov}(1-D^k,1+D^k)}{(1+\rho)^2} + \frac{(1-\rho)\text{Var}(1+D^k)}{(1+\rho)^3}.
\end{equation}

\noindent where $\text{Cov}$ is the covariance between two random variables, and $\text{Var}$ is the variance. Naturally, $\text{Cov}(1-D^k,1+D^k) = \text{Cov}(-D^k,D^k)=-\text{Cov}(D^k,D^k)=-\text{Var}(D^k)$. Also, let us use $\sigma^2$ to denote $\text{Var}(1+D^k)$ or $\text{Var}(D^k)$, interchangeably. Therefore, Equation (\ref{eq:Exp_theta2}) becomes:
\begin{equation}\label{eq:Exp_theta3}
    \mathbb{E}(\Theta^k) = \frac{1-\rho}{1+\rho} + \left[\frac{1}{(1+\rho)^2} + \frac{1-\rho}{(1+\rho)^3}\right] \sigma^2,
\end{equation}

Suppose that $\rho=\sigma=0$. In that case, $\mathbb{E}(\Theta^k) = 1$, meaning that the \gls{DMU} $k$ is efficient.
\end{remark}


\subsection{Additional definitions regarding the robustness analysis of the integrated approach}\label{subsec:Further considerations regarding the robustness analysis of the integrated approach}
\noindent Some additional definitions and remarks concerning the integrated approach's robustness analysis can be obtained from the iterating-based efficiency estimates. Based on the works of \cite{TERVONEN2007500}, \cite{TERVONEN2008}, and \cite{KADZINSKI20171}, let us consider the following three definitions:

\begin{definition}[Bucket, $b_g,~g=0,1,\ldots,G$]\label{def:bucket}\normalfont The bucket $b_g$ for $g=0,1,\ldots,G$ is an interval of efficiency (or distance $D^{\ast j},~j\in J,$ to the frontier, $F$) such that $b_g = ]\underline{b}_g,\overline{b}_g],$ $b_g \cap b_{g+1} = \emptyset$ and $\overline{b}_g - \underline{b}_g = \overline{b}_{g+1} - \underline{b}_{g+1}$ for any $g=0,1,\ldots,G-1$, and $\bigcup_{g=0}^G b_g = \displaystyle\left[\min_{j\in J} D^{\ast j},\max_{j\in J} D^{\ast j}\right]$.
\end{definition}

\begin{definition}[Efficiency bucket, $b_0$]\normalfont The efficiency bucket is defined by a single point in $\mathbb{R}$: $b_0 = \{0\}$.
\end{definition}

By Definition \ref{def:bucket}, $\bigcup_{g=1}^G b_g \triangleq \left[\min_{j\in J} D^{\ast j},\max_{j\in J} D^{\ast j}\right]$. If the frontier empirically estimated by \gls{DEA} envelops the entire sample, then $\min_{j\in J} D^{\ast j} = 0$, and the lower end of that range and $b_0$ overlap. If, due to IKD, some of the admissible points belonging to $\Lambda^j$ for some $j\in J$ fall outside the frontier, then $\min_{j\in J} D^{\ast j} < 0$ and \gls{DMU} $j$ is said to be super-efficient. Nonetheless, in the worst scenario, at least one of the remaining \glspl{DMU} in $J\setminus j$ is technically efficient (otherwise, we would have no frontier), and $\max_{j\in J} D^{\ast j} \geqslant 0$; thus, $b_0 \cap \bigcup_{g=0}^G b_g \neq \emptyset$ and $b_0 \subset \bigcup_{g=1}^G b_g$. Finally, if all observations are technically efficient, then $\min_{j\in J} D^{\ast j}=\max_{j\in J} D^{\ast j}$ for all $j\in J$, and $b_0 \equiv \bigcup_{g=1}^G b_g \Longrightarrow b_0 \subseteq \bigcup_{g=0}^G b_g$.

\begin{remark}\normalfont
The efficiency bucket verifies the condition $b_0 \subseteq \bigcup_{g=0}^G b_g$.
\end{remark}

It is usually desirable to study the frequency (either relative or absolute) in which a \gls{DMU} $k$ verifies an efficiency score (or distance to the frontier) belonging to a certain bucket. Such a frequency gives us the probability of observing such a performance for that \gls{DMU} in the presence of \gls{IKD}: 

\begin{definition}[Efficiency robustness interval index of \gls{DMU} $k\in J$, $\text{ERII}_g^k$] \label{def:ERII}\normalfont The efficiency robustness interval index of \gls{DMU} $k$ measures the number of times (in relation to $t$) that this \gls{DMU} belongs to the bucket $b_g=]\underline{b}_g, \overline{b}_g]$: 
\begin{equation}\label{eq:ERII_def}
\text{ERII}_g^k = \text{Pr}(D^k \in b_g)=\text{Pr}(\underline{b}_g < D^{k} \leqslant \overline{b}_g)= \frac{1}{t}\sum_{\ell=1}^t \mathbb{I}(\underline{b}_g < D^{k(\ell)}\leqslant \overline{b}_g),~g=0,1,\ldots,G.
\end{equation}
\end{definition}

\begin{remark}\label{remark:nonNeg_ERII}\normalfont
Since the indicator function $\mathbb{I}$ either is equal to 0 or 1, it follows that $\text{ERII}_g^k \geqslant 0$ for any $g=0,1,\ldots,G$ and $k\in J$.
\end{remark}

\begin{remark}\normalfont
It is straightforward to conclude that $\sum_{g=0}^G \text{ERII}_g^k = 1$.
\end{remark}

\begin{remark}\label{rem:EDk approx ERII}\normalfont
It is easy to see that
\begin{equation}\label{eq:EDk approx ERII}
    \mathbb{E}(D^k) \approx \sum_{g=0}^G \frac{\max b_g + \min b_g}{2}\cdot \text{ERII}_g^k = 0\cdot \text{ERII}_0^k + \sum_{g=1}^G \frac{\max b_g + \min b_g}{2}\cdot \text{ERII}_g^k = \sum_{g=1}^G \frac{\max b_g + \min b_g}{2}\cdot \text{ERII}_g^k,
\end{equation}
\noindent in which the approximation follows the Riemann-Darboux approach over the integration of Equation (\ref{eq:exp_D^k}), holding if the buckets' width is not too big. It follows that we can approximate $\mathbb{E}(\Theta^k)$ using $\text{ERII}_g^k$ and the buckets, as well.

If $\text{ERII}_g^k$ = 1 for $g=0$, then $\mathbb{E}(D^k) \approx 0$ by the Equation (\ref{eq:EDk approx ERII}).
\end{remark}

\begin{remark}\normalfont
Because of the non-negativity of $\text{ERII}_g^k$, see Remark \ref{remark:nonNeg_ERII}, it follows that $\mathbb{E}(D^k) \geqslant 0$. The lower bound, 0, means efficiency.
\end{remark}

Although useful in many situations, the scores $\mathbb{E}(D^k)$ and $\mathbb{E}(\theta^k)$ tell just a little about the robustness of efficiency estimates. It is common to associate the expected value with a confidence interval with a specified level. Let us consider only the confidence interval associated with the distance:

\begin{definition}[Confidence interval, $\Delta^k_\tau$]\normalfont Let $\{ D^{k[1]},\ldots, D^{k[\ell]},\ldots, D^{k[t]}\}$ be the list of sorted efficiency scores (or distances) achieved using $t$ loops, such that $D^{k[1]}\leqslant\cdots\leqslant D^{k[\ell]}\leqslant\cdots\leqslant D^{k[t]}$. The $\tau$-level confidence interval associated with $D^k$ is $\Delta^k_\tau = [D^{k[(1-\tau)t/2]};~D^{k[(1+\tau)t/2]}]$, for $0 < \tau < 1$.
\end{definition}

For instance, the 95\% confidence interval ($\tau=0.95$) associated with $D^k$, for $t=5,000$, is $\Delta_{0.95}^k = [D^{k[125]};~D^{k[4,875]}]$: 

\begin{equation}\label{eq:CI_95}
    \Delta_{0.95}^k = [LB_{95\%}^k, UB_{95\%}^k] = \begin{cases}
    LB_{95\%}^k = D^{k[125]}\\
    UB_{95\%}^k = D^{k[4,875]}
    \end{cases},
\end{equation}

\noindent where LB and UB stand, respectively, for the \textit{lower bound} and the \textit{upper bound} of the \gls{DMU} $k$'s confidence interval associated with the distance to the frontier.

Besides, it is usually useful to classify \glspl{DMU} based on their performance. We hereby classify them in the following categories:
\begin{enumerate}[label=\subscript{C}{{\arabic*}}:]
    \item Perfectly robust efficient;
    \item Sufficiently robust efficient;
    \item Neither robust efficient nor inefficient;
    \item Inefficient.
\end{enumerate}

If ``$A \succ B$'' denotes \textit{A is preferable to B}, then it is straightforward to conclude that $C_1 \succ C_2 \succ C_3 \succ C_4$. For this classification, it is sufficient to fix the confidence level $\tau\in]0,1[$, and know $\text{ERII}_g^k$ (for $g=0$), $\mathbb{E}(D^k)$ or $\mathbb{E}(\theta^k)$, and the confidence interval $\Delta^k_\tau$.

\begin{definition}[Perfectly robust efficient \gls{DMU}]\label{def:C1}\normalfont The \gls{DMU} $k$ is perfectly robust efficient if (and only if) $\mathbb{E}(D^k)=0$ or $\mathbb{E}(\theta^k)=1$. It is usually sufficient to have $\text{ERII}_g^k = 1$ for $g=0$, or $\Delta^k_\tau = \{ 0 \}$ for any $\tau\rightarrow 1$.
\end{definition}

\begin{definition}[Sufficiently robust efficient \gls{DMU}]\normalfont The \gls{DMU} $k$ is sufficiently robust efficient if $\tau \leqslant \text{ERII}_g^k < 1$, for $g=0$ and a predefined level $\tau \in ]0,1[$, which can be the same of the confidence interval.
\end{definition}

\begin{definition}[Neither robust efficient nor inefficient \gls{DMU}]\label{def:Neither robust efficient nor inefficient}\normalfont The \gls{DMU} $k$ is neither robust efficient nor inefficient if (and only if) both the following conditions are met: (\textit{i}) $\text{ERII}_g^k < \tau$ for $g = 0$; (\textit{ii}) $\Delta^k_\tau \supset \{ 0 \}$ and $\Delta^k_\tau \neq \{ 0 \}$. The second condition is equivalent of $LB_{95\%}\leqslant 0$ and $UB_{95\%} > 0$.
\end{definition}

\begin{definition}[Inefficient \gls{DMU}]\label{def:C4}\normalfont
To be inefficient, it is sufficient that \gls{DMU} $k$ verifies $LB_\tau^k > 0$ and $\mathbb{E}(D^k) > 0$.
\end{definition}

For a pair of \glspl{DMU}, $j$ and $k$, belonging to the same category, the next definitions hold:

\begin{definition}[Efficiency similarity between two \glspl{DMU} $j$ and $k$]\normalfont Suppose that a pair of \glspl{DMU}, $j$ and $k$, belong to the same category and $\Delta_\tau^j \cap \Delta_\tau^k \neq \emptyset$ (their confidence intervals are not disjoint). Then, there is no evidence that one \gls{DMU} outperforms the other. Symbolically, neither $j\succ k$ nor $k\succ j$.\label{def:tie}
\end{definition}

\begin{definition}[Outperformance]\normalfont Consider a pair of \glspl{DMU}, $j$ and $k$, belonging to the same category and such that their confidence intervals are disjoint: $\Delta_\tau^j \cap \Delta_\tau^k = \emptyset$. Thus, $j$ outperforms $k$ (\textit{i.e.}, $j\succ k$) if $UB_\tau^j < LB_\tau^k$.
\end{definition}


\subsection{Probability of a DMU outperforming other DMU}
\noindent We have previously shown how to check whether a \gls{DMU}, $j$, outperforms other \gls{DMU}, $k$, if they are both in the same efficiency category. It is based on the intersection of their confidence intervals associated with the distances to the frontier. In some empirical cases, though, it might be useful to determine the probability of \gls{DMU} $j$ outperforming \gls{DMU} $k$, \textit{i.e.}, $\text{Pr}(D^j < D^k)=\text{Pr}(D^j - D^k \leqslant 0)$, where $\leqslant$ holds if we assume that the difference $D^{j,k}=D^j - D^k$ is a continuous variable. 

We could simply determine the number of times that $D^j - D^k \leqslant 0$ in our simulation, obtaining:

\begin{equation}
    \text{Pr}(D^j < D^k) \approx \frac{1}{t}\sum_{\ell=1}^t \mathbb{I}(D^j - D^k \leqslant 0)= \frac{1}{t}\sum_{\ell=1}^t \mathbb{I}(D^{j,k} \leqslant 0),
\end{equation}

\noindent which is just an approximation of the \textit{true} probability because of the simulation process.

We prefer a more formal and elegant solution, which should account for the densities associated with each random variable related to the distance to the frontier. Let $f_{D^{j,k}}$ be the probability density function of $D^{j,k}$. Such a function is associated with the cumulative distribution function, $\mathscr{F}_{D}$, such that:

\begin{equation}\label{eq:cdf}
    \mathscr{F}_{D}(d) = \text{Pr}(D^{j,k}\leqslant d) = \int_{-\infty}^d f_{D^{j,k}}(w) ~dw.
\end{equation}

Since we need to estimate $\text{Pr}(D^j < D^k)=\text{Pr}(D^j - D^k \leqslant 0)=\text{Pr}(D^{j,k} \leqslant 0)$, we have: $\text{Pr}(D^j < D^k) = \mathscr{F}_{D}(0)$. 
The major difficulty here is that we do not know the function $f_{D^{j,k}}$, although we may know both $f_{D^k}$ and $f_{D^j}$, the probability density functions of $D^k$ and $D^j$, respectively. However, we may recall Equation (\ref{eq:Z=X-Y pdf}). Using the definition of cumulative distribution function in Equation (\ref{eq:cdf}), we get:

\begin{equation}\label{eq:cdf2}
    \mathscr{F}_{D} (d) = \text{Pr}(D^{j,k}\leqslant d) = \int_{-\infty}^d \int_{-\infty}^{+\infty} f_{D^j} (g) f_{D^k} (g - g')~dg~dg',
\end{equation}

\noindent for $D^{j,k} = D^j - D^k$. The Equation above can only be used if $D^j$ and $D^k$ are independent. If it is not the case, we have to consider the joint probability distribution, $f_{D^j D^k}(g,g')$, which may not be straightforward to get. Independence seems, however, a fair hypothesis. Finally, we obtain the value of $\text{Pr}(D^j < D^k)$ by simply replacing $d=0$ in Equation (\ref{eq:cdf2}).

\begin{example}[When the distances to the frontier follow (independent) Gaussian distributions]\label{example:D_Gauss_cdf}\normalfont
Suppose that $D^k \sim \mathcal{N}(\rho^k,\sigma^k)$ and $D^j \sim \mathcal{N}(\rho^j,\sigma^j)$, where $\rho$ and $\sigma$ denote the average and the standard deviation of the estimates $D^k$ and $D^j$, respectively. $\mathcal{N}$ is the Gaussian distribution, as usual. We assume that $D^j$ and $D^k$ are independent random variables. Then, $D^{j,k} = D^j - D^k \sim \mathcal{N}(\rho^j - \rho^k, \sqrt{(\sigma^j)^2 + (\sigma^k)^2})$. Using the Fourier transform, it is possible to show that Equation (\ref{eq:cdf2}) for $d=0$ reduces to $\mathscr{F}_{D} (0) = \text{Pr}(D^{j,k} \leqslant 0) = \Phi\left(\frac{-(\rho^j - \rho^k)}{\sqrt{(\sigma^j)^2 + (\sigma^k)^2}} \right)$, being $\Phi$ the cumulative distribution function of the standard Gaussian distribution. Of course, for other, more general, densities such a simplification is not that simple, requiring the double integration undertaken by the Equation (\ref{eq:cdf2}).
\end{example}

\subsection{Testing for differences in global results}
\noindent In many cases, researchers are interested on testing the influence of some parameters on efficiency, in a controlled way, for robustness purposes. In the present case, for example, we may test for the influence of the shape or the size of $\Lambda$ for modeling \gls{IKD} on efficiency. We recall that the convex polytopes defining $\Lambda$ depend on a set of parameters $a_{ql}$ and $b_q$, creating a pair $(a,b)$ that may influence $D^j$ for any $j\in J$. We denote $D^j(a,b)$ as the distance of \gls{DMU} $j$ as function of the pair $(a,b)$. Naturally, a change on one of these parameters may impact on $D^j$. Let us formulate the null hypothesis:

H\textsubscript{0}: $D^j(a,b)=D^j(a,b')$ for all $j\in J$,

\noindent in which $b'\neq b$. If the change of parameter $b$ to $b'$ does not produce effects over the distances to the frontier (and the frontier itself), then the null hypothesis is \textit{true} as there is no evidence to reject it for a given significance level, say 5\%. For any null hypothesis, there is an alternative against which the former is confronted. We may define this alternative as an inequality:

H\textsubscript{1}: $D^j(a,b)\neq D^j(a,b')$ for some $j\in J$.\footnote{Of course, other alternatives are possible, namely $D^j(a,b) > D^j(a,b')$ or $D^j(a,b) < D^j(a,b')$, turning the p-value computation slightly different from the one presented in this paper.}

Thus, if changes on the shape/size of $\Lambda$ impact on efficiency, we should reject H\textsubscript{0} in light of evidence, thus not rejecting H\textsubscript{1}.\footnote{Please, note that we do not accept an hypothesis because all depend on the significance level fixed and the simulation undertaken; the best we can say is that we do not reject it in light of the existing statistical evidence.}

A straightforward way of testing H\textsubscript{0} is using the p-value. We can take advantage of the $t$ estimates of distances per \gls{DMU}. Let us consider the following statistic:\footnote{Naturally, other statistics could be used instead; an example is the harmonic mean. In this case, the geometric mean is not advisable because distances are zero for efficient \glspl{DMU}.}

\begin{equation}
    T^{(\ell)} = \left(\frac{1}{n}\displaystyle\sum_{j=1}^n D^{j(\ell)}(a,b)\right)\bigg/\left(\frac{1}{n}\displaystyle\sum_{j=1}^n D^{j(\ell)}(a,b')\right),~\ell=1,\ldots,t.
\end{equation}

\noindent Note that other averages could be used rather than the simple mean of distances. Generically, except for the geometric mean (because distances can be zero), any H\"{o}lder mean might replace either the numerator, the denominator, or both in the previous Equation. Finally, since the alternative hypothesis, $H_1$, represents a difference, the p-value can be defined as follows:

\begin{equation}\label{eq:pvalue}
    \text{p-value} \approx \frac{2}{t}\min\left\{ \sum_{\ell=1}^t \mathbb{I}(T^{(\ell)}\leqslant 1),~ \sum_{\ell=1}^t \mathbb{I}(T^{(\ell)}\geqslant 1)\right\}.
\end{equation}

If the p-value is smaller than the fixed significance level (typically, 5\%), we reject the null hypothesis, meaning that the shape/size of $\Lambda$ play a meaningful role on efficiency assessment. In opposition, for p-value $\geqslant 5\%$, we do not have sufficient evidence to reject the null hypothesis, so $\Lambda$ has no impact on efficiency estimation.


\section{An empirical application}\label{sec:An empirical application}
\noindent This section presents an empirical application of the \gls{HR} procedure integrated with \gls{DEA} to assess a sample of 108 Portuguese public hospitals' technical efficiency. Model (\ref{eq:directionalDEA}) could be used to estimate their efficiencies, if no undesirable outputs had to be considered. However, evaluating hospitals' performance often requires considering undesirable outputs resulting from the production process, which are sometimes unavoidable. The smaller the produced amount of this kind of output, the better the performance of the hospital. Some approaches, including transforing of these quantities into desirable outputs, have been proposed in the literature \citep{ZANELLA2015517}. Transforming outputs does not seem the right approach, at least in the present case, due to two reasons. First, no consensus exists on which kind of transformation should be used. Second, neither of the possible transformations should return quantities with meaning. Therefore, we apply the model in Equation (\ref{eq:directionalDEA_corrected}) that disregards the transformation of outputs for the efficiency estimation.

\subsection{Sample, inputs, and outputs}\label{subsec:Sample, inputs, and outputs}
\noindent This section illustrates the \gls{HR} procedure's utilization to estimate the efficiency of a sample of 27 Portuguese public hospitals, that provided consistent data for four consecutive years: 2013 to 2016. 

Hospital data are available in a database maintained by the Ministry of Health and the Portuguese Central Administration of Health Systems.\footnote{Database: \url{http://benchmarking.acss.min-saude.pt} [in Portuguese], accessed: September 20, 2020.} Even though Portugal is a small country, the database is very rich in terms of measured indicators. Thus, the choice of inputs and outputs must be careful and parsimonious. These dimensions should also explain the hospital activity. Figure \ref{fig:Inputs, outputs, and undesirable outputs.} presents the adopted inputs and outputs (both desirable and undesirable) to execute the integration of \gls{HR} with Model (\ref{eq:directionalDEA_corrected}).

\begin{figure}[htbp]
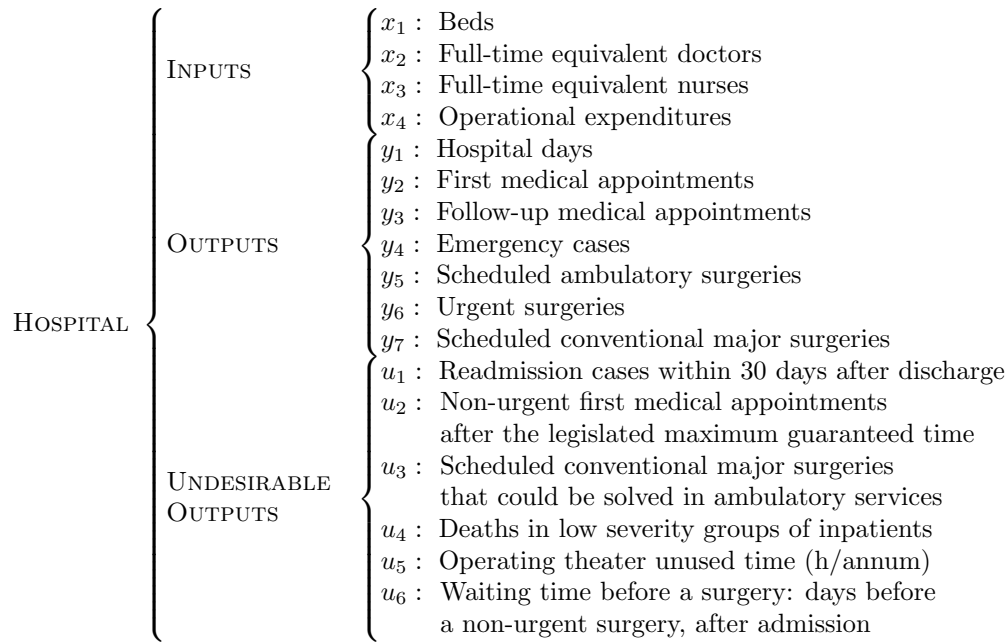

		\begin{equation*}
		\begin{split}
		&\text{\sc Hospital}
		\end{split}~
		\begin{cases}
		\textsc{Inputs}~& \begin{cases}
		\begin{split}
        & x_1:~\text{Beds}\\
        & x_2:~\text{Full-time equivalent doctors}\\
        & x_3:~\text{Full-time equivalent nurses}\\
        & x_4:~\text{Operational expenditures}
        \end{split}
		\end{cases}\\
        \textsc{Outputs}~& \begin{cases}
		\begin{split}
        & y_1:~\text{Hospital days}\\
		& y_2:~\text{First medical appointments}\\
        & y_3:~\text{Follow-up medical appointments}\\
        & y_4:~\text{Emergency cases}\\
        & y_5:~\text{Scheduled ambulatory surgeries}\\
        & y_6:~\text{Urgent surgeries}\\
        & y_7:~\text{Scheduled conventional major surgeries}
        \end{split}
		\end{cases}\\
        \begin{split} & \textsc{Undesirable} \\ & \textsc{Outputs}  \end{split}~& \begin{cases}
        \begin{split}
		& u_1:~\text{Readmission cases within 30 days after discharge}\\
        & \begin{split} u_2:&~\text{ Non-urgent first medical appointments}\\ & ~\text{ after the legislated maximum guaranteed time} \end{split}\\
        & \begin{split} u_3:&~\text{ Scheduled conventional major surgeries}\\ & ~\text{ that could be solved in ambulatory services} \end{split}\\
        & u_4:~\text{Deaths in low severity groups of inpatients} \\
        & u_5:~\text{Operating theater unused time (h/annum)} \\
        & \begin{split} u_6:&~\text{ Waiting time before a surgery: days before}\\ & ~\text{ a non-urgent surgery, after admission} \end{split}
        \end{split}
		\end{cases}
		\end{cases}
		\end{equation*}
		\caption{Inputs, outputs, and undesirable outputs. \label{fig:Inputs, outputs, and undesirable outputs.}}
	\end{figure}

Inputs characterize the resources consumed by hospitals to treat patients. Considering only the number of beds and the number of doctors and nurses is, in theory, insufficient to explain the profile of resource consumption. Hence, we include operational expenditures as extra input. It excludes the costs with staff to mitigate the problem of redundancy.

Outputs are measures of the hospital activity, regardless of the outcomes \citep{PEREIRA20201016}. This activity is related to the hospital's primary services: inpatients, medical appointments, emergency room, and operating theater (surgeries). As the complexity and severity of illness differ from patient to patient, each patient has a different cost of treatment. Hence, an adjustment mechanism is compulsory for the volume of services. In this case, we adopt the case-mix index \citep{MCRAE2020102086}. To compute the case-mix index, patients are clustered in diagnosis-related groups; patients belonging to a particular group of diagnoses are expected to represent similar hospital costs. The case-mix index associated with a hospital reflects the average cost of treating its patients relative to the national average unitary cost \citep{Chang2019}. Therefore, the larger the index, the more complex/severe patients are handled by the hospital. The Portuguese Ministry of Health considers three case-mix indices: inpatient, medical, and surgical specialties. Outputs were adjusted (weighted) accordingly to these indices, such that the volume of services became comparable among hospitals.

To evaluate the association between technical efficiency and quality, \cite{FERREIRA2018} and \cite{FerreiraITOR2020,FerreiraJPA2020} have considered dimensions such as: readmission rate within the first 30 days after discharge; non-urgent first medical appointments within the legislated maximum guaranteed time; outpatient (minor/ambulatory) surgeries on the potential outpatient procedures; the in-hospital death rate for low severity levels; operating theater capacity utilization; and waiting time before surgery, to name a few. Because of the convexity imposed by the fourth restriction of Model (\ref{eq:directionalDEA_corrected}), those rates cannot be considered as extra variables \citep{OLESEN2017640}. Thus, we have used the raw data associated with those ratios to define the undesirable outputs for this analysis. For instance, the in-hospital death rate is computed using the ratio number of deaths in hospital wards per 100 inpatients. In this case, we use the number of deaths as the undesirable output, considering only the less severe cases because of their lower mortality likelihood. Additionally, we considered: 

\begin {enumerate} [label=\itshape\alph*\upshape)]
    \item the \textit{number of readmissions within the first 30 days after discharge from the inpatient service} -- this dimension identifies lack of care appropriateness and, likely, patients' safety;
    
    \item the \textit{number of first medical appointments delayed beyond the legislated maximum guaranteed time}, which constitutes a barrier to access;
    
    \item the \textit{number of cases solved by major surgery but that could be solved using ambulatory (minor) procedure} -- ambulatory surgeries are less costly and more appropriate/safe to the patient requiring only minor procedures;
    
    \item the \textit{unused time of the operating theater}, which reveals non-optimization of resources usage; and
    
    \item the \textit{waiting time before surgery}, measured by the number of total days spent by patients waiting for a (non-urgent) surgery, once admitted to the hospital ward.
\end {enumerate} 

Using these undesirable outputs we may account for the preventable adverse events that result from unsafe and inappropriate care \citep{ROTHSCHILD200663, McCradden2020, Braspenning2020}, as well as the existence of some barriers to access. All raise costs to the health service \citep{Andel2012, Classen2011, Umscheid2011}.

As outputs, these undesirable quantities must be adjusted for the complexity/severity of illness of the patients treated by each hospital. However, we should point out that hospitals handling more complex patients are also more prone to observe adverse events (\textit{e.g.}, inpatients in severe conditions are more likely to decease than the others, less complicated). Hence, instead of multiply undesirable outputs by the case-mix index (which would lead to unfair comparisons), we divided them by it.

\subsection{Methodological considerations}\label{subsec:Methodological considerations}
\noindent We made two considerations to estimate the technical efficiency of Portuguese public hospitals:

\begin {enumerate} [label=\itshape\alph*\upshape)]
    \item Given the small size of Portugal, there is a limited number of public hospitals (our sample is composed of 27 hospitals). There is a dimensionality limitation associated with the sample's size, which means that small samples will likely result in low discrimination results. To mitigate this shortcoming of the linear model (\ref{eq:directionalDEA_corrected}), we pooled the data collected for the four years and constructed a common frontier, against which the efficiency of hospitals was estimated. It is equivalent to assume that no technological progress or regress was observed during that period. This assumption can be considered acceptable because the time lag is not considerable. The frontier is, thus, constructed using $27\times 4 = 108$ \glspl{DMU}.
    
    \item \gls{PCA} was used to narrow down the number of variables (one input, one output, and one undesirable output). It is a widely used technique to mitigate the problem of data redundancy. If two variables are highly and positively correlated, then using one instead of the other will not likely produce different outcomes (they are redundant). However, considering them could not be the right solution because of the dimensionality problem associated with non-parametric benchmarking models. \gls{PCA} reduces two or more correlated variables into a single one. It is what happened with our set of inputs, outputs, and undesirable outputs, as seventeen variables were reduced to three that can explain more than 90\% of the former ones' variability. 
\end {enumerate}

\subsection{On modeling the imperfect knowledge of data}\label{subsec:On modeling data imperfect knowledge}
\noindent We consider the following subscenario to model the \gls{IKD}: three quarters of the observations ($0.75\times 108 ~\text{DMUs} \times 3~ \text{variables} = 243$), randomly chosen, were replaced by sets to model IKD. These sets share the same geometry.
    
    

To define the sets $\Lambda$ to model IKD, we started by noticing the existence of three variables: $\tilde{x}=PCA(x),~\tilde{y}=PCA(y),$ and $\tilde{u}=PCA(u)$, all with unitary standard deviation. The utilization of PCA exacerbates or propagates the problem of IKD, justifying (even more) the adoption of the integrated approach HR+DEA.

We consider the superellipsoid defined below to characterize the sets used to model IKD. The superellipsoid generalizes the hyper-boxes, hyper-ellipsoids, and the hyper-rhombuses \citep{Lame1818}, as well as other geometric shapes. In our three-dimensional case (three variables after \gls{PCA}), the set $\Lambda$, centered on the empirical observations $(\tilde{x}^{j(0)},\tilde{y}^{j(0)},\tilde{u}^{j(0)}),~j\in J$, associated with the super-ellipsoid is as follows:

\begin{equation}\label{eq:set_lambda_spherical}
\Lambda^j = \left\{ (\tilde{x},\tilde{y},\tilde{u})\in\mathbb{R}_+^3~|~ \left(  \left| \frac{\tilde{x}-\tilde{x}^{j(0)}}{w_j^x} \right|^{O_1} + \left| \frac{\tilde{y}-\tilde{y}^{j(0)}}{w_j^y} \right|^{O_1}    \right)^{\frac{O_2}{O_1}} + \left| \frac{\tilde{u}-\tilde{u}^{j(0)}}{w_j^u} \right|^{O_2} \leqslant 1 \right\},~j\in J.
\end{equation}

\noindent Following \cite{Gielis2003}, in terms of spherical coordinates, the points of the boundary of $\Lambda$ in each \gls{HR} iteration $\ell$ can be parametrically defined by:

\begin{equation}\label{eq:3D_superellip}
\begin{cases}
\hat{x}^{j(\ell)}(\psi^{(\ell)},\phi^{(\ell)}) = \tilde{x}^{j(0)} + w_j^x g^{(1)}\left(\phi^{(\ell)},\frac{2}{O_2} \right)g^{(1)}\left(\psi^{(\ell)},\frac{2}{O_1} \right),\\
\hat{y}^{j(\ell)}(\psi^{(\ell)},\phi^{(\ell)}) = \tilde{y}^{j(0)} + w_j^y g^{(1)}\left(\phi^{(\ell)},\frac{2}{O_2} \right)g^{(2)}\left(\psi^{(\ell)},\frac{2}{O_1} \right),\\
\hat{u}^{j(\ell)}(\psi^{(\ell)},\phi^{(\ell)}) = \tilde{u}^{j(0)} + w_j^u g^{(2)}\left(\phi^{(\ell)},\frac{2}{O_1} \right),
\end{cases}~\ell=1,\ldots,t,~j\in J,
\end{equation}

\noindent where $g^{(1)}(a,b)=\text{sgn}(\cos a) |\cos a|^b$, $g^{(2)}(a,b)=\text{sgn}(\sin a) |\sin a|^b$, $\psi^{(\ell)}\in[-\pi,\pi]$, and $\phi^{(\ell)}\in[-\pi/2,\pi/2]$.\footnote{sgn(\textit{a}) denotes the sign function of \textit{a}; \textit{e.g.}, sgn(--5)=--1 and sgn(2)=1.} Because of the parametric definition of the hit points, instead of generating a directional vector $d^{(\ell)}$ as usual in \gls{HR}, we randomly select $\psi^{(\ell)} \sim \text{uniform}(-\pi,\pi)$ and $\phi^{(\ell)} \sim \text{uniform}(-\frac{\pi}{2},\frac{\pi}{2})$. In fact, this is equivalent of drawing $d^{(\ell)}_j$ because $\nabla^{j(\ell)} = (\hat{x}^{j(\ell-1)} - \tilde{x}^{j(\ell-1)},\hat{y}^{j(\ell-1)} - \tilde{y}^{j(\ell-1)},\hat{u}^{j(\ell-1)} - \tilde{u}^{j(\ell-1)})$ and $d^{(\ell)}_j/\| d^{(\ell)}_j \|_2=\nabla^{(\ell)}_j /\| \nabla^{(\ell)}_j \|_2$. Hence, $(\tilde{x}^{j(\ell)},\tilde{y}^{j(\ell)},\tilde{u}^{j(\ell)}) = (\tilde{x}^{j(\ell-1)},\tilde{y}^{j(\ell-1)},\tilde{u}^{j(\ell-1)}) + \lambda_j^{(\ell)} \xi_j^{(\ell)} \nabla^{(\ell)_j} $ is a random point within the super-ellipsoid defining the set $\Lambda$, following the HR procedure, being $\lambda_j^{(\ell)}=\| \nabla_j^{(\ell)} \|_2$, for any $\ell=1,\ldots,t$. Using spherical coordinates simplifies the \gls{HR} algorithm but the parametric definition of coordinates are not easily obtained for situations with more than three dimensions (three variables). Presently, we selected parameters $w_j=(w_j^x,w_j^y,w_j^u)=(0.2, 0.2, 0.2)$ to define the dimensions of each set.

We considered three different ways of modeling IKD in $\mathbb{R}^3_+$, as presented in Table \ref{tab:scenarios} and in Figure \ref{fig:superEllips2}. For instance, it is possible to show that, for significantly high orders $O_1$ and $O_2$ (\textit{i.e.}, $O_1,~O_2\rightarrow + \infty$), we get an hyper-box with volume $w^x\cdot w^y \cdot w^z$ from Equations (\ref{eq:set_lambda_spherical}) and (\ref{eq:3D_superellip}). 


\begin{table}[htbp]
  \centering
  \caption{Three subscenarios to model IKD.}
    \begin{tabular}{clcc}
    \hline
    Scenario & Case  & $O_1$    & $O_2$ \\ \hline
    (a)   & Hyper-box ($\Lambda^\text{HB}$) & $+\infty$ & $+\infty$ \\
    (b) & Hyper-ellipsoid ($\Lambda^\text{HE}$) & 2 & 2 \\
    (c) & Hyper-rhombus ($\Lambda^\text{HR}$) & 1 & 1\\ \hline
    \end{tabular}%
  \label{tab:scenarios}%
\end{table}


\subsection{Results}\label{subsec:Results}
\noindent Table \ref{tab:scenario_b_2013_2014} presents the distance and efficiency estimation using the integrated approach and considering the scenario (b) hyper-ellipsoid, and the years 2013 and 2014. Table B.3 (see Appendix B) presents the results of a similar exercise, this time regarding the years 2015 and 2016.\footnote{Appendix available at: \url{https://drive.google.com/drive/folders/1jAmKFzz_PWyPKSNTxqO_0mM8BWKWn3-D?usp=sharing}} Also in Appendix B, and in the same vein, Tables B.1 to B.5 exhibit the results of the same analysis, for two distinct scenarios: (a) hyper-box, and (c) hyper-rhombus. In these tables, we provide the efficiency robustness interval index, $ERII_0^k$, of each \gls{DMU} concerning the efficiency bucket, $b_0$, as detailed in Definition \ref{def:ERII} and Equation (\ref{eq:ERII_def}), to evaluate the probability of each \gls{DMU} be efficient, \textit{i.e.}, $ERII_0^k = \text{Pr}(D^k = 0)$. Such a chance is quite heterogeneous as the coefficient of variation is bigger than 150\% for most of scenarios and years. About the expected value of this probability, we may verify that it was about 20\% (regardless of the scenario adopted), decreasing to 4-10\% in the following three years (2014-2016). It suggests that the frontier constructed using the entire sample is mostly composed of observations from 2013.

The expected value of the distance to the frontier presented in the tables mentioned before was estimated using the approximation in Equation (\ref{eq:EDk approx ERII}), as described in Remark \ref{rem:EDk approx ERII}. In the present case, we divided the domain $[0,\max_{\ell} D^{j(\ell)}]$ into buckets of width 0.01, for all $j\in J=\{1,\ldots,108\}$. To complement this analysis, we constructed the 95\% confidence intervals following Equation (\ref{eq:CI_95}), and computed the empirical standard deviation, $\sqrt{Var(\Theta^k)}$, provided that $Var(\Theta^k) = \mathbb{E}((D^k)^2) - \mathbb{E}^2 (D^k)$ by definition. Also, we took advantage of these constructs to estimate the expected value of efficiency, $\mathbb{E}(\Theta^k)$, using Equation (\ref{eq:Exp_theta3}). Finally, we classified the \glspl{DMU} in four categories, $C_1$ (perfectly robust efficient) to $C_4$ (inefficient), following Definitions \ref{def:C1} to \ref{def:C4}, and assuming $\tau=0.95$.

In general, Portuguese public hospitals proved to be significantly and consistently inefficient across the period considered, a result of excessive resources consumption and high levels of undesirable outputs generated, given the level of desirable outputs delivered. To better understand it, we note that (in the absence of slacks) input targets are $(x_i^j)^\star = x_i^k (1 - D^k)$, while the (desirable) output targets are $(y_r^j)^\star = y_r^k (1 + D^k)$. Considering the scenario (a), on average, in 2013, the Portuguese public hospitals' consumption of inputs was 11.22\% above the optimal values, in 2013. In other words, input targets were about 89\% of the observed consumption profiles. In opposition, the target for the delivery of desirable outputs was 111.22\% of the observed values in the same year. Consistently throughout the three scenarios, it seems that performance decreased from 2013 to 2015, watching a tenuous improvement in the last biennium considered.

These results are in line with the decrease of \glspl{DMU} in category $C_1$, with the corresponding increasing in $C_4$. See Table \ref{tab:rate of DMUs per category} that shows the rate of hospitals per category, $C_1$ to $C_4$, year, and scenario. From 2014 onward, the number of observations considered inefficient remained nearly steady, because the lower bounds of the 95\% confidence intervals associated with $D^k$ were consistently larger than zero. Nonetheless, it seems that there is some dependence of the results on the scenario; \textit{e.g.}, 7\% of our \glspl{DMU} were perfectly robust efficient in 2016 in scenario (a), but none in that condition was observed in the same year in the other two scenarios. A plausible reason is because the domain of $\Lambda^{\text{HB}}$ is less restricted than either $\Lambda^\text{HE}$ or $\Lambda^\text{HR}$'s, for $w_j = (0.2, 0.2, 0.2)$. Besides, as shown in Table B.6 (in Appendix B), most observations, nearly 70\%, did never switch from one category to another. 

\begin{landscape}
\begin{table}[htbp]
  \centering
  \caption{Efficiency estimation using the Hit \& Run approach and the scenario (b). Results for 2013 and 2014.\label{tab:scenario_b_2013_2014}}
	\resizebox{0.7\textwidth}{!}{
    \begin{tabular}{lrrrrrrcrrrrrrc}
    \hline
          & \multicolumn{7}{c}{2013}                              & \multicolumn{7}{c}{2014} \\
          \hline
    Hospital & \multicolumn{1}{l}{$ERII_0^k$} & \multicolumn{1}{l}{$\mathbb{E}(D^k)$} & \multicolumn{1}{l}{$LB_{95\%}$} & \multicolumn{1}{l}{$UB_{95\%}$} & \multicolumn{1}{l}{$\sqrt{Var(D^k)}$} & \multicolumn{1}{l}{$\mathbb{E}(\Theta^k)$} & \multicolumn{1}{c}{Category} & \multicolumn{1}{l}{$ERII_0^k$} & \multicolumn{1}{l}{$\mathbb{E}(D^k)$} & \multicolumn{1}{l}{$LB_{95\%}$} & \multicolumn{1}{l}{$UB_{95\%}$} & \multicolumn{1}{l}{$\sqrt{Var(D^k)}$} & \multicolumn{1}{l}{$\mathbb{E}(\Theta^k)$} & \multicolumn{1}{c}{Category} \\
    \hline
    \multicolumn{1}{r}{1} & 0.00  & 0.2550 & 0.2425 & 0.2675 & 0.0066 & 0.5937 & \multicolumn{1}{l}{$C_4$} & 0.00  & 0.2916 & 0.2757 & 0.3089 & 0.0086 & 0.5485 & \multicolumn{1}{l}{$C_4$} \\
    \multicolumn{1}{r}{2} & 0.27  & 0.1735 & 0.0000 & 0.3304 & 0.1171 & 0.7213 & \multicolumn{1}{l}{$C_3$} & 0.00  & 0.3253 & 0.2770 & 0.4084 & 0.0377 & 0.5103 & \multicolumn{1}{l}{$C_4$} \\
    \multicolumn{1}{r}{3} & 0.00  & 0.0966 & 0.0936 & 0.0998 & 0.0017 & 0.8239 & \multicolumn{1}{l}{$C_4$} & 0.00  & 0.1775 & 0.1435 & 0.2108 & 0.0189 & 0.6989 & \multicolumn{1}{l}{$C_4$} \\
    \multicolumn{1}{r}{4} & 0.00  & 0.0456 & 0.0358 & 0.0573 & 0.0059 & 0.9129 & \multicolumn{1}{l}{$C_4$} & 0.00  & 0.1047 & 0.0939 & 0.1163 & 0.0061 & 0.8105 & \multicolumn{1}{l}{$C_4$} \\
    \multicolumn{1}{r}{5} & 0.00  & 0.2166 & 0.2100 & 0.2234 & 0.0036 & 0.6439 & \multicolumn{1}{l}{$C_4$} & 0.00  & 0.2541 & 0.2464 & 0.2620 & 0.0042 & 0.5948 & \multicolumn{1}{l}{$C_4$} \\
    \multicolumn{1}{r}{6} & 1.00  & 0.0000 & 0.0000 & 0.0000 & 0.0000 & 1.0000 & \multicolumn{1}{l}{$C_1$} & 0.00  & 0.0428 & 0.0305 & 0.0552 & 0.0072 & 0.9180 & \multicolumn{1}{l}{$C_4$} \\
    \multicolumn{1}{r}{7} & 0.00  & 0.1921 & 0.1847 & 0.2000 & 0.0041 & 0.6777 & \multicolumn{1}{l}{$C_4$} & 0.00  & 0.2339 & 0.2234 & 0.2452 & 0.0059 & 0.6209 & \multicolumn{1}{l}{$C_4$} \\
    \multicolumn{1}{r}{8} & 0.57  & 0.0124 & 0.0000 & 0.0463 & 0.0161 & 0.9761 & \multicolumn{1}{l}{$C_3$} & 0.93  & 0.0102 & 0.0000 & 0.1432 & 0.0360 & 0.9824 & \multicolumn{1}{l}{$C_3$} \\
    \multicolumn{1}{r}{9} & 0.00  & 0.1239 & 0.1185 & 0.1296 & 0.0030 & 0.7795 & \multicolumn{1}{l}{$C_4$} & 0.00  & 0.1621 & 0.1552 & 0.1695 & 0.0038 & 0.7210 & \multicolumn{1}{l}{$C_4$} \\
    \multicolumn{1}{r}{10} & 0.00  & 0.1022 & 0.0862 & 0.1271 & 0.0096 & 0.8147 & \multicolumn{1}{l}{$C_4$} & 0.00  & 0.1213 & 0.0296 & 0.2412 & 0.0671 & 0.7900 & \multicolumn{1}{l}{$C_4$} \\
    \multicolumn{1}{r}{11} & 0.00  & 0.0346 & 0.0238 & 0.0456 & 0.0063 & 0.9331 & \multicolumn{1}{l}{$C_4$} & 0.00  & 0.0664 & 0.0537 & 0.0792 & 0.0077 & 0.8755 & \multicolumn{1}{l}{$C_4$} \\
    \multicolumn{1}{r}{12} & 1.00  & 0.0000 & 0.0000 & 0.0000 & 0.0000 & 1.0000 & \multicolumn{1}{l}{$C_1$} & 0.00  & 0.0636 & 0.0609 & 0.0666 & 0.0016 & 0.8803 & \multicolumn{1}{l}{$C_4$} \\
    \multicolumn{1}{r}{13} & 0.00  & 0.0095 & 0.0084 & 0.0106 & 0.0006 & 0.9812 & \multicolumn{1}{l}{$C_4$} & 0.02  & 0.0353 & 0.0012 & 0.0692 & 0.0202 & 0.9325 & \multicolumn{1}{l}{$C_4$} \\
    \multicolumn{1}{r}{14} & 0.00  & 0.0585 & 0.0455 & 0.0727 & 0.0074 & 0.8895 & \multicolumn{1}{l}{$C_4$} & 0.00  & 0.1715 & 0.1573 & 0.1860 & 0.0079 & 0.7072 & \multicolumn{1}{l}{$C_4$} \\
    \multicolumn{1}{r}{15} & 0.00  & 0.2263 & 0.2167 & 0.2360 & 0.0053 & 0.6310 & \multicolumn{1}{l}{$C_4$} & 0.00  & 0.2650 & 0.2527 & 0.2774 & 0.0066 & 0.5811 & \multicolumn{1}{l}{$C_4$} \\
    \multicolumn{1}{r}{16} & 0.00  & 0.0089 & 0.0026 & 0.0395 & 0.0100 & 0.9826 & \multicolumn{1}{l}{$C_4$} & 0.47  & 0.0341 & 0.0000 & 0.1498 & 0.0470 & 0.9381 & \multicolumn{1}{l}{$C_3$} \\
    \multicolumn{1}{r}{17} & 1.00  & 0.0000 & 0.0000 & 0.0000 & 0.0000 & 1.0000 & \multicolumn{1}{l}{$C_1$} & 0.00  & 0.0452 & 0.0435 & 0.0469 & 0.0009 & 0.9134 & \multicolumn{1}{l}{$C_4$} \\
    \multicolumn{1}{r}{18} & 0.00  & 0.1392 & 0.1343 & 0.1442 & 0.0027 & 0.7556 & \multicolumn{1}{l}{$C_4$} & 0.00  & 0.1541 & 0.1317 & 0.1760 & 0.0129 & 0.7332 & \multicolumn{1}{l}{$C_4$} \\
    \multicolumn{1}{r}{19} & 0.00  & 0.2057 & 0.2001 & 0.2117 & 0.0032 & 0.6588 & \multicolumn{1}{l}{$C_4$} & 0.00  & 0.2168 & 0.2104 & 0.2236 & 0.0036 & 0.6437 & \multicolumn{1}{l}{$C_4$} \\
    \multicolumn{1}{r}{20} & 0.00  & 0.0798 & 0.0726 & 0.0869 & 0.0041 & 0.8523 & \multicolumn{1}{l}{$C_4$} & 0.00  & 0.1339 & 0.1149 & 0.1525 & 0.0109 & 0.7640 & \multicolumn{1}{l}{$C_4$} \\
    \multicolumn{1}{r}{21} & 0.00  & 0.1218 & 0.0914 & 0.1501 & 0.0178 & 0.7833 & \multicolumn{1}{l}{$C_4$} & 0.00  & 0.1857 & 0.1775 & 0.1943 & 0.0046 & 0.6867 & \multicolumn{1}{l}{$C_4$} \\
    \multicolumn{1}{r}{22} & 0.97  & 0.0013 & 0.0000 & 0.0150 & 0.0086 & 0.9975 & \multicolumn{1}{l}{$C_2$} & 0.86  & 0.0021 & 0.0000 & 0.0246 & 0.0061 & 0.9959 & \multicolumn{1}{l}{$C_3$} \\
    \multicolumn{1}{r}{23} & 0.00  & 0.2775 & 0.2634 & 0.2920 & 0.0076 & 0.5656 & \multicolumn{1}{l}{$C_4$} & 0.00  & 0.3161 & 0.2978 & 0.3330 & 0.0098 & 0.5197 & \multicolumn{1}{l}{$C_4$} \\
    \multicolumn{1}{r}{24} & 0.00  & 0.1982 & 0.1877 & 0.2085 & 0.0055 & 0.6692 & \multicolumn{1}{l}{$C_4$} & 0.00  & 0.2231 & 0.2107 & 0.2353 & 0.0065 & 0.6353 & \multicolumn{1}{l}{$C_4$} \\
    \multicolumn{1}{r}{25} & 0.00  & 0.0720 & 0.0668 & 0.0779 & 0.0030 & 0.8657 & \multicolumn{1}{l}{$C_4$} & 0.00  & 0.1445 & 0.1386 & 0.1504 & 0.0032 & 0.7475 & \multicolumn{1}{l}{$C_4$} \\
    \multicolumn{1}{r}{26} & 0.00  & 0.1803 & 0.1556 & 0.2044 & 0.0142 & 0.6948 & \multicolumn{1}{l}{$C_4$} & 0.00  & 0.1909 & 0.1662 & 0.2137 & 0.0144 & 0.6797 & \multicolumn{1}{l}{$C_4$} \\
    \multicolumn{1}{r}{27} & 0.84  & 0.0083 & 0.0000 & 0.0956 & 0.0238 & 0.9846 & \multicolumn{1}{l}{$C_3$} & 0.00  & 0.0144 & 0.0056 & 0.0895 & 0.0208 & 0.9725 & \multicolumn{1}{l}{$C_4$} \\
    \hline
    Mean  & 0.2094 & 0.1052 & 0.0904 & 0.1249 & 0.0107 & 0.8218 &       & 0.0842 & 0.1476 & 0.1296 & 0.1788 & 0.0141 & 0.7556 &  \\
    Std.Dev. & 0.3783 & 0.0883 & 0.0866 & 0.0951 & 0.0216 & 0.1422 &       & 0.2455 & 0.0962 & 0.0960 & 0.0928 & 0.0154 & 0.1471 &  \\
    CV    & 181\% & 84\%  & 96\%  & 76\%  & 203\% & 17\%  &       & 292\% & 65\%  & 74\%  & 52\%  & 109\% & 19\%  &  \\
    \hline
    \end{tabular}
		}
\end{table}%
\end{landscape}

\begin{table}[t]
  \centering
  \caption{Rate of hospitals per category, year, and scenario. \label{tab:rate of DMUs per category}}
    \begin{tabular}{lrrrrrrrrrrrr}
    \hline
    \multicolumn{1}{c}{\multirow{2}[0]{*}{Category}} & \multicolumn{4}{c}{Scenario (a)} & \multicolumn{4}{c}{Scenario (b)} & \multicolumn{4}{c}{Scenario (c)} \\
          & 2013  & 2014  & 2015  & 2016  & 2013  & 2014  & 2015  & 2016  & 2013  & 2014  & 2015  & 2016 \\ \hline
    $C_1$ & 15\%  & 4\%   & 0\%   & 7\%   & 11\%  & 0\%   & 0\%   & 7\%   & 15\%  & 4\%   & 0\%   & 0\% \\
    $C_2$ & 0\%   & 0\%   & 0\%   & 0\%   & 4\%   & 0\%   & 0\%   & 0\%   & 0\%   & 0\%   & 0\%   & 0\% \\
    $C_3$ & 7\%   & 0\%   & 0\%   & 4\%   & 11\%  & 11\%  & 7\%   & 4\%   & 19\%  & 11\%  & 11\%  & 11\% \\
    $C_4$ & 78\%  & 96\%  & 100\% & 89\%  & 74\%  & 89\%  & 93\%  & 89\%  & 67\%  & 85\%  & 89\%  & 89\% \\
    \hline
    \end{tabular}
\end{table}%

As mentioned before, there are suspicions that the model specifications might play a pivotal role on efficiency or distance estimation. A simple correlation analysis of the expected value of efficiency from the three scenarios states that all Pearson's correlation coefficients are above 0.92 and statistically significant. However, it only allows us to conclude that all scenarios tend to produce results in the same direction. It is not sufficient to conclude whether one model produces distinct outcomes or not, though. Table \ref{tab:compar_scenarios} compares the three scenarios using the p-value and Equation (\ref{eq:pvalue}) with three distinct H\"{o}lder orders to estimate the statistic $T^{\ell},~\ell=1,\ldots,t$: 1 (simple arithmetic mean of distances), 2 (root mean square), and $\infty$ (maximum). In this case, we are testing if the orders specified in Table \ref{tab:scenarios} should impact in distance estimates (and efficiency, accordingly). If the p-value is larger than 5\%, then we have no statistical evidence supporting the rejection of the null hypothesis of similar distributions. For instance, the p-value of the null hypothesis H\textsubscript{0}: $D^j (\text{scenario (b)}) = D^j (\text{scenario (c)})$ for most $j\in J$ is 0.8900. Although assuming the common arithmetic mean to test such an hypothesis would never result in the latter's rejection, the results in this table suggest the possibility of concluding that there is evidence of efficiency dependence on the model specifications. Nonetheless, such a dependence is not yet well understood, being left for further research.

\begin{table}[htbp]
    \centering
    \caption{Comparison of the three scenarios: p-values considering three orders for the H\"{o}lder mean - 1 (arithmetic mean), 2 (root mean square), and $\infty$ (maximum).}
    \label{tab:compar_scenarios}
    \begin{tabular}{ccc}
         (\textit{i}) Order 1 & (\textit{ii}) Order 2 & (\textit{iii}) Order $\infty$ \\
         
         \begin{tabular}{rrrr}
         \hline
          & \multicolumn{1}{r}{(a)} & \multicolumn{1}{r}{(b)} & \multicolumn{1}{r}{(c)} \\ \hline
    (a)   & 1     & 0.6920 & 0.6208 \\
    (b)   &       & 1     & 0.9512 \\
    (c)   &       &       & 1 \\ \hline
    \end{tabular}
         
         &
         
         \begin{tabular}{rrrr}
         \hline
          & \multicolumn{1}{r}{(a)} & \multicolumn{1}{r}{(b)} & \multicolumn{1}{r}{(c)} \\ \hline
    (a)   & 1     & 0.4844 & 0.5196 \\
    (b)   &       & 1     & 0.8900 \\
    (c)   &       &       & 1 \\ \hline
    \end{tabular}%

        &
        
        \begin{tabular}{rrrr} \hline
          & \multicolumn{1}{r}{(a)} & \multicolumn{1}{r}{(b)} & \multicolumn{1}{r}{(c)} \\ \hline
    (a)   & 1     & 0.0800  & 0.0000 \\
    (b)   &       & 1     & 0.9408 \\
    (c)   &       &       & 1 \\ \hline
    \end{tabular}
         
    \end{tabular}
\end{table}

To exemplify the stochastic nature of our estimates and the usefulness of outputs from Subsection \ref{subsec:Efficiency as a stochastic variable}, we take a look at hospital 1, which corresponds to the \glspl{DMU} 1, 28, 55, and 82, from 2013 to 2016. Regarding scenario (a), the distribution of distances was mostly composed of peaks, being impossible to detect an appropriate parametric function for the densities. In opposition, the distributions of those \glspl{DMU} in scenarios (b) and (c) seem very well behaved, and we could approximate the empirical distribution by Beta distributions. Details about this distribution can be found in Examples \ref{example:5.1 Beta} and \ref{example:5.2 Beta}. Figures B.4 and B.5 (Appendix B) provide the densities associated with the distance of hospital 1 to the frontier, as well as the best fit, which was always a Beta distribution. We used the \textsc{Matlab} (R2018b) function 'fitmethis' and the log-likelihood criterion to select the parametric distribution function with the \textit{best} goodness-of-fit.\footnote{Francisco de Castro (2020). fitmethis available at \url{https://www.mathworks.com/matlabcentral/fileexchange/40167-fitmethis}, MATLAB Central File Exchange. Retrieved December 12, 2020.} We tested the goodness-of-fit of the Beta distributions using the Kolmogorov-Smirnov non-parametric test and the p-values were always above 0.90. It means that the Beta distribution is appropriate to model the densities of the distance of hospital 1 to the efficiency frontier: $D^k \sim \text{Beta}(\alpha,\beta),~k=1,28,55,82$. We placed the four yearly distributions side-by-side in Figure \ref{fig:densities_b_c_DMU1}, and the corresponding shape parameters, $\alpha$ and $\beta$, in Table \ref{tab:parameters_Beta_Fig}(\textit{i}). 

As we can see, all those parameters $\alpha$ and $\beta$ are substantially larger than 1. Thus, $(\alpha+1)/(\alpha-1)\approx 1$ and $(\beta+1)/(\beta-1)\approx 1$. According to the Remark \ref{remark:Beta_to_Gauss}, in that case, we can approximate any Beta distribution to a Gaussian, after a suitable change of parameters to $\rho$ (location) and $\sigma$ (scale); see Equation (\ref{eq:Beta_to_Normal}). Gaussian distributions are much more manageable than the Beta ones. Table \ref{tab:parameters_Beta_Fig}(\textit{ii}) provides these parameters and the expected value of efficiency. Efficiency results after applying the Equation (\ref{eq:Beta_to_Normal}) to estimate $\rho = \mathbb{E}(D^k)$ and $\sigma=\sqrt{Var(D^k)}$, and Equation (\ref{eq:Exp_theta3}) to estimate $\mathbb{E}(\Theta^k)$. Note the difference in $\mathbb{E}(\Theta^k)$ for $k=82$ in scenario (b) and 2016 between Tables \ref{tab:parameters_Beta_Fig}(\textit{ii}) and \ref{tab:scenario_b_2015_2016}. Although they have been estimated using the same Equation, the differences lie in parameters $\rho$ and $\sigma$, which were estimated either directly from the 5,000 estimates or after their parameterization by a Beta (Gaussian) distribution.

\begin{figure}[t]
\centering
	\subfloat[Scenario (b)]{\includegraphics[width = .45\columnwidth]{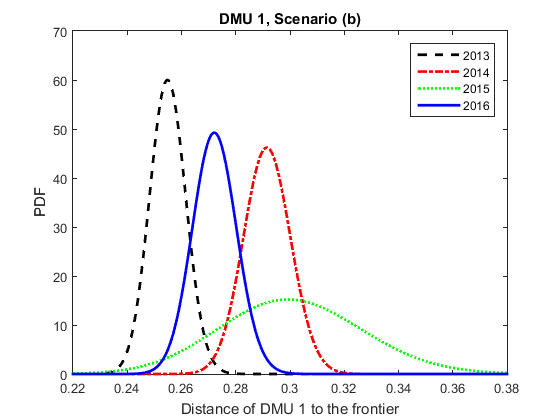}} 
	\subfloat[Scenario (c)]{\includegraphics[width = .45\columnwidth]{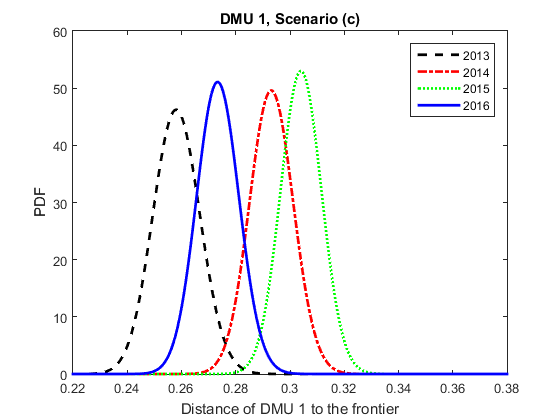}}
\caption{Beta distributions (side-by-side) of the distance of hospital 1 (DMUs 1, 28, 55, and 82) to the frontier, per year (2013-2016), and concerning the scenarios (b) hyper-ellipsoid and (c) hyper-rhombus.}
\label{fig:densities_b_c_DMU1}
\end{figure}

\begin{table}[t]
  \centering
  \caption{Parameters and efficiency results associated with the Beta distributions exhibited in Figure \ref{fig:densities_b_c_DMU1} (hospital 1, \textit{i.e.}, DMUs 1, 28, 55, and 82). \label{tab:parameters_Beta_Fig}}
\begin{tabular}{cc}
\centering
(\textit{i}) Shape parameters, $\alpha$ and $\beta$ & {(\textit{ii}) Efficiency results} \\
\begin{tabular}{lrrrr}
    \hline
          & \multicolumn{2}{c}{Scenario (b)} & \multicolumn{2}{c}{Scenario (c)} \\
          & \multicolumn{1}{r}{$\alpha$} & $\beta$ & \multicolumn{1}{r}{$\alpha$} & $\beta$ \\
    \hline
    2013  & \multicolumn{1}{l}{1,095.5} & 3,201.2 & 663.8 & 1,906.9 \\
    2014  & 807.4 & 1,961.5 & 939.3 & 2,264.2 \\
    2015  & 91.7  & \multicolumn{1}{r}{213.4} & \multicolumn{1}{l}{1,135.8} & 2,598.8 \\
    2016  & 821.6 & 2,196.5 & 890.4 & 2,365.7 \\
    \hline
    \end{tabular}
&
\begin{tabular}{rrrrrrr}
\hline
          & \multicolumn{3}{c}{Scenario (b)} & \multicolumn{3}{c}{Scenario (c)} \\
          & \multicolumn{1}{r}{$\rho$} & \multicolumn{1}{r}{$\sigma$} & \multicolumn{1}{r}{$\mathbb{E}(\Theta^k)$} & \multicolumn{1}{r}{$\rho$} & \multicolumn{1}{r}{$\sigma$} & \multicolumn{1}{r}{$\mathbb{E}(\Theta^k)$} \\
          \hline
    2013  & 0.2550 & 0.0066 & 0.5937 & 0.2582 & 0.0086 & 0.5896 \\
    2014  & 0.2916 & 0.0086 & 0.5485 & 0.2932 & 0.0080 & 0.5466 \\
    2015  & 0.3006 & 0.0262 & 0.5384 & 0.3041 & 0.0075 & 0.5336 \\
    2016  & 0.2722 & 0.0081 & 0.5721 & 0.2735 & 0.0078 & 0.5706 \\
    \hline
    \end{tabular}%

\end{tabular}
\end{table}%

As we can see in the densities representation, there was a significant shift to the right from 2013 to 2014, indicating a considerable efficiency worsening, followed by a smaller shift in the next biennium (2014-2015), suggesting a smoother worsening of efficiency compared to the previous downfall. Then, there was an improvement of efficiency, from 2015 to 2016; but the same hospital remained less efficient in 2016 than it was in 2013. The major difference between both plots is the variance of the distribution of \gls{DMU} 55 (2015) in scenario (b), which can be clearly noticed by the results displayed in Table \ref{tab:parameters_Beta_Fig}(\textit{ii}). Besides, it is interesting these results with the probability of one \gls{DMU} outperforming another. Table \ref{tab:prob_j_outp_k} highlights the cross-probabilities $\text{Pr}(D^j\leqslant D^k)$ using the cumulative distribution function of the standard Gaussian distribution (see Example \ref{example:D_Gauss_cdf}), a valid proxy for Beta distributions with large shape parameters. According to our results, it is very unlikely that \gls{DMU} 1 (2013) could be outperformed by any other observation of the same hospital from 2014 onward. Indeed, the probabiliy of this observation being outperformed is, at the most, $9.52\%$. In opposition, \glspl{DMU} 28 (2014) and 55 (2015) are almost certainly outperformed by the others. These results seem to be consistent with the intersection of densities of distance, as shown in Figure \ref{fig:densities_b_c_DMU1}.

\begin{table}[t]
  \centering
  \caption{Probability of DMU $j$ outperforming DMU $k$. \label{tab:prob_j_outp_k}}
    \begin{tabular}{rrrr}
    \hline
    \multicolumn{1}{l}{DMU $j$} & \multicolumn{1}{l}{DMU $k$} & \multicolumn{1}{l}{Scenario (b)} & \multicolumn{1}{l}{Scenario (c)} \\ \hline
    1 (2013)  & 28 (2014)  & 99.96\% & 99.85\% \\
    1 (2013)  & 55 (2015)  & 95.41\% & 100.00\% \\
    1 (2013)  & 82 (2016)  & 95.02\% & 90.48\% \\
    28 (2014)  & 55 (2015)  & 62.73\% & 83.92\% \\
    28 (2014)  & 82 (2016)  & 5.09\% & 3.90\% \\
    55 (2015)  & 82 (2016)  & 15.08\% & 0.23\% \\
    \hline
    \end{tabular}
\end{table}%

Rather than the distance to the frontier, researchers are often more interested in the efficiency score. Since the distance $D^k$ after the integrated approach becomes a stochastic variable, it is natural that $\Theta^k$ is also stochastic. Using either Equation (\ref{f_t_B}) or (\ref{eq:f_t_N}) with the parameters in Table \ref{tab:parameters_Beta_Fig}(\textit{i}), we generated the points of the distribution, which were then adjusted by a known parametric density function. In the present case, the efficiency of hospital 1, regardless of the year and scenario, (b) or (c), is well modelled by Generalized Extreme Value ($\mathcal{GEV}$) distributions with parameters $\rho$ (location), $\sigma$ (scale), and $\psi$ (shape): $\Theta^k \sim \mathcal{GEV}(\rho,\sigma,\psi),~k=1,28,55,82$. $\mathcal{GEV}$ distributions maximized the log-likelihood criterion among the parametric families of densities. Figure \ref{fig:densities_b_c_DMU1_eff} exhibits the densities per year, side-by-side. The behaviour of efficiency as a stochastic variable is consistent with the analysis of Figure \ref{fig:densities_b_c_DMU1}. Table \ref{tab:param_GEV} contains the parameters of the $\mathcal{GEV}$ distributions. The expected value of efficiency results from $\mathbb{E}(\Theta^k) = \rho + \sigma (g_1 - 1)/\psi$ for $g_1 = \Gamma(1-\psi)$, since $\psi < 0$. Likewise, $Var(\Theta^k) = \sigma^2 (g_2 - g_1^2)/\psi^2$ for $g_2 = \Gamma(1-2\psi)$. Note that the maximum difference between the expected values between Tables \ref{tab:parameters_Beta_Fig}(\textit{ii}) and \ref{tab:param_GEV} is $0.0002$, thus meaningless. It is worth of mentioning that any $\mathcal{GEV}$ distribution with negative shape parameter, $\psi<0$, belongs to the Weibull family of densities ($\mathcal{GEV}$ distributions type III; see \cite{BALI2003423}). More precisely, the stochastic variable $\Theta^k,~k=1,28,55,82,$ follows reverse Weibull distributions. Interestingly, \cite{Park99}, \cite{Cazals2002}, and \cite{Daouia2005, Daouia2007} have concluded that efficiency asymptotically tends to follow Weibull distributions. We leave for further research investigating whether this trend is observed for the remaining \glspl{DMU} in our sample and in more general cases.

\begin{figure}[t]
\centering
	\subfloat[Scenario (b)]{\includegraphics[width = .45\columnwidth]{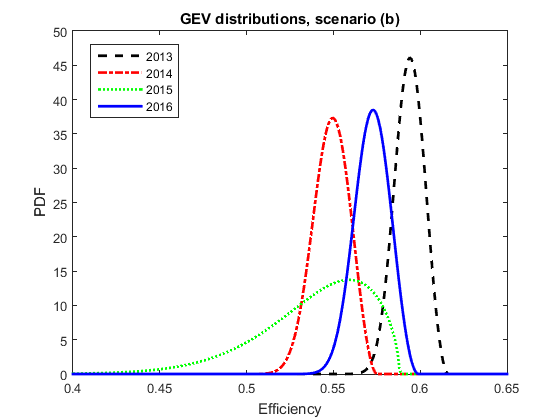}} 
	\subfloat[Scenario (c)]{\includegraphics[width = .45\columnwidth]{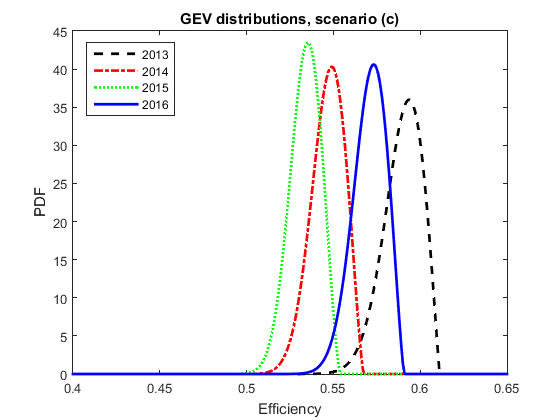}}
\caption{$\textsc{GEV}$ distributions (side-by-side) of the efficiency of hospital 1 (DMUs 1, 28, 55, and 82), per year (2013-2016), and concerning the scenarios (b) hyper-ellipsoid and (c) hyper-rhombus. Note: GEV stands for Generalized Extreme Value (distribution).}
\label{fig:densities_b_c_DMU1_eff}
\end{figure}

\begin{table}[H]
  \centering
  \caption{Parameters of the GEV distributions describing the density of efficiency of hospital 1 (DMUs 1, 28, 55, and 82), for scenarios (b) hyper-ellipsoid and (c) hyper-rhombus.}
  \label{tab:param_GEV}
    \begin{tabular}{ccrrrrrrrr}
    \hline
          &       & \multicolumn{4}{c}{Scenario (b)} & \multicolumn{4}{c}{Scenario (c)} \\
          &       & 2013  & 2014  & 2015  & 2016  & 2013  & 2014  & 2015  & 2016 \\
          \hline
    \multicolumn{1}{r}{Location} & \multicolumn{1}{r}{$\rho$} & 0.5909 & 0.5453 & 0.5330 & 0.5689 & 0.5869 & 0.5439 & 0.5310 & 0.5679 \\
    \multicolumn{1}{r}{Scale} & \multicolumn{1}{r}{$\sigma$} & 0.0085 & 0.0106 & 0.0350 & 0.0102 & 0.0118 & 0.0102 & 0.0093 & 0.0102 \\
    \multicolumn{1}{r}{Shape} & \multicolumn{1}{r}{$\psi$} & -0.3344 & -0.3564 & -0.6350 & -0.3395 & -0.4860 & -0.4339 & -0.4030 & -0.4468 \\
    \multicolumn{2}{c}{$g_1 = \Gamma(1-\psi)$} & 0.8929 & 0.8905 & 0.8979 & 0.8923 & 0.8859 & 0.8859 & 0.8871 & 0.8857 \\
    \multicolumn{2}{c}{$g_2 = \Gamma(1-2\psi)$} & 0.9031 & 0.9111 & 1.1462 & 0.9048 & 0.9885 & 0.9511 & 0.9330 & 0.9596 \\
    \hline
    \multicolumn{2}{c}{$\mathbb{E}(\Theta^k)$} & 0.5936 & 0.5486 & 0.5386 & 0.5721 & 0.5897 & 0.5466 & 0.5336 & 0.5705 \\
    \multicolumn{2}{c}{$Var(\Theta^k)$} & 0.0001 & 0.0001 & 0.0010 & 0.0001 & 0.0001 & 0.0001 & 0.0001 & 0.0001 \\
    \multicolumn{2}{c}{$\sqrt{Var(\Theta^k)}$} & 0.0083 & 0.0102 & 0.0321 & 0.0099 & 0.0110 & 0.0096 & 0.0088 & 0.0096 \\
    \hline
    \end{tabular}
\end{table}%




\section{Concluding remarks and future research directions}\label{sec:Concluding remarks}
\noindent Our paper proposes the application of a \gls{HR} procedure to \gls{DEA} to introduce a stochastic nature into the latter. The deterministic nature of \gls{DEA} is commonly pointed out as one of its main shortcomings. This way, we introduce the possibility of making statistical inference with efficiency estimates at the same time that we account for \gls{IKD}, which is a problem plaguing most of databases. The proposed procedure generalizes the interval \gls{DEA}, which provides the broadest confidence intervals for efficiency when \gls{IKD} is modeled using intervals. The integrated approach \gls{HR}+\gls{DEA} has a superior performance comparatively with some widely spread alternatives such as regressions. Unless we know precisely the mathematical function underlying the production process, using a predefined function to fit and estimate these \gls{IKD} (typically gaps) is only a matter of \textit{shooting in the dark}. Besides, the stochastic nature of efficiency scores obtained through the integrated approach is its greatest advantage. The model is simple to implement and allows the researcher/expert/decision-maker to bound imperfect knowledge directly on data, making her/his job easier in most of the empirical situations.

In the near future we expect to test our integrated approach for robustness in terms of the quantity (or rate) of observations to account for \gls{IKD}, the size or dimensions of each set (as defined by the parameters $w$), and other shapes for sets $\Lambda$ (namely, the possibility of including non-convex sets).

\singlespace
\section*{Acknowledgements}
\addcontentsline{toc}{section}{\numberline{}Acknowledgements}
\noindent  This work was financially supported by the hSNS FCT -- Research Project  (02/SAICT/2017/30546). Jos\'{e} Rui Figueira also acknowledges the support from the FCT grant SFRH/BSAB/139892/2018 under POCH Program during his stay at the Department of Mathematics of the University of Wuppertal, Germany.



\addcontentsline{toc}{section}{\numberline{}References}
\bibliographystyle{model2-names}
\bibliography{References}

\begin{figure}[htbp]
\centering
	\subfloat[$O_1=O_2 = +\infty$]{\includegraphics[width = .33\columnwidth]{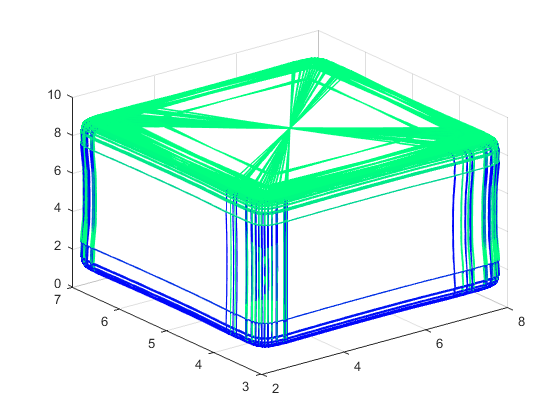}} 
	\subfloat[$O_1=O_2=2$]{\includegraphics[width = .33\columnwidth]{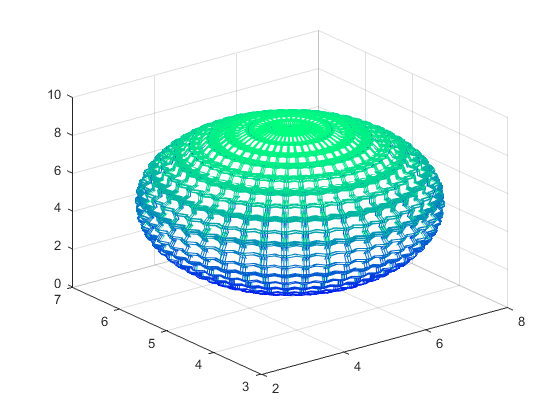}}
    \subfloat[$O_1=O_2=1 $]{\includegraphics[width = .33\columnwidth]{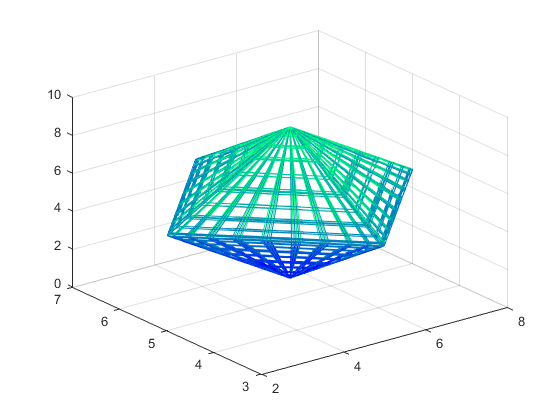}}
\caption{Three different ways of modeling IKD using super-ellipsoids centered in $(5,5,5)$ and with semi-axes $(3,2,4)$: $\left(\left|\frac{x_1 -5}{3}\right|^{O_1}+\left|\frac{x_2 -5}{2}\right|^{O_1}\right)^{O_2/O_1} +\left|\frac{u_1 -5}{4}\right|^{O_2} =1$ (with 5,000 iterations).}
\label{fig:superEllips2}
\end{figure}
\begin{landscape}
\begin{table}[htbp]
  \centering
  \caption{Efficiency estimation using the Hit \& Run approach and the scenario (b). Results for 2015 and 2016.\label{tab:scenario_b_2015_2016}}
    \begin{tabular}{lrrrrrrcrrrrrrc}
    \hline
         & \multicolumn{7}{c}{2015}                              & \multicolumn{7}{c}{2016} \\
          \hline
    Hospital & \multicolumn{1}{l}{$ERII_0^k$} & \multicolumn{1}{l}{$\mathbb{E}(D^k)$} & \multicolumn{1}{l}{$LB_{95\%}$} & \multicolumn{1}{l}{$UB_{95\%}$} & \multicolumn{1}{l}{$\sqrt{Var(D^k)}$} & \multicolumn{1}{l}{$\mathbb{E}(\Theta^k)$} & \multicolumn{1}{c}{Category} & \multicolumn{1}{l}{$ERII_0^k$} & \multicolumn{1}{l}{$\mathbb{E}(D^k)$} & \multicolumn{1}{l}{$LB_{95\%}$} & \multicolumn{1}{l}{$UB_{95\%}$} & \multicolumn{1}{l}{$\sqrt{Var(D^k)}$} & \multicolumn{1}{l}{$\mathbb{E}(\Theta^k)$} & \multicolumn{1}{c}{Category} \\
    \hline
     \multicolumn{1}{r}{1} & 0.00  & 0.3005 & 0.2623 & 0.3531 & 0.0264 & 0.5385 & \multicolumn{1}{l}{$C_4$} & 0.00  & 0.2722 & 0.2572 & 0.2880 & 0.0081 & 0.5721 & \multicolumn{1}{l}{$C_4$} \\
    \multicolumn{1}{r}{2} & 0.00  & 0.3208 & 0.2517 & 0.4134 & 0.0476 & 0.5162 & \multicolumn{1}{l}{$C_4$} & 0.00  & 0.3054 & 0.2680 & 0.3354 & 0.0169 & 0.5324 & \multicolumn{1}{l}{$C_4$} \\
    \multicolumn{1}{r}{3} & 0.00  & 0.1538 & 0.1487 & 0.1590 & 0.0029 & 0.7335 & \multicolumn{1}{l}{$C_4$} & 0.00  & 0.1952 & 0.1577 & 0.2328 & 0.0209 & 0.6738 & \multicolumn{1}{l}{$C_4$} \\
    \multicolumn{1}{r}{4} & 0.00  & 0.0727 & 0.0604 & 0.0839 & 0.0064 & 0.8645 & \multicolumn{1}{l}{$C_4$} & 0.00  & 0.0725 & 0.0614 & 0.0825 & 0.0057 & 0.8649 & \multicolumn{1}{l}{$C_4$} \\
    \multicolumn{1}{r}{5} & 0.00  & 0.2612 & 0.2534 & 0.2691 & 0.0043 & 0.5858 & \multicolumn{1}{l}{$C_4$} & 0.00  & 0.2594 & 0.2517 & 0.2673 & 0.0042 & 0.5881 & \multicolumn{1}{l}{$C_4$} \\
    \multicolumn{1}{r}{6} & 0.00  & 0.0487 & 0.0336 & 0.0657 & 0.0086 & 0.9072 & \multicolumn{1}{l}{$C_4$} & 0.00  & 0.0441 & 0.0310 & 0.0584 & 0.0075 & 0.9157 & \multicolumn{1}{l}{$C_4$} \\
    \multicolumn{1}{r}{7} & 0.00  & 0.2464 & 0.2350 & 0.2585 & 0.0064 & 0.6046 & \multicolumn{1}{l}{$C_4$} & 0.00  & 0.2845 & 0.2422 & 0.3297 & 0.0250 & 0.5576 & \multicolumn{1}{l}{$C_4$} \\
    \multicolumn{1}{r}{8} & 0.00  & 0.2253 & 0.2157 & 0.2364 & 0.0056 & 0.6322 & \multicolumn{1}{l}{$C_4$} & 0.00  & 0.2528 & 0.2372 & 0.2680 & 0.0090 & 0.5965 & \multicolumn{1}{l}{$C_4$} \\
    \multicolumn{1}{r}{9} & 0.00  & 0.1604 & 0.1187 & 0.2013 & 0.0232 & 0.7242 & \multicolumn{1}{l}{$C_4$} & 0.00  & 0.1855 & 0.1773 & 0.1938 & 0.0045 & 0.6871 & \multicolumn{1}{l}{$C_4$} \\
    \multicolumn{1}{r}{10} & 0.62  & 0.0501 & 0.0000 & 0.1982 & 0.0731 & 0.9138 & \multicolumn{1}{l}{$C_3$} & 0.62  & 0.2409 & 0.2297 & 0.2521 & 0.0059 & 0.6118 & \multicolumn{1}{l}{$C_4$} \\
    \multicolumn{1}{r}{11} & 0.00  & 0.0641 & 0.0510 & 0.0773 & 0.0080 & 0.8796 & \multicolumn{1}{l}{$C_4$} & 0.00  & 0.0671 & 0.0535 & 0.0808 & 0.0083 & 0.8743 & \multicolumn{1}{l}{$C_4$} \\
    \multicolumn{1}{r}{12} & 0.00  & 0.0668 & 0.0627 & 0.0712 & 0.0023 & 0.8747 & \multicolumn{1}{l}{$C_4$} & 0.00  & 0.0536 & 0.0521 & 0.0550 & 0.0008 & 0.8982 & \multicolumn{1}{l}{$C_4$} \\
    \multicolumn{1}{r}{13} & 0.00  & 0.0180 & 0.0172 & 0.0188 & 0.0004 & 0.9646 & \multicolumn{1}{l}{$C_4$} & 0.00  & 0.0000 & 0.0000 & 0.0000 & 0.0000 & 1.0000 & \multicolumn{1}{l}{$C_1$} \\
    \multicolumn{1}{r}{14} & 0.00  & 0.2037 & 0.1899 & 0.2173 & 0.0076 & 0.6616 & \multicolumn{1}{l}{$C_4$} & 0.00  & 0.1995 & 0.1885 & 0.2094 & 0.0057 & 0.6673 & \multicolumn{1}{l}{$C_4$} \\
    \multicolumn{1}{r}{15} & 0.00  & 0.2473 & 0.2349 & 0.2595 & 0.0065 & 0.6035 & \multicolumn{1}{l}{$C_4$} & 0.00  & 0.2462 & 0.2341 & 0.2584 & 0.0064 & 0.6049 & \multicolumn{1}{l}{$C_4$} \\
    \multicolumn{1}{r}{16} & 0.54  & 0.0129 & 0.0000 & 0.0613 & 0.0188 & 0.9753 & \multicolumn{1}{l}{$C_3$} & 0.54  & 0.0162 & 0.0000 & 0.0702 & 0.0221 & 0.9691 & \multicolumn{1}{l}{$C_3$} \\
    \multicolumn{1}{r}{17} & 0.00  & 0.0571 & 0.0550 & 0.0593 & 0.0012 & 0.8919 & \multicolumn{1}{l}{$C_4$} & 0.00  & 0.0748 & 0.0719 & 0.0776 & 0.0015 & 0.8609 & \multicolumn{1}{l}{$C_4$} \\
    \multicolumn{1}{r}{18} & 0.00  & 0.1670 & 0.1618 & 0.1723 & 0.0028 & 0.7139 & \multicolumn{1}{l}{$C_4$} & 0.00  & 0.1890 & 0.1836 & 0.1946 & 0.0030 & 0.6821 & \multicolumn{1}{l}{$C_4$} \\
    \multicolumn{1}{r}{19} & 0.00  & 0.2018 & 0.1958 & 0.2082 & 0.0034 & 0.6641 & \multicolumn{1}{l}{$C_4$} & 0.00  & 0.1885 & 0.1827 & 0.1944 & 0.0032 & 0.6829 & \multicolumn{1}{l}{$C_4$} \\
    \multicolumn{1}{r}{20} & 0.00  & 0.1192 & 0.1000 & 0.1387 & 0.0114 & 0.7872 & \multicolumn{1}{l}{$C_4$} & 0.00  & 0.1353 & 0.1153 & 0.1550 & 0.0115 & 0.7619 & \multicolumn{1}{l}{$C_4$} \\
    \multicolumn{1}{r}{21} & 0.00  & 0.1558 & 0.1491 & 0.1628 & 0.0037 & 0.7304 & \multicolumn{1}{l}{$C_4$} & 0.00  & 0.1701 & 0.1628 & 0.1777 & 0.0041 & 0.7093 & \multicolumn{1}{l}{$C_4$} \\
    \multicolumn{1}{r}{22} & 0.00  & 0.0214 & 0.0146 & 0.0541 & 0.0102 & 0.9582 & \multicolumn{1}{l}{$C_4$} & 0.00  & 0.0470 & 0.0349 & 0.0895 & 0.0147 & 0.9107 & \multicolumn{1}{l}{$C_4$} \\
    \multicolumn{1}{r}{23} & 0.00  & 0.2786 & 0.2440 & 0.3070 & 0.0175 & 0.5645 & \multicolumn{1}{l}{$C_4$} & 0.00  & 0.2753 & 0.2056 & 0.3597 & 0.0476 & 0.5704 & \multicolumn{1}{l}{$C_4$} \\
    \multicolumn{1}{r}{24} & 0.00  & 0.1968 & 0.1858 & 0.2076 & 0.0058 & 0.6712 & \multicolumn{1}{l}{$C_4$} & 0.00  & 0.2074 & 0.1594 & 0.2601 & 0.0292 & 0.6574 & \multicolumn{1}{l}{$C_4$} \\
    \multicolumn{1}{r}{25} & 0.00  & 0.0787 & 0.0730 & 0.0851 & 0.0033 & 0.8542 & \multicolumn{1}{l}{$C_4$} & 0.00  & 0.0735 & 0.0697 & 0.0778 & 0.0022 & 0.8631 & \multicolumn{1}{l}{$C_4$} \\
    \multicolumn{1}{r}{26} & 0.00  & 0.1958 & 0.1869 & 0.2054 & 0.0050 & 0.6725 & \multicolumn{1}{l}{$C_4$} & 0.00  & 0.2136 & 0.2058 & 0.2221 & 0.0044 & 0.6480 & \multicolumn{1}{l}{$C_4$} \\
    \multicolumn{1}{r}{27} & 0.00  & 0.0183 & 0.0089 & 0.0989 & 0.0208 & 0.9649 & \multicolumn{1}{l}{$C_4$} & 0.00  & 0.0000 & 0.0000 & 0.0000 & 0.0000 & 1.0000 & \multicolumn{1}{l}{$C_1$} \\
    \hline
    Mean  & 0.0431 & 0.1461 & 0.1300 & 0.1720 & 0.0123 & 0.7575 &       & 0.0431 & 0.1581 & 0.1420 & 0.1774 & 0.0101 & 0.7393 &  \\
    Std.Dev. & 0.1529 & 0.0943 & 0.0893 & 0.0981 & 0.0156 & 0.1442 &       & 0.1529 & 0.0955 & 0.0882 & 0.1029 & 0.0107 & 0.1470 &  \\
    CV    & 354\% & 65\%  & 69\%  & 57\%  & 126\% & 19\%  &       & 354\% & 60\%  & 62\%  & 58\%  & 106\% & 20\%  &  \\
    \hline
    \end{tabular}
\end{table}%
\end{landscape}

\end{document}